\newtheorem{theorem}{Theorem}[section]
\newtheorem{lemma}[theorem]{Lemma}
\newtheorem{sublemma}[theorem]{Sublemma}
\newtheorem{definition}[theorem]{Definition}
\newtheorem{observation}[theorem]{Observation}
\newtheorem{fact}[theorem]{Fact}
\newtheorem{example}[theorem]{Example}
\newtheorem{prop}[theorem]{Proposition}
\newtheorem{conjecture}{Conjecture}
\newtheorem*{conjecture*}{Conjecture}
\newtheoremstyle{nonindented}{1ex}{1ex}{}{}{\bfseries}{.}{.5em}{}
\newtheoremstyle{indented}{1ex}{1ex}{\itshape\addtolength{\leftskip}{0.6cm}\addtolength{\rightskip}{0.6cm}}{}{\bfseries}{.}{.5em}{}
\theoremstyle{nonindented}
\theoremstyle{indented}
\theoremstyle{plain}
\newenvironment{alg}{\begin{algorithm}\begin{onehalfspace}\begin{algorithmic}[1]}{\end{algorithmic}\end{onehalfspace}\end{algorithm}}
\newcommand{\cross}{\times}
\newcommand{\set}[1]{\left\{ #1 \right\}}
\newcommand{\union}{\cup}
\newcommand{\intersect}{\cap}
\newcommand{\sm}{\setminus}
\renewcommand{\hat}{\widehat}
\renewcommand{\tilde}{\widetilde}
\renewcommand{\bar}{\overline}
\def\max{\qopname\relax n{max}}
\def\Pr{\qopname\relax n{\mathbf{Pr}}}
\def\Ex{\qopname\relax n{\mathbf{E}}}
\newcommand{\RR}{\mathbb{R}}
\newcommand{\RRp}{\RR_+}
\newcommand{\NN}{\mathbb{N}}
\def\C{\mathcal{C}}
\def\D{\mathcal{D}}
\def\E{\mathcal{E}}
\def\F{\mathcal{F}}
\def\I{\mathcal{I}}
\def\L{\mathcal{L}}
\def\M{\mathcal{M}}
\def\P{\mathcal{P}}
\def\W{\mathcal{W}}
\def\eps{\epsilon}
\def\sse{\subseteq}
\newcommand{\eat}[1]{}
\newcommand{\fig}[2][1.0]{
  \begin{center}
    \includegraphics[scale=#1]{#2}
  \end{center}}
\newcommand{\INPUT}{\item[\textbf{Input:}]}
\newcommand{\PARAMETER}{\item[\textbf{Parameter:}]}
\newenvironment{lp*}{\begin{equation*}  \begin{array}{lll}}{\end{array}\end{equation*}}
\title{Matroid Secretary Is Equivalent to Contention Resolution} %\\ PRELIMINARY DRAFT \\ DO NOT DISTRIBUTE}
\author{
Shaddin Dughmi\thanks{This work was supported by NSF CAREER Award CCF-1350900 and NSF Grant CCF-2009060.} \\
Department of Computer Science\\
University of Southern California\\
{\tt shaddin@usc.edu}
}
\begin{document}

\maketitle

\begin{abstract}
We show that the matroid secretary problem is equivalent to correlated contention resolution in the online random-order model. Specifically, the matroid secretary conjecture is true if and only if every matroid admits an online random-order contention resolution scheme which, given an arbitrary (possibly correlated) prior distribution over subsets of the ground set, matches the balance ratio of the best offline scheme for that distribution up to a constant. We refer to such a scheme as universal.
%
%an online random-order contention resolution scheme which, for every (possibly correlated) prior distribution over subsets of the ground set, approximately matches the balance ratio of the best offline scheme for that distribution up to a constant. We refer to such an online scheme as universal.
%
%and every (possibly correlated) distribution over subsets of its ground set, an offline contention resolution scheme can be converted to an online scheme in the random arrival model at the cost of no more than a constant in the balance ratio. We refer to such such an online scheme %This is noteworthy in that it shows the matroid secretary problem, with its adversarial weighted inputs, to be equivalent to a natural stochastic selection problem with its inputs drawn from a known and given prior distribution.
%We hope that this will enable using  tools and techniques from the literature on stochastic optimization for resolving the conjecture. Moreover,
Our result indicates that the core challenge of the matroid secretary problem lies in resolving contention for positively correlated inputs, in particular when the positive correlation is benign in as much as offline contention resolution is concerned.

Our result builds on our previous work which establishes one direction of this equivalence, namely that the secretary conjecture implies universal random-order contention resolution, as well as a weak converse, which derives a matroid secretary algorithm from a random-order contention resolution scheme with only partial knowledge of the distribution. It is this weak converse that we strengthen in this paper: We show that  universal random-order contention resolution  for matroids, in the usual setting of a fully known prior distribution, suffices to resolve the matroid secretary conjecture in the affirmative.

Our proof is the composition of three reductions. First, we use  duality arguments to reduce the matroid secretary problem to the matroid prophet secretary problem with arbitrarily correlated distributions. Second, we introduce a bridge problem we term labeled contention resolution --- generalizing classical contention resolution --- to which we reduce the correlated matroid prophet secretary problem, employing structural results from our prior work pertaining to the set of improving elements. Finally, we combine duplication of elements with limiting arguments to reduce labeled contention resolution to classical contention resolution.

%To establish our result, we introduce a bridge problem which we term labeled contention resolution, generalizing classical contention resolution. We use duality arguments, combined with structural results pertaining to the set of improving elements from our prior work, to reduce the matroid secretary conjecture to labeled contention resolution. Then, we combine duplication of elements with hallucination arguments to reduce labeled contention resolution to classical contention resolution. 
%%% Local Variables:
%%% mode: latex
%%% TeX-master: "ucrs2"
%%% End:

\end{abstract}

\newcommand{\ind}{\mathbf{Ind}}
\renewcommand{\binom}{\mathbf{Binom}}
\newcommand{\spn}{\mathbf{span}}
\newcommand{\rank}{\mathbf{rank}}
\newcommand{\OPT}{\mathbf{OPT}}
\newcommand{\convexhull}{\mathbf{convexhull}}
\newcommand{\val}{\mathbf{val}}
\newcommand{\supp}{\mathbf{supp}}
\newcommand{\imp}{\mathbf{Imp}}
\newcommand{\impl}{\mathbf{Imp}_{\mathtt{lbl}}}

\section{Introduction}
%High level story: Reduction to ROCRS for improving elements in ICALP paper requires a scheme which doesn't know the distribution, since it depends on adversarially chosen w. To fix this, we use duality to move to stochastic w, meaning that we have a known mixture over improving element distributions, which fortunately is still uncontentious. You might think we would be done, but you'd be wrong, because resolving contention there can "favor" an element when it has low weight and "disfavor" it when it has high weight. So we need to somehow capture the weights, which is where labeling comes in: we require that you select an element equally as often whether it has high or low weight. This labeled problem seems harder than the unlabeled version when you have online arrivals, but we show via a reduction that it's not much harder on matroids.

%Matsec and correlated RO-CR equivalent.
%We hope that this can allow leveraging CR work to solve conjecture one way or another.
This paper follows in the hallowed TCS tradition of reducing the number of questions without providing any answers. We establish an equivalence between one of the central open problems in online algorithm design, the \emph{matroid secretary conjecture}, and the increasingly rich and fruitful  framework of \emph{contention resolution}. Specifically, we show that the matroid secretary problem admits a constant-competitive algorithm if and only if matroid contention resolution for general (correlated) distributions is approximately as powerful (up to a constant) in the online random-order model as it is in the offline model.  Our result paves the way for application of the many recent advances in contention resolution, and in stochastic decision-making problems more generally, to resolving the conjecture.

% Define matsec and matprophsec (MAYBE DON'T BOTHER WITH THIS)
The classical (single-choice) secretary problem \cite{Dynkin}, and its many subsequent combinatorial generalizations, capture the essence of online decision making when adversarial datapoints arrive in a non-adversarial order. The paradigmatic such generalization is the \emph{matroid secretary problem}, originally proposed by \citet{matsec}.
% The \emph{matroid secretary problem}, originally defined by \cite{??}, naturally generalizes the classical secretary problem from \cite{??} to a variety of settings where multiple elements may be accepted subject to combinatorial feasibility constraints.
Here, elements of a known matroid arrive online in a uniformly random order, each equipped with a nonnegative weight chosen at the outset by an adversary. An algorithm for this problem must decide online whether to accept or reject each element, knowing only the weights of the elements which have arrived thus far, subject to accepting an independent set of the matroid. The goal is to maximize the total weight of accepted elements. %The competitive ratio of such an algorithm compares the expected weight of the set accepted by the algorithm to the offline optimal value, i.e. the maximum weight of an independent set.
The \emph{matroid secretary conjecture} of \cite{matsec} postulates the existence of an (online) algorithm for this problem which is constant competitive, as compared to the offline optimal, for all matroids. Though much prior work has designed competitive algorithms for specific classes of matroids, the general conjecture has remained open. %Also relevant to our results is the \emph{matroid prophet secretary} problem, which relaxes the matroid secretary problem by assuming that the weights are drawn from a known distribution rather than adversarially. %Here, the competitive ratio compares the expected weight of the set accepted by the algorithm to the expected offline optimal value.

% Define CR. Indep vs correlated (neg vs pos). Benign pos. (MAYBE DON'T BOTHER HERE. RELEGATE TO RELATED WORK)
Recent years have seen an explosion of interest in a variety of online decision-making problems of a similar flavor, albeit distinguished from secretary problems in that the uncertainty in the data is stochastic, with known distribution, rather than adversarial. Such models include variants and generalizations of the classical \emph{prophet inequality}, adaptive stochastic optimization models such as \emph{stochastic probing}, and what is increasingly emerging as the technical core of such problems: \emph{contention resolution}. The offline model of contention resolution was introduced by  Chekuri et al.~\cite{CRS}, motivated by applications to approximation algorithm design. It has since been extended to various online settings (e.g. \cite{OCRS,ROCRS}), and emerged as the basic technical building block of a number of important results for stochastic decision-making problems (see e.g. \cite{OCRS,ROCRS,OCRS_prophet,delegated_probing}).

In contention resolution, elements of a set system --- for our purposes, a matroid --- are each equipped with a single-bit stochastic datapoint indicating whether that element is \emph{active} or \emph{inactive}. The joint distribution of these datapoints, henceforth referred to as the \emph{prior distribution}, is assumed to be known and given.  An algorithm for this problem --- which we often refer to as a \emph{contention resolution scheme (CRS)} --- is tasked with accepting an independent set of active elements with the goal of maximizing the \emph{balance ratio}: the minimum, over all elements, of the ratio of the probability the element is accepted to the probability the element is active. When a CRS achieves a constant balance ratio for a distribution or class of distributions, we simply call it \emph{balanced}. In the original \emph{offline} setting of contention resolution, the algorithm observes all datapoints before choosing which elements to accept. Most pertinent for us is the \emph{online random order} setting: elements and their datapoints arrive in a uniformly random order, and the  algorithm must must decide whether to accept or reject each element, subject to independence, knowing only the activity status of elements which have arrived thus far. 

Most work on contention resolution has restricted attention to product prior distributions: elements are active independently, with given probabilities. Sweeping positive results hold for product priors, for both offline  and online contention resolution on matroids (see \cite{CRS,OCRS}), %\footnote{In particular, such positive resuls hold for product priors which are (approximately) feasible \emph{ex-ante}: the random set is  feasible on average, in the sense that the per-element marginal probabilities lie in the polytope associated with the set system. Ex-ante feasibility, at least approximately, is obviously necessary for balanced contention resolution.}
and those results tend to extend to negative correlation between elements. In contrast, it is easy to see that not much is possible in the presence of unrestrained positive correlation, even offline. 
We build on our recent work in \cite{ucrs}, which observed that some forms of positive correlation are relatively ``benign'' for contention resolution, at least in the offline setting.  We characterized  \emph{uncontentious} distributions --- those permitting a balanced offline CRS --- and delineated some of their basic properties. Leveraging this characterization, we then related the matroid secretary conjecture to online contention resolution for these uncontentious distributions, via a pair of complementary reductions.

One of the reductions in \cite{ucrs} is of unambiguous significance, and follows from unsurprising duality arguments: given a competitive algorithm for the secretary problem on a matroid, one can derive an (online) random-order CRS which is balanced for every uncontentious distribution on that matroid. We refer to such an online CRS, which is balanced for all uncontentious (correlated) distributions, as \emph{universal}.

The second reduction in \cite{ucrs} is from the matroid secretary problem to a more restrictive  model for online contention resolution, and therefore falls short of establishing an equivalence between the two problems.  At the center of this reduction is the (random) set of \emph{improving elements} for a weighted matroid, as originally defined by \citet{karger_matroidsampling}: a random \emph{sample} consisting of a constant fraction of the elements is set aside, and an element outside the sample is deemed \emph{improving} if it increases the weighted rank of the sample. %It is easy to see that the expected total weight of improving elements is a constant fraction of the weighted rank of the matroid (i.e., the offline optimal).
It is shown in \cite{ucrs} that improving elements, though they may exhibit nontrivial positive correlation, are nonetheless uncontentious --- i.e., they admit a balanced offline~CRS. Achieving such balance online as well, in the random-order model, is then shown to imply the matroid secretary conjecture. The major caveat to this reduction is the following: the prior distribution of improving elements is only partially known when the online CRS is invoked by this reduction. In essence, the reduction requires online contention resolution in a nontraditional, and more restrictive, model of a partially-described prior distribution. %It is this caveat which we remedy in this paper

\subsection*{Results and Technical Approach}

%More formal treatment of result.
%Really matsec, correlated CR, and correlated matprophsec all equivalent. Diagram.

\begin{figure}
\centering
  \begin{tikzpicture}[>=angle 60, very thick, scale=0.75]
    \draw (0,0) node[name=sec, rectangle,draw]{Matroid Secretary};
    \draw (-3,-4) node[name=psec,rectangle,draw]{\begin{tabular}{c} Correlated Matroid Prophet Secretary\end{tabular}};
    \draw (-3,-8) node[name=lcr, rectangle,draw]{\begin{tabular}{c} Universal Labeled Contention Resolution \\(Matroids, Random Order)\end{tabular}};
    \draw (0,-12) node[name=cr,rectangle,draw]{\begin{tabular}{c} Universal Contention Resolution \\(Matroids, Random Order)\end{tabular}};
    \path[->] (psec) edge [bend left=10, above, sloped] node {Section~\ref*{sec:sec_to_psec}} (sec);
    \path[->] (lcr) edge [bend left = 10, above, sloped] node {Section~\ref*{sec:psec_to_lcr}} (psec);
    \path[->] (cr) edge [bend left = 10, above] node[rotate=125]{\ \ Section~\ref*{sec:lcr_to_cr}}  (lcr);
    \path[->] (sec) edge [bend left = 60, above, sloped] node[rotate=-4] {\begin{NoHyper}\ \ \ \ \cite[Theorem 4.1]{ucrs}\end{NoHyper}}  (cr);
  \end{tikzpicture}
  \caption{Reductions between the four relevant problems. An arrow $A \to B$ indicates using an algorithm for problem $A$ to solve problem $B$, i.e., a reduction from $B$ to $A$. All reductions preserve the competitive or balance ratio up to a constant.}
\label{fig:reductions}
\end{figure}
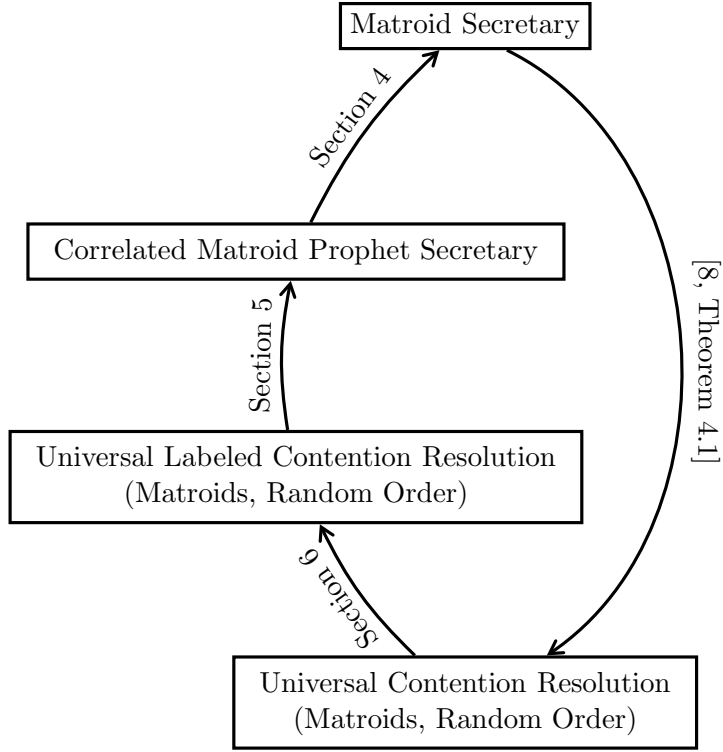

This is where the present paper picks up. We restrict attention to matroids, and derive a reduction from the secretary problem to random-order contention resolution with a fully known and given uncontentious distribution.  In doing so, we establish equivalence of the matroid secretary conjecture and universal random-order contention resolution on matroids. A conceptual take-away from our result is that the key challenge of the matroid secretary problem lies in resolving contention for random sets exhibiting positive correlation, in particular when such correlation is ``benign'' for offline contention resolution.

We face a series of technical obstacles, which we isolate by expressing our reduction as the composition of three component reductions. This takes us through two  ``bridge problems''  along the way. The first of these bridge problems is the correlated version of the familiar \emph{matroid prophet secretary} problem of \citet{prophet_secretary_b}, which relaxes the matroid secretary problem by assuming that  weights are drawn from a known  distribution rather than adversarially.\footnote{An alternate, equivalent, description of the prophet secretary problem is as a relaxation of the prophet inequality problem to random order arrivals.} The second bridge problem is a generalization of contention resolution --- in particular on matroids, in the online random-order model ---  which we define and term \emph{labeled contention resolution}. Here, each active element comes with a stochastic label, and  balance is evaluated with respect to element/label pairs rather than merely with respect to elements.   Figure \ref{fig:reductions} summarizes the cycle of reductions between all four problems, which we conclude are all equivalent up to constant factors in their competitive and balance ratios.

Our first component reduction, motivated by the aforementioned caveat to the results of \cite{ucrs}, is from the secretary problem to the prophet secretary problem on matroids. Fairly standard duality arguments allow us to replace the adversarial weight vector in the secretary problem with a stochastic one of known distribution. Modulo some simple normalization and discretization of the weights, at the expense of a constant in the competitive ratio, this yields an instance of the prophet secretary problem. With a stochastic weight vector drawn from a known distribution, we now face a \emph{known} mixture of improving element distributions. Moreover, since it is shown in \cite{ucrs} that uncontentious distributions are closed under mixing, this mixture is still uncontentious. At first glance, it would appear that we have now resolved the caveat of \cite{ucrs}.

Unfortunately, shifting to a stochastic weight vector introduces a new obstacle. With the set of improving elements now correlated with the vector of element weights, balanced contention resolution no longer guarantees extracting a constant fraction of the expected weight of the set. This is because a contention resolution scheme may preferentially accept an improving element when it has low weight, and reject it when it has high weight, while still satisfying the balance requirement in the aggregate. In fact, we show by way of a simple example that egregious instantiations of this phenomenon are not difficult to come by. This motivates our reduction from the matroid prophet secretary problem to  \emph{labeled} contention resolution, also in the online random-order model. By labeling each element with its weight, and requiring balance with respect to element/label pairs, we exclude contention resolution policies which favor low-weight elements.

Our final, and most technically involved, component reduction is from labeled to unlabeled contention resolution, for matroids in the random-order model. Such a reduction would be trivial in the offline setting: by thinking of each (element,label) pair as a distinct parallel copy of the element, we obtain an equivalent instance of unlabeled contention resolution, albeit on a larger matroid.
% by creating parallel copies of each element, one for each label, we obtain an instance of unlabeled  contention resolution which is equivalent to the original (labeled) instance.
One might hope for an online version of this reduction, which interleaves inactive element/label pairs amidst the active element/label pairs from the labeled instance. However, we argue at length that such an approach appears unlikely to succeed, for two fundamental reasons. First, we present evidence that not any interleaving will do: we show formally that an arbitrary interleaving  produces a  contention resolution problem  which does not admit a constant balance ratio, ruling out such a reduction if the matroid secretary conjecture were true. In other words, it really is important that both active and inactive elements are ordered randomly in random-order contention resolution, since the semi-random generalization which provides no guarantees on the positions of inactive elements is strictly more difficult (assuming the matroid secretary conjecture).  Second, we argue that natural interleaving approaches fail to produce a uniformly-random sequence of element/label pairs (both active and inactive),  even in an approximate sense. Roughly speaking, the difficulty is thus: natural online reductions from the labeled problem to its unlabeled counterpart must randomly interleave many (inactive) labeled copies of an element early into the sequence, well before the active copy (if any) arrives online. Without knowing the identity of this active copy (if any) in advance, there simply is not enough information, in a statistical distance sense, to approximately simulate a uniformly-random interleaving. We overcome these obstacles by ``blowing up'' the matroid even further, creating a large number of identical duplicates of each label. As the number of duplicates grows large, a random order of element/label pairs converges in distribution to a deterministic order (modulo the equivalence relation between duplicates). The required interleaving of inactive element/label pairs is now essentially deterministic, and in particular approximately invariant --- in a statistical distance sense --- to the identities of active elements and their labels.

%This  is directly motivated by the aforementioned caveat to the reduction in \cite{ucrs} from the secretary problem to contention resolution. In particular, recall that in \cite{ucrs} the distribution of improving elements in  is determined by the vector of element weights, which are chosen adversarially in the secretary problem. We reduce the matroid secretary problem to its dual, wich replaces the adversarial weight vector with  a stochastic one drawn from a known and given distribution. Unsurprisingly, this dual is essentially equivalent to the correlated matroid prophet secretary problem, modulo some normalization and discretization of the weights. It would now appear that this resolves the caveat from \cite{ucrs}: with a stochastic weight vector, we can now reduce to contention resolution for a known mixture of improving element distributions. Moreover, we know from \cite{ucrs} that uncontentious distributions are closed under mixing, so the improving elements still follow an uncontentious distribution.
%\subsection*{Technical Approach}

%Prior work did one direction, we do the other now, going through matprophet and (?) labeled CR
%Discussion: Benign positive correlation is the barrier

%Relate our 3-part reduction to weak converse in UCRS. How we strengthen that  result. See discussion currently in technical portions about this.

%Discussion: Benign positive correlation, etc.

\subsection*{Additional Discussion of Related Work}

% Contention resolution brief history. Sweeping constant factors for all uncontentious product distributions, i.e. those with the right marginals, in all the data models (online and offline). These results seem to all extend to negative correlation, for some definition.
Contention resolution in the offline setting was formalized by \citet{CRS}, motivated by applications to approximation algorithm design via randomized rounding. For product priors and a given packing set system, \cite{CRS} shows that the optimal offline balance ratio  equals the worst-case \emph{correlation gap}, as defined by \citet{correlation_gap_journal}, of the set system's weighted rank function.
Starting with the work of \citet{OCRS}, contention resolution was extended to online settings and applied to a variety of problems in mechanism design and adaptive stochastic optimization (see also \cite{ROCRS,OCRS_prophet}). Regardless of the set system, balanced contention resolution is obviously only possible for priors that are (approximately) \emph{ex-ante feasible}:  the random set is  feasible on average, in the sense that the per-element marginal probabilities lie in the polytope associated with the set system. One message of the aforementioned prior work is that --- for product priors and many natural set systems such as matroids, knapsacks, and their intersections ---  approximate ex-ante feasibility is also sufficient for balanced contention resolution, whether offline or online in any natural arrival model. Beyond product priors, the difficulty lies with resolving contention in the presence of positive correlation. Without any assumptions on the kind or degree of positive correlation, there exist simple examples of highly contentious yet ex-ante feasible distributions.\footnote{Consider a $1$-uniform matroids with $n$ elements, all of which are active simultaneously with probability $\frac{1}{n}$.} Motivated by the existence of relatively ``benign'' forms of positive correlation, and the connection thereof to the secretary problem, our  work in \cite{ucrs} characterized uncontentious distributions regardless of correlation, and established some of their basic properties.

% The trimvarate of secretary (adversarial data, random arrival), prophet (random data, adversarial arrival), and prophet secretary (random data, random arrival).  Random something needed for positive result, and we know it for random data but not random arrival. History of each.
The (single-choice) secretary problem is due to \citet{Dynkin}. It was subsequently generalized to a uniform matroid constraint by \citet{kleinberg_multisec}, and to a general matroid constraint by \citet{matsec}. A long line of work has designed constant-competitive algorithms for special cases of the matroid secretary problem, and we refer the reader to the semi-recent survey by \citet{dinitz_survey}. The current state-of-the art for general matroids is an $O(\log \log \rank)$-competitive algorithm due to \citet{lachish_loglog}, which was since simplified by \citet{svensson_loglog}. Beyond matroids, the secretary problem with general packing constraints was recently studied by \citet{Rub16}.

Closely related to the secretary problem are the prophet inequality problem and the prophet secretary problem, which analogously admit combinatorial generalizations to matroids and other packing set systems. Whereas a secretary problem features adversarial data (i.e., element weights) arriving online in a random order, a prophet inequality problem features stochastic data (typically assumed to be independent) arriving online in an adversarial order. A prophet secretary problem is a relaxation of both, featuring stochastic data arriving online in a random order. The original (single-choice) prophet inequality is due to Krengel, Sucheston, and Garling \cite{KS77, KS78}, and was generalized to matroids by \citet{matroid_prophet}. Generalizations beyond matroids have also received much study; see for example \cite{prophet_easy,polymatroid_prophet, Rub16}. The (single-choice) prophet secretary problem was introduced by \citet{prophet_secretary_a}, and further studied in \cite{prophet_secretary_c}. Generalizations  to combinatorial constraints, including matroids, were studied by \citet{prophet_secretary_b}. %The prophet secretary problem on matroids features as an intermediate step in our reduction.

One take-away from this paper is that stochastic decision making in the presence of correlation, and in particular positive correlation, is deserving of more attention. Most prior work on aforementioned stochastic decision-making models  restricts attention to product priors, with a few exceptions which we now mention. For contention resolution, the only exception we are aware of is our aforementioned work \cite{ucrs}. The classical (single-choice) prophet inequality was extended to negatively correlated variables by Rinott and Samuel Cahn~\cite{RSC87,SC91}, whereas no nontrivial prophet inequality holds in the presence of unrestrained positive correlation \cite{HK_survey}. The only nontrivial prophet inequalities we are aware of in the presence of positive correlation are from the recent work of \citet{prophet_linear_correlations}, who pose a particular linear model of correlated distributions. %We are unaware of work on the prophet secretary problem featuring correlation.

%Work on correlated decision making? See UCRS

%%% Local Variables:
%%% mode: latex
%%% TeX-master: "ucrs2"
%%% End:

\section{Preliminaries}

\subsection{Miscellaneous Notation and Terminology}

We denote the natural numbers by $\NN$, the real numbers by $\RR$, and the nonnegative real numbers by $\RRp$. We also use $[n]$ as shorthand for the set of integers $\set{1,\ldots,n}$.

For a set $A$, we use $\Delta(A)$ to denote the family of distributions over $A$, use $2^A$ to denote the family of subsets of $A$,
and use $A^*$ to denote finite strings with alphabet $A$. When $A$ is finite, we use $a \sim A$ to denote uniformly sampling $a$ from $A$. We also use $A!$ to denote the family of permutations of $A$, where we think of $\pi \in A!$ as a bijection from positions $\set{1,\ldots,|A|}$ to $A$.  When $A$ is equipped with weights $w \in \RR^A$, and $B \sse A$, we use the shorthand $w(B) = \sum_{i \in B} w_i$. For a distribution $\D$ supported on $2^A$, we refer to the vector $x \in [0,1]^A$ of \emph{marginals} of $\D$, where $x_i = \Pr_{B \sim \D}[i \in B]$ is the \emph{marginal probability} of $i$ in $\D$.

% We denote the natural numbers by $\NN$, the real numbers by $\RR$, and the nonnegative real numbers by $\RRp$. We also use $[n]$ as shorthand for the set of integers $\set{1,\ldots,n}$.

% For a set $A$, we use $\Delta(A)$ to denote the family of distributions over $A$, use $2^A$ to denote the family of subsets of $A$, and use $A^*$ to denote finite strings with alphabet $A$. When $A$ is finite, we use $a \sim A$ to denote uniformly sampling $a$ from $A$.  When $A$ is equipped with weights $w \in \RR^A$, and $B \sse A$, we use the shorthand $w(B) = \sum_{i \in B} w_i$.

% For a distribution $\D$ supported on $2^A$, we often refer to the vector $x \in [0,1]^A$ of \emph{marginals} of $\D$, where $x_i = \Pr_{B \sim \D}[i \in B]$ is the \emph{marginal probability} of $i$ in $\D$.

% For a finite set $A$, we use $A!$ to denote the family of permutations of $A$. We think of $\pi \in A!$ as a bijection from the integers $\set{1,\ldots,|A|}$ to $A$, where $\pi(i)$ is the $i$th element of the permutation.

\subsection{Matroid Theory Basics}

We use standard definitions from matroid theory; for details see \cite{oxley,welsh}. A matroid $\M=(\E,\I)$ consists of a \emph{ground set} $\E$ of \emph{elements}, and a family $\I \sse 2^\E$ of \emph{independent sets}, satisfying the three \emph{matroid axioms}. A \emph{weighted matroid} $(\M,w)$ consists of a matroid $\M=(\E,\I)$ together with weights $w \in \RR^\E$ on the elements. %We use the standard notions of a \emph{dependent set}, \emph{circuit}, \emph{flat}, and \emph{minor} in a matroid.
We use $\M | A$ to denote the \emph{restriction} of $\M$ to elements $A \sse \E$.%, \emph{deletion} of $A$  as $\M \sm A$, and \emph{contraction} by $A$ as $\M / A$.

We use $\rank(\M)$ to denote the \emph{rank} --- i.e. the maximum cardinality of an independent set --- of a matroid $\M$, and $\rank_w(\M)$ to denote the \emph{weighted rank} --- i.e. the maximum weight of an independent set --- of a weighted matroid $(\M,w)$. Overloading notation, we use  $\rank^\M(A)$ to denote the rank of $\M | A$, and $\rank^\M_w(A)$ to denote the weighted rank of $\M|A$ with weights $\set{w_e}_{e \in A}$, though we omit the superscript $\M$ when the matroid is clear from context.  %We note that both $\rank^\M(.)$ and $\rank^\M_w(.)$ are submodular set functions on the ground set of the matroid. 
%When $\E$ is clear from context, and $S \sse \E$, we use $\one_S \in \set{0,1}^\E$ to denote the vector indicating membership in $S$.
We also often reference the \emph{matroid polytope} $\P(\M)$ of a matroid $\M=(\E,\I)$, defined as the convex hull of indicator vectors of independent sets.

We restrict attention without loss of generality to matroids with no loops (i.e., each singleton is independent). In parts of this paper, we also restrict attention to weighted matroids where all non-zero weights are distinct. This assumption is made merely to simplify some of our proofs, and --- using standard tie-breaking arguments --- can be shown to be without loss of generality in as much as our results are concerned. Under this assumption, we define $\OPT^\M_w(A)$ as the (unique) maximum-weight independent subset of $A$ of minimum cardinality (excluding zero-weight elements), and we omit the superscript when the matroid is clear from context. We also use $\OPT_w(\M)=\OPT^\M_w(\E)$ as shorthand for the maximum-weight independent set of $\M$ of minimum cardinality.

\subsection{The Matroid Secretary Problem}

%\begin{definition}[See \cite{matsec}]
In the \emph{matroid secretary problem}, originally defined by \citet{matsec}, there is matroid $\M=(\E,\I)$ with nonnegative weights $\set{w_e}_{e \in \E}$ on the elements. Elements $\E$ arrive online in a uniformly random order $\pi \sim \E!$, and an online algorithm must irrevocably accept or reject an element when it arrives, subject to accepting an independent set of $\M$. Only the matroid $\M$ is given to the algorithm at the outset --- say, as an independence oracle. The weights $w$, on the other hand, are chosen adversarially, and without knowledge of the random order $\pi$.  The elements then arrive online, along with their weights, in the random order $\pi$.

%An (online) algorithm $A$ for the matroid secretary problem maps a matroid $\M=(\E,\I)$, a weight vector $w \in \W$ and an order $\pi \in \E!$ on the elements to an independent set $A(\pi,w) \in \I$. When $A$ is deterministic, we can think of it as a function from $\W \cross \E!$ to $\I$. Since $\W$, $\E!$, and $\I$ are all finite sets, there are finitely many such functions that are computable online. A randomized algorithm can be thought of as simply a distribution over these functions. For an algorithm $A$, be it deterministic or randomized, we use $\val(A,w) = \Ex_\pi [ w(A(\pi,w)) ]$ to denote the expected weight of the independent set chosen by algorithm $A$ for weight vector $w$, where expectation is over the uniformly random order $\pi$. Note that $\val(A,w)$ is a random variable when $A$ is randomized.

The goal of the online algorithm is to maximize the expected weight of the accepted set of elements. 
Given $c \in [0,1]$, we say that  an algorithm for the secretary problem is \emph{$c$-competitive} for a class of matroids, in the worst-case, if for every matroid $\M$ in that class and every adversarial choice of $w$, the expected weight of the accepted set (over the random order $\pi$ and any internal randomness of the algorithm) is at least a $c$ fraction of the offline optimal --- i.e., at least $ c \cdot \rank_w(\M)$.
%\end{definition}

The \emph{matroid secretary conjecture}, posed by \citet{matsec}, can be stated as follows.
\begin{conjecture}[\cite{matsec}] \label{conj:matsec}
  There exists an absolute constant $c > 0$ such that the matroid secretary problem admits an (online) algorithm which is $c$-competitive for all matroids.
\end{conjecture}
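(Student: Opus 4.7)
The plan is to chain the equivalence that this paper establishes with a concrete construction step: if a \emph{universal} random-order contention resolution scheme exists for every matroid, then by the three reductions outlined in Figure~\ref{fig:reductions} (matroid secretary $\leftarrow$ correlated matroid prophet secretary $\leftarrow$ universal labeled contention resolution $\leftarrow$ universal unlabeled contention resolution), it suffices to exhibit an online CRS that achieves a constant balance ratio against every uncontentious prior $\D$ on every matroid $\M$. Conjecture~\ref{conj:matsec} then follows directly. So the actual work reduces to producing that universal scheme, and the remainder of this proposal concerns that step.

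My first attempt would build on the existing online CRS machinery for product priors on matroids, in particular the $(b,c)$-selectable schemes that randomly round a feasible marginal vector $x \in \P(\M)$. The naive attempt --- run the product-distribution scheme on $\D$'s marginals --- fails because positive correlation can produce many simultaneously active elements that a product scheme never anticipates; the selector must therefore be $\D$-aware. The substantive step is to decompose an arbitrary uncontentious $\D$ into pieces on which a $\D$-aware selector can be constructed. Candidates include the principal partition decomposition of the underlying matroid, expressing $\D$ as a mixture of ``almost product'' distributions along the lattice of flats, or a convex decomposition of $\D$'s marginal vector into indicators of independent sets. For each piece one would design a local selector, and aggregate them in the online model using a random sample à la \citet{karger_matroidsampling} to identify which piece each arriving element belongs to.

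The main obstacle --- and the reason Conjecture~\ref{conj:matsec} has resisted resolution for two decades --- sits precisely at this last step. In the product setting, a random prefix of arrivals gives an essentially unbiased statistical picture of the joint distribution of the remainder. Under unrestricted positive correlation, by contrast, conditioning on a random prefix can shift the conditional distribution of a late-arriving element in an arbitrary direction, while the balance requirement must hold \emph{uniformly} over every element, including rare ones that are conditionally likely to be active only when few of their correlated partners are. I do not currently see how to circumvent this without either (i) a sharper structural theorem characterizing uncontentious distributions as mixtures from a restricted compositional family that is robust to prefix conditioning, or (ii) a genuinely new online rounding primitive with worst-case guarantees under conditional-distribution shift. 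Absent such a breakthrough, my honest assessment is that Conjecture~\ref{conj:matsec} should be approached not head-on but via the equivalence established in this paper, with subsequent effort directed at the sharply posed correlated-rounding question on the contention-resolution side.
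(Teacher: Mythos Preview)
The statement you are attempting to prove is Conjecture~\ref{conj:matsec}, the matroid secretary conjecture itself. The paper does \emph{not} prove this conjecture; it only establishes that the conjecture is equivalent to the existence of a universal random-order contention resolution scheme for matroids (Theorem~\ref{thm:main}). There is therefore no ``paper's own proof'' against which to compare your proposal.

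Your proposal correctly identifies and invokes the paper's chain of reductions, but the substantive step --- actually constructing a $(\beta,\alpha)$-universal RO-CRS for all matroids with constant $\beta,\alpha$ --- is precisely the open problem that the equivalence leaves standing. You acknowledge this yourself in the final paragraph: the decomposition ideas you sketch (principal partition, mixture-of-almost-products, convex decomposition of marginals) are plausible starting points, but you have not shown that any of them yields a selector whose balance guarantee survives prefix conditioning under arbitrary uncontentious correlation. That is the genuine gap, and it is not a technicality: it is the entire content of the conjecture once one has passed through the equivalence. Your closing assessment --- that the conjecture should be attacked on the contention-resolution side rather than head-on --- is reasonable and in fact aligns with the paper's stated purpose, but it is a research program, not a proof.
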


We note that we are considering the \emph{known matroid} model of the secretary problem, which is the original model defined by~\citet{matsec}. A potentially more challenging  variant, where only the size of the ground set is known at the outset, but the structure of the matroid is revealed online, has also been considered (see e.g.~\cite{gharan_variants}). We are unaware of any evidence of a separation between the two models, and in fact most algorithms in the matroid secretary literature work for both models. Nonetheless, the known setting lends itself best to our reduction.

\subsection{The Matroid Prophet Secretary Problem}
The matroid prophet secretary problem relaxes the matroid secretary problem by assuming that the weights $w \in \RRp^\E$ are drawn from a known \emph{prior distribution} $\mu$, independent of the random order $\pi$, rather than being chosen adversarially. Both $\M$ and $\mu$ are given at the outset, whereas the random order $\pi$ and the realized weight vector $w$ are revealed online as elements arrive. The single-choice prophet secretary problem was introduced by \citet{prophet_secretary_a}, and later studied for matroids and other set systems by \citet{prophet_secretary_b}. To our knowledge, all prior work on the prophet secretary problem has considered independent weights --- i.e., $\mu$ is a product distribution. We make no such assumption here, allowing the weights to be correlated arbitrarily.

Given $c \in [0,1]$, we say that  an algorithm for the secretary problem is \emph{$c$-competitive} for a class of matroids and prior distributions if for every matroid $\M$ and distribution $\mu$ in that class, the expected weight of the accepted set (over the random order $\pi$, the weight vector $w \sim \mu$, and any internal randomness of the algorithm) is at least a $c$ fraction of the expected offline optimal --- i.e., at least $ c \cdot \Ex[\rank_w(\M)]$.

The matroid prophet secretary problem also relaxes the \emph{matroid prophet inequality} problem of \citet{matroid_prophet}, in particular by assuming that the arrival order is uniformly random rather than adversarial. It follows that the competitive ratio of $\frac{1}{2}$ for the matroid prophet inequality from \cite{matroid_prophet} generalizes to the matroid prophet secretary problem when weights are independent. This was improved to $1-\frac{1}{e}$ by \cite{prophet_secretary_b}. No constant is known for the matroid prophet secretary problem with general correlated priors, though one would immediately follow from the matroid secretary conjecture. In fact, along the way to our results we show that the existence of a constant competitive algorithm for the matroid prophet secretary problem, with arbitrary matroids and arbitrary correlated priors, is equivalent to the matroid secretary conjecture.

\subsection{Contention Resolution}

For classical contention resolution, we roughly follow the notation and terminology from \cite{ucrs}.
Let $\M=(\E,\I)$ be a set system. A \emph{contention resolution map (CRM)} $\phi$ for $\M$ is a randomized function from $2^\E$ to $\I$ with the property that $\phi(R) \sse R$ for all $R \sse \E$. Such a map is \emph{$\alpha$-balanced} for a distribution $\rho \in \Delta(2^\E)$  if, for $R \sim \rho$, we have $\Pr[ i \in \phi(R)] \geq \alpha \Pr [ i \in R]$ for all $i \in \E$. Every CRM can be implemented by some algorithm in the \emph{offline model}, where the set $R$ is provided to the algorithm at the outset; when we emphasize this we sometimes say it is an \emph{offline CRM}. If a distribution $\rho \in \Delta(2^\E)$ admits an (offline) $\alpha$-balanced CRM for $\M$, we say $\rho$ is \emph{$\alpha$-uncontentious} for~$\M$. When $R \sim \rho$ and $\rho$ is $\alpha$-uncontentious, we often abuse terminology and also say that the random set $R$ is $\alpha$-uncontentious. We omit reference to $\M$ in these definitions when the set system is clear from context.

The following Theorem characterizes uncontentious distributions for matroids, and the subsequent proposition is an immediate consequence; both are shown in \cite{ucrs}.
\begin{theorem}[\cite{ucrs}]\label{thm:characterize_uncontentious}
  Fix a matroid $\M=(\E,\I)$, and let $\rho \in \Delta(2^\E)$. The following are equivalent for every $\alpha \in [0,1]$.
  \begin{enumerate}[label=(\alph*)]
  \item $\rho$ is $\alpha$-uncontentious (i.e., admits an $\alpha$-balanced offline contention resolution map).
  \item For every weight vector $w \in \RRp^\E$, the following holds for $R\sim \rho$:
\[ \Ex [\rank_w(R)] \geq  \alpha \Ex[w(R)]\]
  \item For every $\F \sse \E$, the following holds for $R \sim \rho$:
    \[ \Ex[\rank(R \intersect \F) ] \geq \alpha \Ex [|R \intersect \F|] \]
 % \item There exists a measure $\mu$ over $\I$, with total magnitude at most $\alpha$, such that $\sum_{S \in \I} \mu(S) \one_S \succeq x$.
  \end{enumerate}
\end{theorem}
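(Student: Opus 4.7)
The plan is to traverse the cycle $(a) \Rightarrow (b) \Rightarrow (c) \Rightarrow (b) \Rightarrow (a)$, with LP duality providing the only substantive step and a layer-cake identity bridging the rank and weighted-rank conditions.

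First, $(a) \Rightarrow (b)$ is immediate from linearity of expectation: if $\phi$ is $\alpha$-balanced and $R \sim \rho$, then $\phi(R) \in \I$ with $\phi(R) \sse R$ forces $w(\phi(R)) \leq \rank_w(R)$, and the balance inequality $\Pr[i \in \phi(R)] \geq \alpha \Pr[i \in R]$ gives $\Ex[w(\phi(R))] \geq \alpha \Ex[w(R)]$. The specialization $w = \mathbf{1}_\F$ yields $(b) \Rightarrow (c)$. For the reverse $(c) \Rightarrow (b)$ I would use the layer-cake identities
\[
\rank_w(R) = \int_0^\infty \rank(R \cap \F_t) \, dt, \qquad w(R) = \int_0^\infty |R \cap \F_t| \, dt,
\]
where $\F_t = \{i : w_i > t\}$; applying $(c)$ to each level set $\F_t$, integrating over $t$, and swapping integral and expectation via Tonelli recovers $(b)$.

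The substantive step, $(b) \Rightarrow (a)$, goes through LP duality. Assuming for the moment that $\rho$ has finite support, write the primal LP that maximizes $\alpha$ over the free variable $\alpha$ and nonnegative variables $y_{R,I}$ for $R \in \supp(\rho)$ and $I \in \I$ with $I \sse R$ (representing $\Pr[\phi(R) = I]$), subject to the budget constraints $\sum_I y_{R,I} \leq 1$ and the balance constraints $\sum_R \rho(R) \sum_{I \ni i} y_{R,I} \geq \alpha \Pr[i \in R]$. Introducing dual multipliers $\mu_R \geq 0$ for the budgets and $w_i \geq 0$ for the balances, the dual constraint indexed by $y_{R,I}$ simplifies to $\mu_R \geq \rho(R) \sum_{i \in I} w_i$, which at the optimum forces $\mu_R = \rho(R) \rank_w(R)$, while the free variable $\alpha$ contributes the normalization $\Ex[w(R)] = 1$. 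The dual thus minimizes $\Ex[\rank_w(R)]$ subject to $\Ex[w(R)] = 1$ and $w \geq 0$, so its optimal value equals $\inf_{w \geq 0} \Ex[\rank_w(R)]/\Ex[w(R)]$. Condition $(b)$ lower-bounds this infimum by $\alpha$, and strong LP duality then delivers an $\alpha$-balanced CRM.

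The main obstacle I anticipate is technical rather than conceptual: for $\rho$ with infinite support the LP above becomes infinite-dimensional and strong duality requires justification. I would handle this by approximating $\rho$ by a sequence of finitely-supported priors $\rho_n$ converging to $\rho$, extracting $\alpha$-balanced CRMs $\phi_n$ for each $\rho_n$ via the finite-dimensional argument, and passing to a subsequential limit: for each fixed $R$, the randomized output $\phi_n(R)$ is a distribution over the finite set $\{I \in \I : I \sse R\}$, so compactness in the product topology yields a limiting CRM whose balance ratio inherits from the $\phi_n$.
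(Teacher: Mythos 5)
The paper itself gives no proof of this theorem: it is imported verbatim from \cite{ucrs}, so there is no in-paper argument against which to compare. I can therefore only assess your attempt on its own terms.

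Your argument is correct. The chain $(a) \Rightarrow (b)$ by linearity and domination, $(b) \Rightarrow (c)$ by specializing to indicator weight vectors, $(c) \Rightarrow (b)$ via the layer-cake identity $\rank_w(R) = \int_0^\infty \rank(R \cap \F_t)\,dt$ with $\F_t = \{i : w_i > t\}$ (which holds for matroid rank by the greedy/summation-by-parts argument), and $(b) \Rightarrow (a)$ by LP duality are all sound. The dual you derive --- minimize $\Ex[\rank_w(R)]$ over $w \geq 0$ with $\Ex[w(R)] = 1$ --- is the right one, and this is exactly the arbitrary-prior analogue of the correlation-gap characterization of the offline balance ratio due to Chekuri, Vondr\'ak, and Zenklusen for product priors, so the method is the expected one.

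One point worth flagging: your final paragraph, which worries about infinite support and proposes approximating $\rho$ by finitely-supported $\rho_n$ and extracting a subsequential limit of CRMs, is solving a problem that does not arise. The matroid ground set $\E$ is finite, so $2^\E$ is a finite set and every $\rho \in \Delta(2^\E)$ has finite support by fiat; the primal LP has finitely many variables and constraints and strong duality applies directly. You should also note (or at least be aware of) the degenerate case where $\Pr[i \in R] = 0$ for every $i$, i.e., $\rho$ is the point mass on $\emptyset$: there the dual normalization $\Ex[w(R)] = 1$ is infeasible and the primal is unbounded, but all three conditions hold trivially for every $\alpha$, so the theorem is vacuous. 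Apart from that trivial case, the primal is feasible ($y = 0$, $\alpha = 0$) and bounded ($\alpha \leq 1$), so strong duality delivers exactly what you claim.
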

\begin{prop}[\cite{ucrs}]\label{prop:uncontentious_mixing}
  Fix a matroid. A mixture of $\alpha$-uncontentious distributions is $\alpha$-uncontentious.
\end{prop}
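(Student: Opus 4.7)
The plan is to apply the characterization of uncontentious distributions from Theorem~\ref{thm:characterize_uncontentious}, which converts the seemingly algorithmic condition (existence of a balanced CRM) into a condition that is linear in the distribution $\rho$, making closure under mixtures essentially automatic. Concretely, I would invoke equivalence (a)$\Leftrightarrow$(c): a distribution $\rho$ on $2^\E$ is $\alpha$-uncontentious for $\M=(\E,\I)$ if and only if, for every $\F \sse \E$,
\[
\Ex_{R \sim \rho}[\rank(R \intersect \F)] \;\geq\; \alpha \Ex_{R \sim \rho}[|R \intersect \F|].
\]

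So let $\rho = \sum_j \lambda_j \rho_j$ be an arbitrary mixture of $\alpha$-uncontentious distributions, with $\lambda_j \geq 0$ and $\sum_j \lambda_j = 1$. By hypothesis and the above characterization, each component $\rho_j$ satisfies the inequality above for every $\F$. Since expectation under a mixture distribution is the corresponding convex combination of expectations under the components --- applied to both the random variable $\rank(R \intersect \F)$ and the random variable $|R \intersect \F|$ --- the same inequality transfers to $\rho$ by taking the $\lambda_j$-weighted sum on both sides. Invoking the characterization in the reverse direction then certifies $\rho$ as $\alpha$-uncontentious, completing the argument.

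There is really no substantive obstacle here: the whole content of the proposition is that Theorem~\ref{thm:characterize_uncontentious} cleverly reformulates $\alpha$-uncontentiousness as a family of inequalities that are each linear in the distribution, and linearity of expectation does the rest. I would briefly remark on why a more direct, CRM-level argument is awkward: given offline $\alpha$-balanced maps $\phi_j$ for the components, one cannot in general combine them into a CRM for $\rho$ by sampling an index $j$, because the CRM sees only the realized set $R$ and not the latent component it was drawn from, and the posterior over components given $R$ need not yield a balanced result. This is precisely why routing through the characterization of Theorem~\ref{thm:characterize_uncontentious} is the right move, and it is the reason the proposition is stated as an immediate consequence of that theorem in the first place.
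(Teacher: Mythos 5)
Your proof is correct and matches the paper's intent: the paper explicitly frames Proposition~\ref{prop:uncontentious_mixing} as an immediate consequence of Theorem~\ref{thm:characterize_uncontentious}, and your argument --- using the equivalent condition (c), which is linear in $\rho$, so that each inequality is preserved under convex combinations --- is exactly the route that makes it immediate. Your closing remark about why a naive CRM-level combination fails (the map sees only $R$, not the latent mixture index, and the posterior need not cooperate) is accurate and explains why the characterization is the right tool here.
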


An \emph{online random-order contention resolution map} (henceforth \emph{RO-CRM} for short) is a CRM $\phi$ which can be implemented as an algorithm in the online random-order model. In the online random-order model,  $\E$ is presented to the algorithm in a uniformly random order $(e_1,\ldots,e_n) \sim \E!$, and at the $i$th step the algorithm learns whether $e_i$ is \emph{active} --- i.e., whether $e_i \in R$ --- and if so must make an irrevocable decision on whether to \emph{accept} $e_i$ --- i.e., include it in the set $\phi(R)$ --- or otherwise \emph{reject} it.

 A \emph{contention resolution scheme (CRS)} $\Phi$ for a set system $\M=(\E,\I)$ and class of distributions $\Delta \sse \Delta(2^\E)$ is an algorithm which takes as input a description of a \emph{prior distribution} $\rho \in \Delta$ and a sample $R \sim \rho$, and outputs $T \in \I$ satisfying $T \sse R$. In effect, $\Phi$ is a collection of contention resolution maps $\phi_\rho$, one for each $\rho \in \Delta$. If each $\phi_\rho$ is $\alpha$-balanced for $\rho$, we say that the $\Phi$ is an $\alpha$-balanced CRS for $\Delta$.  If each $\phi_\rho$ is an RO-CRM, we say that $\Phi$  is an online random order CRS (RO-CRS). Every CRS can be implemented offline, and we say \emph{offline CRS} if we wish to emphasize this.

 In much of the prior work on contention resolution schemes, $\Delta$ was taken to be the class of product distributions with marginals in $\P(\M)$, and each $\rho \in \Delta$ is described completely via its marginals $x \in \P(\M)$. Here, we consider more elaborate classes $\Delta$, most notably $\alpha$-uncontentious distributions for various $\alpha \in [0,1]$. We refer to a balanced CRS for such a class as \emph{universal}.

\begin{definition}
  Fix a set system. For $\beta \leq \alpha \leq 1$, a \emph{$(\beta,\alpha)$-universal CRS} is a CRS which is $\beta$-balanced for the class of $\alpha$-uncontentious distributions. %A \emph{constant-balanced universal CRS} is one which is $\alpha$-universal and $\beta$-balanced for some $\alpha \leq \beta = O(1)$.  %If the CRS in question can be implemented in the online random-arrival model, we say it is an \emph{online $\alpha$-Universal $\beta$-CRS}.
\end{definition}

The above definition is only interesting in restricted input models: there always exists an (offline) $(\alpha,\alpha)$-universal CRS for every $\alpha$ and every set system, by definition. Moreover, it is only interesting for $0< \alpha <1$, since the identity CRS is $(\alpha,\alpha)$-balanced otherwise. We will be concerned with the existence of $(\beta,\alpha)$-universal RO-CRS's, for constants $0 < \beta \leq \alpha <1 $, and matroid set systems.

% \begin{definition}
%   Given a finite set $\A$ and $p \in [0,1]$, let $\ind_p(\A)$ be the distribution of the random set $B \sse \A$ which includes each element of $\A$ independently with probability $p$. Equivalently, $\ind_p(\A)$ is the product distribution over subsets of $\A$ with all marginal probabilities equal to $p$.
% \end{definition}

% \begin{definition}
%   Given a finite set $\A$ and a vector $x \in [0,1]^\A$, let $\ind(x)$ be the distribution of the random set $B \sse \A$ which includes each element $i \in \A$ independently with probability $x_i$. Equivalently, $\ind(x)$ is the product distribution over subsets of $\A$ with marginals $x$.
% \end{definition}

% Note that when $B \sim \ind_p(A)$, we have $|B| \sim \binom(|A|,p)$. The following fact provides an alternate way to sample from $\ind_p(A)$
% \begin{proposition}\label{prop:sample_online}
%   Let $A$ be a finite set. The following process samples a set $B \sim \ind_p(A)$
%   \begin{itemize}
%   \item Sample $k \sim \binom(|A|, p)$
%   \item Let $B$ be a uniformly random element of $A \choose k$.
%   \end{itemize}
% \end{proposition}
% \begin{proof}
%   \todo{Well known and easy. Include for completeness.}
% \end{proof}

%%% Local Variables:
%%% mode: latex
%%% TeX-master: "ucrs2"
%%% End:

% TODO: Maybe fold this and prelims into intro
\section{Overview of Results and Approach}
%Introduce LCR, State main theorem and constituent lemmas
Our main result is the following.
\begin{theorem}\label{thm:main}
  The following three statements are equivalent
  \begin{enumerate}[label=(\roman*)]
  \item The matroid secretary conjecture (Conjecture~\ref{conj:matsec}). \label{thm:main:conj}
  \item There exists constants $0 < \beta \leq \alpha < 1$ such that every matroid admits a $(\beta,\alpha)$-universal RO-CRS. \label{thm:main:weak}
  \item There exists a constant $0 < c \leq 1$ such that every matroid admits an RO-CRS which is $(c \alpha, \alpha)$-universal, simultaneously for all $\alpha \in [0,1]$. \label{thm:main:strong}
  \end{enumerate}
\end{theorem}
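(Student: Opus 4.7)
The plan is to close the cycle $(\text{i}) \Rightarrow (\text{iii}) \Rightarrow (\text{ii}) \Rightarrow (\text{i})$. The implication $(\text{iii}) \Rightarrow (\text{ii})$ is immediate by fixing any $\alpha \in (0,1)$ and setting $\beta = c\alpha$. The implication $(\text{i}) \Rightarrow (\text{iii})$ is already provided by \cite[Theorem 4.1]{ucrs}, which converts a constant-competitive secretary algorithm into a $(c\alpha, \alpha)$-universal RO-CRS for every $\alpha \in [0,1]$ simultaneously. The substantive work is in $(\text{ii}) \Rightarrow (\text{i})$, which I would obtain by composing the three reductions drawn in Figure~\ref{fig:reductions}, each losing only a constant factor.

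First I would pass from matroid secretary to the \emph{correlated matroid prophet secretary} problem via a standard duality/minimax argument: by a Yao-style swap, an adversarial weight vector can be replaced by a worst-case stochastic prior over weight vectors, and after normalizing and quantizing the weights to a finite support --- at the cost of only a constant factor in the competitive ratio --- one obtains an instance of prophet secretary with an arbitrarily correlated prior. Conversely, a $c$-competitive algorithm for the latter yields an $\Omega(c)$-competitive algorithm for the former, which is the direction I need.

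Second, I would reduce correlated matroid prophet secretary to \emph{labeled} random-order contention resolution. The driving idea, inherited from \cite{ucrs}, is that for a fixed weight vector $w$ the set of improving elements with respect to a random constant-size sample captures a constant fraction of $\rank_w(\M)$ and is uncontentious. Averaging over $w \sim \mu$ produces a mixture of uncontentious distributions, still uncontentious by Proposition~\ref{prop:uncontentious_mixing}. A naive application of a universal RO-CRS, however, is free to accept improving elements preferentially when their weights happen to be small while meeting the per-element balance constraint in aggregate; the simple example alluded to in the introduction shows this failure mode is real. The fix is to attach to each element a \emph{label} encoding a discretization of its realized weight, and to require balance with respect to (element,label) pairs rather than bare elements. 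Matching balance at this finer granularity forces the scheme to respect weights and recovers a constant fraction of $\Ex[\rank_w(\M)]$.

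Third, I would reduce labeled RO-CRS to unlabeled RO-CRS on an enlarged matroid. Offline this is trivial --- each (element,label) pair becomes a parallel copy --- but online one must randomly interleave both the active and the inactive pairs while revealing the active ones in their prescribed random order, and must do so without knowing in advance which copies will end up active. The remedy, sketched in the introduction, is to blow up the matroid further by replicating each label $N$ times. As $N \to \infty$, a uniformly random order over the replicated ground set concentrates, modulo the duplicate equivalence, on a deterministic target order, so the required interleaving of inactive pairs becomes essentially deterministic and approximately insensitive to which duplicates are active. Applying the universal RO-CRS on the blown-up matroid and then collapsing duplicates yields the labeled guarantee on the original matroid. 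I expect this final reduction to be the main obstacle: one must simultaneously control the total-variation distance between the simulated and genuine uniform random orders \emph{uniformly} over the label realization, and ensure that independence in the blown-up matroid faithfully transfers back to independence in the original, all while driving $N$ large enough that both errors vanish without eroding the constant balance ratio.
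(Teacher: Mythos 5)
Your proposal is correct and matches the paper's proof essentially step for step: the same cycle of implications $(\text{i}) \Rightarrow (\text{iii}) \Rightarrow (\text{ii}) \Rightarrow (\text{i})$, with $(\text{ii}) \Rightarrow (\text{i})$ obtained by composing the same three reductions (duality/minimax to correlated prophet secretary after normalization and discretization, improving elements with weight-labels to labeled RO-CRS, and label-duplication blowup to reduce labeled to unlabeled RO-CRS). The paper packages these as Sublemma~\ref{slem:normalize_discretize} and Lemmas~\ref{lem:sec_to_psec}--\ref{lem:lcr_to_cr}, and your sketch correctly anticipates both the failure mode motivating labels and the total-variation control needed in the blowup.
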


% \begin{theorem}\label{thm:main}
%   The matroid secretary conjecture  holds if and only if there exists constants $0 < \beta \leq \alpha < 1$ such that every matroid admits a $(\beta,\alpha)$-universal~CRS in the online random order model.
% \end{theorem}
%TODO: Somewhere, perhaps at the end of the paper, reflect on the fact that this is equivalent to the (seemingly stronger) for all alpha version of the right hand side

%We note that the forward direction of this theorem, namely that the matroid secretary conjecture implies the existence of a universal RO-CRS, was already shown in \cite[Theorem 4.1]{ucrs}. In fact, it was shown there that a $c$-competitive matroid secretary algorithm yields a $(c \cdot \alpha,\alpha)$-universal RO-CRS for every $\alpha$. Therefore, one can equivalently state Theorem~\ref{thm:main} by quantifying universally over $\alpha$, and requiring $\beta=\beta(\alpha)=\Omega(1)\cdot \alpha$.\footnote{A notable, and perhaps surprising, consequence is that the existence of an $(\Omega(1),\alpha)$-universal RO-CRS on matroids for some  $\alpha \in (0,1)$ implies the same for all other $\alpha' \in (0,1)$.}

It was shown in \cite[Theorem 4.1]{ucrs} that \ref{thm:main:conj} implies \ref{thm:main:strong}. Moreover, it is easy to see that statement \ref{thm:main:strong} is stronger than \ref{thm:main:weak}. In this paper we show that \ref{thm:main:weak} implies \ref{thm:main:conj}, completing the proof of Theorem~\ref{thm:main}. In particular, we reduce   the matroid secretary conjecture to $(\beta,\alpha)$-universal random-order contention resolution, for arbitrary constants $\beta,\alpha \in (0,1)$.\footnote{A notable, and perhaps surprising, consequence of Theorem~\ref{thm:main} the existence of an $(\Omega(1),\alpha)$-universal RO-CRS on matroids for some  $\alpha \in (0,1)$ implies the same for all other $\alpha' \in (0,1)$. Even more so, it implies the existence of the strong form of universal RO-CRS in \ref{thm:main:strong}.} 
%\footnote{A notable, and perhaps surprising, consequence of Theorem~\ref{thm:main} that the existence of a $(beta,\alpha)$-universal RO-CRS, for a fixed $\beta,\alpha \in (0,1)$, implies existence of universal RO-CRS of the strong form in \ref{thm:main:strong} is equivalent to their existence for some $\beta, \alpha \in (0,1)$ as in \ref{thm:main:weak}.}
%
%In this paper, we prove the backward direction of Theorem~\ref{thm:main} by reducing the matroid secretary conjecture to universal random-order contention resolution.
We emphasize that, unlike in \cite[Section 5]{ucrs}, we reduce the matroid secretary problem to random-order contention resolution in the traditional setting of a known and given prior distribution.

%Our approach for this is as follows.
First, we introduce a ``bridge problem'' which we term \emph{labeled contention resolution}, generalizing classical contention resolution.
% , generalizing contention resolution to a setting where each active element arrives with a label. A labeled contention resolution scheme is $\alpha$-balanced if each (element,label) pair is selected with probability at least $\alpha$-times the probability that the element is active with that label.
%We then exhibit two reductions: one from the matroid secretary problem to universal labeled contention resolution, and another from universal labeled contention resolution to universal (unlabeled) contention resolution, all in the online random arrival model.

\subsection{Labeled Contention Resolution}\label{sec:lcr}
Labeled contention resolution generalizes (classical) contention resolution to a setting where each active element arrives with a label, and a scheme is $\alpha$-balanced if each (element,label) pair is accepted with probability at least $\alpha$-times the probability that the element is active with that label. More formally, let $\M=(\E,\I)$ be a set system, and let $\L$ be a finite set of \emph{labels}. A \emph{labeled set} for $(\M, \L)$ is a pair $(R,L)$ where $R \sse \E$ and  $L: R \to \L$ is an \emph{labeling} of $R$ with $\L$.  A \emph{labeled contention resolution map (LCRM)} $\phi$ for $(\M,\L)$ takes as input such a labeled set $(R,L)$, where $R$ is again referred to as the set of \emph{active} elements, and outputs $T \in \I$ with the property $T \sse R$. Such an LCRM is \emph{$\alpha$-balanced} for a distribution $\rho$ over labeled sets for $(\M,\L)$ if, when the input $(R,L)$ is drawn from $\rho$, we have $\Pr[ e \in \phi(R,L) \land L(e) = \ell] \geq \alpha \Pr [ e \in R \land L(e)=\ell]$ for every $e \in \E$ and $\ell \in \L$. When an (offline) $\alpha$-balanced LCRM exists for a distribution $\rho$ over labeled sets, we again say that $\rho$ is \emph{$\alpha$-uncontentious} for $\M$. When $(R,L) \sim \rho$ and $\rho$ is $\alpha$-uncontentious, we often abuse terminology and also say that the random labeled set $(R,L)$ is $\alpha$-uncontentious. We omit reference to $\M$ and/or $\L$ when they are clear from context.

In the online random order setting, elements of $\E$ arrive in a uniformly random order $(e_1,\ldots,e_n)$, and at the $i$th step the algorithm learns whether $e_i$ is \emph{active} --- i.e., whether $e_i \in R$ --- and if so the algorithm also learns its label $L(e_i)$. The algorithm must then make an irrevocable decision on whether to accept $e_i$.

Remaining notions and terms from unlabeled contention resolution generalize naturally to the labeled setting:  A \emph{labeled contention resolutions scheme (LCRS)} $\Phi$ for set system $\M$ takes as input a description of a distribution $\rho$ over labeled sets for $\M$ and some finite set $\L$ of labels, and implements an LCRM $\phi_\rho$ for $(\M,\L)$. As before, an LCRS $\Phi$ may offline or online, and is $\alpha$-balanced for a class of distributions if, for $\rho$ in that class, $\phi_\rho$ is $\alpha$-balanced for $\rho$. We focus on \emph{$(\beta,\alpha)$-universal RO-LCRSs}: those which are $\beta$-balanced for all $\alpha$-uncontentious distributions over labeled sets (for every finite set of labels), in the online random order model.

%Remaining notions and terms from unlabeled contention resolution generalize naturally to the labeled setting:  A \emph{labeled contention resolutions scheme (LCRS)} $\Phi$ for $(\M,\L)$ takes as input a description of a distribution $\rho$ over labeled sets, and implements an LCRM $\phi_\rho$ for an additional input $(R,L) \sim \rho$. As before, an LCRS may offline or online, and is $\alpha$-balanced for a class of distributions if all its constituent LCRMS are. We focus on \emph{$\beta$-balanced $\alpha$-universal RO-LCRSs}, i.e., those which are $\beta$-balanced for all $\alpha$-uncontentious distributions over labeled sets, in the online random arrival model. 

Note that classical contention resolution is the special case of labeled contention resolution in which each element of the ground set is associated with a single label. We also note that labeled contention resolution offers little beyond classical contention resolution in the offline model for matroids: if we think of labeled copies of an element as parallel elements in a new matroid, we obtain an equivalent unlabeled contention resolution problem.\footnote{More generally, this is also the case for any family of set systems closed under duplication of elements.} Formally, for a matroid $\M=(\E,\I)$ and set $\L$ of labels, we define their ``tensor product'' $\M\otimes \L = (\E \times \L, \I \otimes \L)$, where $\I \otimes \L$ includes $S \odot L = \set{ (e,L(e)): e \in S}$ for each $S \in \I$ and each $L: S \to \L$. It is easy to verify that $\M \otimes \L$ is a matroid: each element of $\M$ was just replaced with $|\L|$ parallel elements, one for each label.  In the offline setting, a labeled contention resolution problem on $\M$ and $\L$ is equivalent to an unlabeled one on $\M \otimes \L$. In particular, we can think of a  labeled set $(R,L)$ for $\M$ and $\L$ as an (unlabeled)  set $R \odot L = \set{(e,L(e)): e \in R}$ for $\M \otimes \L$. It follows that a random labeled set $(R,L)$ is $\alpha$-ucontentious (in the labeled sense, for $\M$ and $\L$)  if and only if the corresponding unlabeled set $R \odot L$ is $\alpha$-uncontentious (in the unlabeled sense, for $M \otimes \L$). Given this equivalence, the following labeled analogue of Proposition~\ref{prop:uncontentious_mixing}, which will be useful in Section~\ref{sec:psec_to_lcr}, is immediate.
\begin{prop}\label{prop:labeled_uncontentious_mixing}
  Fix a matroid and a set of labels. A mixture of $\alpha$-uncontentious distributions over labeled sets is $\alpha$-uncontentious.
\end{prop}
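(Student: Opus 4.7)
The plan is to reduce directly to the unlabeled analogue (Proposition~\ref{prop:uncontentious_mixing}) using the tensor-product equivalence set up in the paragraph preceding the statement. Concretely, fix a matroid $\M=(\E,\I)$ and a finite label set $\L$, and suppose $\rho = \sum_i \lambda_i \rho_i$ is a convex combination of $\alpha$-uncontentious distributions $\rho_i$ over labeled sets for $(\M,\L)$, with $\lambda_i \geq 0$ and $\sum_i \lambda_i = 1$.

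First I would push each $\rho_i$ forward along the encoding map $(R,L) \mapsto R \odot L = \{(e,L(e)) : e \in R\}$, yielding a distribution $\rho_i'$ over (unlabeled) subsets of the ground set $\E \times \L$ of the matroid $\M \otimes \L$. By the equivalence recorded just above the proposition, each $\rho_i$ is $\alpha$-uncontentious for $(\M,\L)$ if and only if the corresponding $\rho_i'$ is $\alpha$-uncontentious for $\M \otimes \L$; so all the $\rho_i'$ are $\alpha$-uncontentious for $\M \otimes \L$. Since the pushforward is linear in the distribution, the pushforward of the mixture $\rho$ is exactly the mixture $\rho' = \sum_i \lambda_i \rho_i'$.

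Now I would invoke Proposition~\ref{prop:uncontentious_mixing}, applied to the matroid $\M \otimes \L$, to conclude that $\rho'$ is $\alpha$-uncontentious (in the classical, unlabeled sense). Invoking the equivalence once more in the reverse direction then shows that $\rho$ itself is $\alpha$-uncontentious for $(\M,\L)$, which is the desired conclusion.

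There is no genuine obstacle here: all the work has been packaged into the labeled-to-unlabeled equivalence $\rho \leftrightarrow \rho'$ and into the existing Proposition~\ref{prop:uncontentious_mixing}. The only point worth noting is that the encoding map $(R,L) \mapsto R \odot L$ is a bijection between labeled sets for $(\M,\L)$ and their images in $2^{\E \times \L}$ (one recovers $R$ by projecting onto the first coordinate and $L$ by reading off the second), so pushforward and mixing commute cleanly and no information is lost in either direction.
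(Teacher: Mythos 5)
Your proof is correct and is exactly what the paper has in mind: the paper declares the proposition ``immediate'' from the tensor-product equivalence $\rho \leftrightarrow \rho'$ and Proposition~\ref{prop:uncontentious_mixing}, and your write-up simply unpacks that remark — pushforward is linear, so mixing commutes with the encoding, and the unlabeled mixing result transfers back.
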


Our main concern will be labeled contention resolution in the online random order model. Unlike in the offline model, the reduction from the labeled to the unlabeled problem is nontrivial, as will be shown in Section~\ref{sec:lcr_to_cr}.\footnote{Though not a concern of this paper, the relationship between the labeled and unlabeled problems is interesting to contemplate in other online order models. In the adversarial order model, it is not too hard to see that the two problems are again equivalent. In the free order model, however, no such equivalence is immediately obvious.}

\subsection{Proof Outline}

Our proof is the composition of three reductions, one from the matroid secretary problem to the (correlated) matroid prophet secretary problem, one from the matroid prophet secretary problem to  universal labeled contention resolution, and finally one from  labeled to unlabeled contention resolution, all in the online random order model. Theorem \ref{thm:main} is a consequence of the following three lemmas, combined with the reverse reduction in \cite[Theorem 4.1]{ucrs}.

\begin{lemma}\label{lem:sec_to_psec}
  Fix a constant $c \in [0,1]$. If there is a $c$-competitive algorithm for the matroid prophet secretary problem with finitely-supported arbitrarily-correlated priors, then there is a $\frac{c}{4096}$-competitive algorithm for the matroid secretary problem. 
\end{lemma}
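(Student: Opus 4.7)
The plan is to use Yao-style minimax duality to convert the adversarial weight vector into a (random) prior, which can then be handled by the given prophet-secretary algorithm. The factor of $1/4096$ comes from preprocessing steps that restrict the adversary's weight vector to a bounded discrete set, as needed both to make the minimax theorem applicable and to satisfy the finite-support hypothesis on priors.

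The first and most delicate step is a constant-factor reduction from matroid secretary with arbitrary weight vectors to matroid secretary with weight vectors in a fixed finite discrete set $\W \subseteq \RRp^\E$. By homogeneity of the competitive ratio we may rescale so that $\max_e w_e$ is a power of $2$; by dropping elements with weight below $\max_e w_e \cdot 2^{-K}$ for $K = \Theta(\log n)$ (whose aggregate contribution to $\rank_w(\M)$ is $O(\rank_w(\M)/n)$); and by rounding remaining weights down to powers of $2$, we obtain an instance with weights in a finite discrete set, losing only a constant factor. A subtlety is that the algorithm cannot perform this normalization without knowing the weight scale online; I plan to address this by an initial observation phase on (say) the first $n/2$ arrivals to estimate $\max_e w_e$, combined with a threshold rule that directly admits any element whose weight substantially exceeds the running estimate.

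With the adversary's strategy space now the finite set $\W$, I invoke the classical minimax theorem on the zero-sum game in which the algorithm's strategy is a distribution over deterministic online policies (finite, since policies are finite-depth decision trees over $\W$), the adversary's strategy is a vector $w \in \W$, and the payoff is $P(A,w) = \alpha \rank_w(\M) - \Ex_\pi[w(A(\pi,w))]$. Minimax reduces the existence of an $\alpha$-competitive algorithm (one satisfying $\max_{w \in \W} P(A,w) \leq 0$) to the following statement: for every distribution $\mu$ supported on $\W$, there is an algorithm with $\Ex_{w \sim \mu, \pi}[w(A(\pi,w))] \geq \alpha \Ex_{w \sim \mu}[\rank_w(\M)]$. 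For each such $\mu$, which is a finitely supported (arbitrarily correlated) distribution over $\RRp^\E$, the hypothesized prophet-secretary algorithm with prior $\mu$ supplies exactly this guarantee with $\alpha = c$. Chaining this with the constant-factor loss from the first step yields a $c/4096$-competitive algorithm for the unrestricted matroid secretary problem.

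The main technical obstacle is carrying out the first step with only constant (rather than $\Theta(\log n)$) loss, despite the unbounded dynamic range of adversarial weights; this requires a careful analysis of the observation phase and threshold rules, and ultimately governs the magnitude of the $1/4096$ constant. Once the adversary space has been reduced to $\W$, the minimax step is standard and the invocation of the prophet-secretary hypothesis is immediate, since the distribution $\mu$ produced by the dual is automatically finitely supported on $\W$.
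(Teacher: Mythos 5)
Your overall structure matches the paper's: first reduce to a normalized/discretized weight class at a constant-factor cost (the paper's Sublemma~\ref{slem:normalize_discretize}), then invoke minimax duality to convert the adversarial weight vector into a known prior and appeal to the prophet-secretary hypothesis. The minimax step in your proposal is, if anything, cleaner than the paper's. The paper plays the ratio payoff $\val(A,w)/\rank_w(\M)$, and because the minimax dual then only bounds $\Ex_\mu[\val/\rank]$ rather than $\Ex_\mu[\val]/\Ex_\mu[\rank]$ (which is what the prophet-secretary guarantee gives), it needs the extra normalization $\rank_w(\M) \in [a,1]$ and loses a factor of $1/a = 16$ reconciling the two quantities. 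Your additive payoff $P(A,w) = \alpha\,\rank_w(\M) - \Ex_\pi[w(A(\pi,w))]$ avoids this entirely: the dual condition $\Ex_\mu[\val(A,w)] \geq \alpha\,\Ex_\mu[\rank_w(\M)]$ is verbatim the prophet-secretary guarantee with $\alpha = c$, no normalization of $\rank_w(\M)$ required, only finiteness of $\W$. If carried through, your route would in fact give a better constant than $1/4096$.

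However, your Step 1 --- the online normalization and discretization --- is where the real work lies, and your sketch of it has a genuine gap. You propose to estimate the scale by $\hat M \approx \max_e w_e$ from the first $n/2$ arrivals. But $\max_e w_e$ is a single-element statistic with no concentration: the heaviest element lands in the second half with probability roughly $1/2$, in which case $\hat M$ can be arbitrarily far from $\max_e w_e$ and the discretization grid built around $\hat M$ may simply not contain the later weights. Your ``threshold rule that directly admits any element whose weight substantially exceeds the running estimate'' cannot fix this on its own: once the grid scale is wrong, every remaining element may substantially exceed $\hat M$, and greedily admitting elements above a mis-calibrated threshold can both saturate the matroid on the wrong set and void the coupling to the minimax-derived policy, which was trained on a fixed $\W$. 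The paper's Sublemma~\ref{slem:normalize_discretize} avoids exactly this trap by estimating $v = \rank_w(S)$ instead of $\max_e w_e$: unlike the max, $\rank_w(S)$ concentrates around $\rank_w(\M)$ via Hoeffding (provided no single element carries more than a $1/16$-fraction of $\rank_w(\M)$), and the complementary heavy-single-element regime is handled by branching, with probability $1/2$, into a classical single-choice secretary algorithm. That bifurcation and the choice of statistic are the load-bearing ideas in Step 1, and they are absent from your sketch.
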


\begin{lemma}\label{lem:psec_to_lcr}
 Fix constants $0 \leq \beta \leq \alpha \leq 1$.   If there is a $(\beta,\alpha)$-universal RO-LCRS for a matroid $\M$, then there is a $\beta (1-\alpha)$-competitive algorithm for the  matroid prophet secretary problem on $\M$ with finitely-supported arbitrarily-correlated priors.
\end{lemma}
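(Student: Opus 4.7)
The plan is to reduce the matroid prophet secretary problem on $\M$ with prior $\mu$ to an instance of labeled random-order contention resolution on $\M$, taking the active labeled set to be Karger's improving elements with respect to a random sample $S$, with each active element labeled by its weight. Given a $(\beta,\alpha)$-universal RO-LCRS $\Phi$ for $\M$, the prophet secretary algorithm will simulate $\Phi$ on this induced labeled distribution, identifying $\Phi$'s internal random order with the prophet secretary's arrival order $\pi$. The competitive ratio $\beta(1-\alpha)$ will decompose as the balance loss $\beta$ of $\Phi$ composed with the sampling loss $1-\alpha$ incurred by setting aside an $\alpha$-fraction of elements.

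First, I would construct the prior $\rho$ over labeled sets for $(\M, \L)$, where $\L$ is the finite support of element weights under $\mu$: draw $w \sim \mu$ and a uniformly random size-$\alpha n$ sample $S \sse \E$, and set $R = \imp_w(S) = \set{e \in \E \sm S : \rank_w(S \cup \set{e}) > \rank_w(S)}$ with $L(e) = w_e$ for $e \in R$. The technical heart of the argument is to show that $\rho$ is $\alpha$-uncontentious in the labeled sense. For each fixed $w$, this reduces --- via the equivalence between labeled contention resolution on $(\M, \L)$ and unlabeled contention resolution on $\M \otimes \L$ --- to an unlabeled uncontentiousness claim on a tensor-product matroid, which I would derive from the structural result of \cite{ucrs} that the set of improving elements in a weighted matroid is $\alpha$-uncontentious. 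Mixing over $w \sim \mu$ and invoking Proposition~\ref{prop:labeled_uncontentious_mixing} then yields $\alpha$-uncontentiousness of $\rho$ itself.

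Next I would prove a Karger-style identity, $\Ex[w(R)] \geq (1-\alpha)\,\Ex_{w \sim \mu}[\rank_w(\M)]$: a standard greedy/exchange argument shows that for each fixed $w$, every $e \in \OPT_w(\M) \sm S$ lies in $\imp_w(S)$ (since $e \notin \spn(\M_{>w_e})$ implies $e \notin \spn(S_{>w_e})$), so $\Pr[e \in R] \geq 1-\alpha$ for each $e \in \OPT_w(\M)$; summing weights and taking expectation over $\mu$ gives the bound. Combined with the $\beta$-balance of $\Phi$ on $\rho$ applied \emph{per element--label pair} --- $\Pr[e \in \Phi(R,L) \wedge L(e) = \ell] \geq \beta \Pr[e \in R \wedge L(e) = \ell]$ for every $(e,\ell)$ --- multiplying by $\ell$ and summing over $(e,\ell)$ yields $\Ex[w(\Phi(R,L))] \geq \beta\,\Ex[w(R)] \geq \beta(1-\alpha)\,\Ex_{w \sim \mu}[\rank_w(\M)]$. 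Labeling by weight is essential: without it, balance would leave $\Phi$ free to preferentially accept low-weight realizations of each element at the expense of high-weight ones, which is precisely the failure mode flagged in the introduction.

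The last step is to realize this online in the prophet secretary model. The natural implementation takes $S$ to be the first $\alpha n$ arrivals in $\pi$ (whose marginal as a subset of $\E$ is uniformly random of size $\alpha n$, matching $\rho$'s sample distribution): by each subsequent arrival $e_t$, the algorithm knows all of $w|_S$, can determine whether $e_t$ is improving, and feeds the labeled activity status (or inactive) to $\Phi$'s online implementation with $\Phi$'s internal random order set to $\pi$. I expect the main obstacle to be reconciling this coupling with the formal precondition of $\Phi$'s balance guarantee, namely that $\Phi$'s random order is independent of the active labeled set; here $(R,L)$ is a deterministic function of $(\pi, w)$, so some dependence is unavoidable in any online implementation. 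The resolution should exploit the symmetry of the coupling --- the marginal of $(R,L)$ is unchanged, so the $\alpha$-uncontentiousness analysis still applies --- and argue that $\Phi$'s per-element balance is preserved in expectation under this joint law, possibly by invoking the online implementability of $\Phi$ only abstractly (to realize the CRM $\phi_\rho$) and then applying its balance guarantee at the level of marginals of $(R,L)$.
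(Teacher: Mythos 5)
Your high-level plan matches the paper's: use the labeled set of improving elements with respect to a random sample $S$ as the prior for the LCRS, show it is $\alpha$-uncontentious via Theorem~\ref{thm:improving_uncontentious} and Proposition~\ref{prop:labeled_uncontentious_mixing}, combine the per-label balance guarantee with the Karger-style bound $\Ex[w(R)] \geq (1-\alpha)\Ex[\rank_w(\M)]$, and implement online by sampling the early arrivals. However, there is a genuine gap at exactly the point you yourself flag as ``the main obstacle,'' and your proposed resolution does not close it.

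Concretely: if you set $S$ to be the first $\alpha n$ arrivals and then feed all of $\E$ to $\Phi$ in the prophet-secretary arrival order $\pi$, then conditional on $S$ (and hence on $(R,L)$), the order seen by $\Phi$ is nowhere near uniform --- every element of $S$, all of which are inactive, precedes every element of $\E \sm S$. This is exactly the sort of semi-random ordering for which Appendix~\ref{app:onlyactive} shows balanced contention resolution can fail, so appealing to ``the symmetry of the coupling'' or ``applying the balance guarantee at the level of marginals'' cannot rescue the argument: the RO-LCRS's balance guarantee is a statement about the joint distribution of $(R,L)$ and the arrival order, and here that joint distribution is badly distorted. The paper resolves this in Algorithm~\ref{alg:psec_to_lcr} by \emph{not} feeding elements to $\phi$ in the order $\pi$: after observing the prefix $S$, it interleaves the elements of $S$ (as inactive) among the remaining elements of $\E \sm S$ using a biased coin, producing a fresh order $\pi'$ which Sublemma~\ref{slem:phi_input} shows is uniformly random on $\E$ and independent of $(R,L)$. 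That interleaving step is precisely what your proof is missing.

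A secondary issue: you sample $S$ as a uniformly random subset of size exactly $\alpha n$, whereas the paper draws $k \sim \binom(n,\alpha)$ and takes the first $k$ arrivals so that each element lands in $S$ independently with probability $\alpha$. Definition~\ref{def:improving} and Theorem~\ref{thm:improving_uncontentious} are stated for the independent (Binomial) sampling model, so using a fixed-size sample would require re-deriving that theorem for hypergeometric sampling. This is plausibly fixable, but as written your appeal to the structural result of \cite{ucrs} does not quite apply to the distribution you constructed.
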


\begin{lemma}\label{lem:lcr_to_cr}
 Fix constants $0 \leq \beta \leq \alpha \leq 1$.  If every matroid admits a $(\beta,\alpha)$-universal RO-CRS, then for each $\tilde{\beta} < \beta$, every matroid  admits a $(\tilde{\beta},\alpha)$-universal RO-LCRS.
\end{lemma}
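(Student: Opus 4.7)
The plan is to reduce labeled RO-CR on $(\M, \L)$ with prior $\rho$ to classical (unlabeled) RO-CR on the blown-up matroid $\M^{(k)} := \M \otimes \L \otimes [k]$, for a duplication parameter $k$ chosen large depending on $\beta - \tilde{\beta}$; here $[k]$ stands for $k$ identical parallel duplicates of each label. The unlabeled distribution $\rho^{(k)}$ I would use is the natural blowup of $\rho$: sample $(R, L) \sim \rho$, independently draw $k_e \in [k]$ uniformly for each $e \in R$, and set the active set to be $R^{(k)} := \{(e, L(e), k_e) : e \in R\}$. First, I would verify that $\rho^{(k)}$ is $\alpha$-uncontentious for $\M^{(k)}$. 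Because all $(e, \ell, j)$ with common $e$ are parallel in $\M^{(k)}$, for any $\F \subseteq \E \times \L \times [k]$ the rank of $R^{(k)} \cap \F$ in $\M^{(k)}$ equals the rank in $\M \otimes \L$ of $(R \odot L) \cap \G$, where $\G := \{(e, \ell) : k_e \in \{j : (e, \ell, j) \in \F\}\}$ is a random subset of $\E \times \L$ independent of $(R, L)$. Applying Theorem~\ref{thm:characterize_uncontentious} to $R \odot L$ conditionally on $\G$ and averaging gives the desired inequality.

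Next, I would design an \emph{ideal} coupling between the labeled random order $\pi$ on $\E$ and an unlabeled random order $\sigma^{(k)}$ on $\E \times \L \times [k]$ under which each active copy $(e, L(e), k_e)$ arrives in $\sigma^{(k)}$ at \emph{exactly} the moment $e$ is revealed by $\pi$. In a continuous-time formulation, I would sample iid uniform timestamps $T_{(e,\ell,j)} \in [0,1]$ and declare $T_e := T_{(e, L(e), k_e)}$ for $e \in R$ (and $T_e := T_{(e, \ell_e^*, j_e^*)}$ for a uniform random $(\ell_e^*, j_e^*)$ if $e \notin R$); by symmetry, the $T_e$'s are iid uniform so $\pi$ is uniform on $\E!$, while $\sigma^{(k)}$ is uniform on permutations of $\E \times \L \times [k]$, as required by the RO-CRS $\Phi$. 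The point of this coupling is that $\Phi$'s decision on the unique active copy of each $e \in R$ happens exactly at $e$'s labeled arrival, so the LCRS can read off that decision and accept $e$ iff $\Phi$ accepts $(e, L(e), k_e)$, with no need to delay.

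The main obstacle --- and the crux of the argument --- is that this coupling is not directly implementable online, because at times preceding $T_e$ the LCRS does not know $L(e)$ or $k_e$, and so cannot correctly mark the activity of pre-reveal arrivals of the form $(e, \ell, j)$. My plan to overcome this is the duplication trick flagged in Section~\ref{sec:lcr}. The LCRS would sample the timestamps $T_{(e,\ell,j)}$ iid uniform up front, provisionally mark every arrival of a not-yet-revealed element as inactive, and, when $e$ is revealed with label $L(e)$ at labeled time $T_e$, choose $k_e$ uniformly from the \emph{future set} $F_e := \{j \in [k] : T_{(e, L(e), j)} > T_e\}$ and override $T_{(e, L(e), k_e)} \leftarrow T_e$ so that $\Phi$ subsequently sees this copy as active. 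A direct symmetry computation shows that conditional on $F_e \neq \emptyset$ the marginal of $k_e$ is uniform on $[k]$ as in the ideal coupling, and the conditional distribution of the other $T_{(e, L(e), j)}$'s deviates from iid uniform by only $O(1/k)$ per copy. The failure event $F_e = \emptyset$ occurs with probability $T_e^k$, which has expectation at most $O(1/k)$ over $T_e$ uniform, so a union bound over $e$ shows the simulation fails with probability at most $\epsilon_k = o_k(1)$. The hard part will be assembling these pieces into a clean total variation bound between the view presented to $\Phi$ and the ideal joint $(\sigma^{(k)}, R^{(k)})$, exploiting the concentration of $k$ iid uniform samples to formalize the claim that the required interleaving of inactive copies becomes essentially deterministic modulo duplicates.

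Finally, under the ideal coupling, $\Phi$'s $(\beta, \alpha)$-universal guarantee yields $\Pr[\Phi \text{ accepts } (e, \ell, j)] \geq \beta \Pr[e \in R, L(e) = \ell]/k$. Summing over $j \in [k]$, valid because the $k$ duplicates are parallel in $\M^{(k)}$ and $\Phi$ accepts at most one, gives $\Pr[\text{LCRS accepts } e \text{ and } L(e) = \ell] \geq \beta \Pr[e \in R, L(e) = \ell]$. Propagating the total variation slack from the simulation costs at most an additive $\epsilon_k$, so choosing $k$ large enough that $\epsilon_k < \beta - \tilde{\beta}$ produces the desired $(\tilde{\beta}, \alpha)$-universal RO-LCRS and completes the proof.
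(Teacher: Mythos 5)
Your high-level strategy matches the paper's exactly: blow up the matroid to $\M \otimes \L \otimes \C$ with $|\C| = K$ identical duplicates of each label, map the labeled prior to the natural unlabeled prior on the blown-up matroid (which remains $\alpha$-uncontentious), couple the labeled random-order arrival to a random-order arrival on the blown-up matroid by activating a uniformly random copy of each active $(e, L(e))$, and take $K \to \infty$ so that the interleaving of inactive copies becomes essentially deterministic modulo duplication. The final ``sum over the $K$ parallel copies'' step is also identical to the paper's Sublemma~\ref{slem:lcr_to_cr3}.

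Where you differ is in the \emph{implementation} of the online coupling, and this is where there is a genuine gap. The paper pre-samples the permutation $\pi'$ and, crucially, pre-samples $n$ i.i.d.\ uniform indices $k_1,\ldots,k_n$ from $[K]$ which it then \emph{sorts} in non-decreasing order before assigning $k_i$ to the $i$-th arriving element. Because the labeled arrival order $\pi$ is uniform and independent, the map $e \mapsto k_{\pi^{-1}(e)}$ is then \emph{exactly} a uniformly random matching of i.i.d.\ draws to elements, so the indices $C'(e)$ of the activated copies are exactly i.i.d.\ uniform conditionally on $(R,L)$ and $\pi'$; the \emph{only} source of error is the \textbf{FAIL} event (skipping over a copy that should later be activated), which the sorting makes rare and which is bounded by a concentration argument in Sublemma~\ref{slem:fail}. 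Your mechanism instead chooses $k_e$ \emph{dynamically} from the ``future set'' $F_e = \{j : T_{(e,L(e),j)} > T_e\}$ and then \emph{overrides} a timestamp. This conditions the timestamps of the sibling copies (they are now biased to be either above or below $T_e$ according to whether they landed in $F_e$), conditions $T_e$ itself (on $F_e \neq \emptyset$), and biases the \emph{removed} timestamp toward high values — your own mechanism removes a copy from $F_e$, which is a uniformly random element of $\{j : U_j > T_e\}$, not a uniformly random element of $[K]$. Your claim that ``the conditional distribution of the other $T_{(e, L(e), j)}$'s deviates from i.i.d.\ uniform by only $O(1/k)$ per copy'' is therefore not correct as stated, and even if it were, a per-copy deviation of $O(1/k)$ over $k$ copies does not directly give a vanishing total variation bound for the permutation $\sigma^{(k)}$. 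You flag this as ``the hard part'' and leave it unproved, but this is precisely the technical heart of the argument; the paper's pre-sorting trick is what makes the analysis go through cleanly, and replacing it with an override requires a substantially different (and, as far as I can tell, harder) concentration argument that you would need to supply.

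One smaller point: your verification that $\rho^{(k)}$ is $\alpha$-uncontentious via Theorem~\ref{thm:characterize_uncontentious} and a restriction argument is a valid alternative to the paper's direct construction (Observation~\ref{obs:unc}, which just has the CRM ignore the copy index); your version works provided you draw $k_e$ for all $e \in \E$, not just $e \in R$, so that the auxiliary set $\G$ is genuinely independent of $(R,L)$.
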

\noindent We prove Lemmas~\ref{lem:sec_to_psec}, \ref{lem:psec_to_lcr}, and \ref{lem:lcr_to_cr} in Sections ~\ref{sec:sec_to_psec}, \ref{sec:psec_to_lcr}, and \ref{sec:lcr_to_cr}, respectively. Recall Figure~\ref{fig:reductions}.

%%% Local Variables:
%%% mode: latex
%%% TeX-master: "ucrs2"
%%% End:

\section{Reducing  Secretary to Prophet Secretary}
\label{sec:sec_to_psec}
We now reduce the matroid secretary problem  to the matroid prophet secretary problem with a finitely-supported, arbitrarily-correlated prior distribution on weight vectors. Our reduction loses a constant factor in the competitive ratio.

First, we observe that we can restrict attention to instances of the matroid secretary problem which are \emph{normalized}, in that the offline optimal value is roughly $1$, and \emph{discretized}, in that weights are contained in a known finite set. The following Sublemma is shown using standard arguments, and its proof is therefore deferred to Appendix~\ref{app:normalize_discretize}. We note that we make no attempt to optimize the constants here.

\begin{sublemma}
\label{slem:normalize_discretize}
  The matroid secretary problem reduces, at a cost of a factor of $256$ in the competitive ratio, to its special case where the matroid $\M=(\E,\I)$ and weights $w$ are guaranteed to satisfy the following:
  \begin{itemize}
  \item Normalized: $\rank_w(\M) \in \left[\frac{1}{16},1 \right]$. 
  \item Discretized: The weight $w_e$ of each element $e \in \E$ is either zero, or is an integer power of $2$ contained in $\left[\frac{1}{256\ \rank(\M)},1\right]$.
  \end{itemize} 
\end{sublemma}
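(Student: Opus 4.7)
My plan is to reduce an arbitrary instance $(\M,w)$ of the matroid secretary problem to one satisfying the normalization and discretization constraints, at a constant cost in the competitive ratio. Given a $c$-competitive algorithm for the normalized/discretized special case, I would build a $c/256$-competitive algorithm for the general case by combining three standard ingredients, each contributing a constant factor of loss.

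\emph{Scale estimation via a sampling phase.} In the uniformly random arrival order, I would reject all elements in the first half of the arrival, and compute $\hat{R} = \rank_w(\text{first half})$ from the observed weights and the known matroid $\M$. Since each element of $\OPT_w(\M)$ lies in the first half with probability $1/2$, linearity of expectation gives $\Ex[\hat{R}] \geq \rank_w(\M)/2$, and trivially $\hat{R} \leq \rank_w(\M)$. A reverse-Markov argument then shows that $\hat{R} \in [\rank_w(\M)/4,\, 3\rank_w(\M)/4]$ with constant probability. Under this ``good event,'' the identity $\rank_w(\text{first half}) + \rank_w(\text{second half}) \geq \rank_w(\M)$ (obtained by splitting the offline optimum) forces $\rank_w(\text{second half})$ to be a constant fraction of $\rank_w(\M)$ as well.

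\emph{Truncation, rounding, and rescaling.} On the second half, I would modify the weights by zeroing out any $w_e < \hat{R}/(256\,\rank(\M))$, rounding each surviving weight down to the nearest integer power of $2$, and then dividing by $4\hat{R}$. The truncation discards at most $\rank(\M)\cdot \hat{R}/(256\,\rank(\M)) = \hat{R}/256$ of weighted rank (since any independent set has at most $\rank(\M)$ elements), rounding loses at most a factor of $2$, and division by $4\hat{R}$ together with the bound $w_e \leq \rank_w(\M) \leq 4\hat{R}$ places all weights in $[1/(256\,\rank(\M)),\, 1]$ (after a negligible adjustment of the truncation threshold). Conditioned on the good event, the resulting weighted rank of the second half falls in $[1/16,\, 1]$.

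\emph{Invoking the algorithm.} I would treat first-half elements as zero-weight dummies (so the ambient matroid is still $\M$) and run the assumed algorithm on the modified instance. Conditioned on the good event, it returns expected weight at least $c/16$ in the scaled instance, which translates to at least $c\hat{R}/4 \geq c\rank_w(\M)/16$ in original weights; multiplying by the constant probability of the good event, and absorbing the factor-of-$2$ losses from rounding and from sacrificing the first half, yields the advertised $c/256$-competitive algorithm.

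The main technical obstacle is simply the bookkeeping: choosing the truncation threshold, rescaling factor, and good-event probability so that the normalization bound $\rank_w \in [1/16, 1]$, the discretization into powers of $2$ in $[1/(256\,\rank(\M)), 1]$, and the preservation of a constant fraction of the offline value all hold simultaneously with constant probability. This is a standard exercise in balancing constant factors across the three transformations, and produces the constant $256$ appearing in the statement.
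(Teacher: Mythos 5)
Your high-level plan --- sample half, estimate the scale, truncate small weights, round to powers of two, rescale, and invoke the assumed algorithm on the rest --- matches the skeleton of the paper's proof. However, the proposal has a genuine gap at the concentration step, and it is precisely the gap that the paper's proof is designed to close.

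You assert that a reverse-Markov argument gives $\hat{R} \in [\rank_w(\M)/4,\ 3\rank_w(\M)/4]$ with constant probability. Reverse Markov, given $\hat{R} \leq \rank_w(\M)$ and $\Ex[\hat{R}] \geq \rank_w(\M)/2$, yields only the lower tail $\Pr[\hat{R} \geq \rank_w(\M)/4] \geq 1/3$; it provides no upper tail control, and in fact the two-sided "good event" can have probability zero. Consider the extreme case where $\rank_w(\M)$ is dominated by a single heavy element $e^*$: then $\hat{R} = \rank_w(\text{first half})$ is approximately $\rank_w(\M)$ or approximately $0$, each with probability $1/2$, so $\hat{R}$ never lands in the middle window. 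Worse, even conditioning on any reasonable event, the second half carries essentially no weight whenever $e^*$ falls in the first half, so no choice of truncation threshold, rescaling, or good-event probability can rescue a constant competitive ratio from the sample-and-discard strategy alone. This is not just a bookkeeping issue, and Hoeffding-type concentration cannot be applied without first bounding the maximum weight relative to $\rank_w(\M)$.

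The paper's proof handles exactly this case by hedging: with probability $1/2$ it ignores the whole construction and simply runs the $\frac{1}{e}$-competitive single-choice secretary algorithm. If some $w_e > \rank_w(\M)/16$, that branch alone already delivers $\frac{1}{2}\cdot\frac{1}{e}\cdot\frac{1}{16} > \frac{1}{256}$ of the optimum, so the analysis of the sampling branch may assume $\max_e w_e \leq \rank_w(\M)/16$, which is what makes Hoeffding's inequality give the two-sided concentration of the sample weighted rank (and sample rank) around their means. Your proposal is missing this fallback, and without it the ``good event'' cannot be shown to occur with constant probability. Once you add the single-choice hedge and replace the reverse-Markov step with a Hoeffding bound (using the now-valid bound on $\max_e w_e$), the remainder of your plan --- truncation at roughly $\hat{R}/(\text{const}\cdot\rank)$, rounding down to powers of two, rescaling by $\Theta(\hat{R})$, and invoking the normalized/discretized algorithm on the unsampled elements --- does go through along the lines of the paper.

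Two smaller points. First, the discretization bound must be stated in terms of the rank of the matroid actually fed to the inner algorithm; the paper invokes it on the restriction $\M|\bar{S}$ and carefully relates $\rank(\M|\bar{S})$ to $\rank(\M)$ via concentration, whereas treating the first half as zero-weight dummies on the full $\M$ (as you propose) is also workable but requires an explicit interleaving argument to preserve the uniformly-random arrival order seen by the inner algorithm. Second, the paper samples $k \sim \binom(n,1/2)$ rather than exactly $n/2$ elements so that membership in the sample is independent across elements, which is what the Hoeffding bound uses; ``the first half'' should be read this way.
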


We now fix the matroid $\M=(\E,\I)$, and reduce the normalized and discretized matroid secretary problem on $\M$, in the sense of Sublemma~\ref{slem:normalize_discretize}, to the prophet secretary problem on the same matroid $\M$, losing a constant factor in the reduction. To keep the proof generic, we use $a=\frac{1}{16}$ to denote the (known) constant such that offline optimal value is guaranteed to lie in $[a,1]$, and use $W=\set{0} \union \set{2^{-i} : i \in \NN, i \leq \log_2(256\ \rank(\M)) }$ to denote the (known) finite set of \emph{permissible weights} for $\M$. We also use $\W = \set{w \in W^\E : \rank_w(\M) \in [a,1]}$ to denote the (known) finite set of \emph{permissible weight vectors} for $\M$, yielding a normalized and discretized instance.

%We now reduce a matroid secretary instance $(\M,w)$ which is normalized and discretized in the sense of Sublemma~\ref{slem:normalize_discretize} to an instance of the prophet secretary problem on the same matroid $\M$, losing a constant factor in the reduction. As usual, we denote $\M=(\E,\I)$. To keep the proof generic, we use $a=\frac{1}{16}$ to denote the (known) constant such that  $\rank_w(\M)=[a,1]$, and use $\W=\set{0} \union \set{2^{-i} : i \in \NN, i \leq \log_2(256\ \rank(\M)) }$ to denote the (known) finite set of possible weights.

Our reduction invokes minimax duality  to replace the adversarially-chosen weight vector $w$, as in the secretary problem, with a weight vector drawn from a known and arbitrarily-correlated distribution $\mu$, as in the prophet secretary problem. Discretization is needed so that we can invoke the minimax theorem for finite games. However, straightforward application the minimax theorem produces a variant of the prophet secretary problem where the goal is to maximize the expected ratio between the online and offline optimal values, rather than the (usual) goal of maximizing the ratio of the two expectations. Normalization serves to obviate the distinction between these two~goals.

An algorithm $A$ for normalized and discretized secretary problem on $\M$ maps a permissible weight vector $w \in \W$ and an order $\pi \in \E!$ on the elements to an independent set $A(\pi,w) \in \I$. When $A$ is deterministic, we can think of it as a function from $\W \cross \E!$ to $\I$. Since $\W$, $\E!$, and $\I$ are all finite sets, there are finitely many such functions that are computable online. A randomized algorithm can be thought of as simply a distribution over these functions. For an algorithm $A$, be it deterministic or randomized, we use $\val(A,w) = \Ex_\pi [ w(A(\pi,w)) ]$ to denote the expected weight of the independent set chosen by algorithm $A$ for weight vector $w$, where expectation is over the uniformly random order $\pi \sim \E!$. Note that $\val(A,w)$ is a random variable when $A$ is randomized.

Consider the following finite two-player zero-sum game played between an algorithm player and an adversary. The pure strategies of the algorithm player are  deterministic algorithms for the secretary problem on $\M$, which we think of as functions from $\W \cross \E!$ to $\I$, and mixed strategies are naturally randomized algorithms. The pure strategies for the adversary are the permissible weight vectors $\W$. The algorithm player's utility if he plays a deterministic algorithm $A$ and the adversary plays $w$ is simply the competitive ratio of $A$ on $w$, given by $\frac{\val(A,w)}{\rank_w(\M)}$.

For a randomized algorithm $A$ for the secretary problem, its competitive ratio on a weight vector $w$ is given by$\frac{\Ex[\val(A,w)]}{\rank_w(\M)} = \Ex\left[\frac{\val(A,w)}{\rank_w(\M)}\right]$, where expectation is over any internal randomness in $A$. The worst-case competitive ratio of $A$ is at least $d$ if
\begin{equation}
  \label{eq:maxmin}
  \forall w \in \W:  \Ex_A\left[\frac{\val(A,w)}{\rank_w(\M)}\right]  \geq d
\end{equation}
Inequality \eqref{eq:maxmin} can be equivalently interpreted as follows: if the algorithm player moves first by playing mixed strategy $A$, he guarantees an expected utility of at least $d$ regardless of the response $w$ of the adversary. By the minimax theorem for finite two-player zero-sum games, and through the associated dual pair of linear programs,  the design of an algorithm $A$ satisfying Inequality~\eqref{eq:maxmin} reduces to the following (dual) problem faced by an algorithm player who moves second: for each $\mu \in \Delta(\W)$ (a mixed strategy of the adversary), design an algorithm $B=B(\mu)$ for the secretary problem on $\M$ which satisfies:
\begin{equation}
  \label{eq:minmax}
  \Ex_B \Ex_{w \sim \mu}   \left[\frac{\val(B,w)}{\rank_w(\M)}\right]  \geq d
\end{equation}
We note that our minimax reduction is not necessarily efficient, as both players in our zero-sum game have exponentially many strategies in the size of the ground set of the matroid. An efficient reduction is not necessary, however, for our (information theoretic) result. We also note that there is no benefit to randomization in $B$ when computational efficiency is not a concern: a randomized algorithm $B$ satisfying inequality~\eqref{eq:minmax} can be derandomized, albeit perhaps inefficiently, by appropriately choosing a deterministic algorithm in its support. Nevertheless, we permit randomization in $B$ for our reduction to be as general as possible.\footnote{This is convenient since the reduction from the prophet secretary problem to contention resolution in Sections~\ref{sec:psec_to_lcr} and~\ref{sec:lcr_to_cr} will, in general, produce a randomized algorithm, as contention resolution schemes are typically randomized.}  %We also note that a randomized algorithm $B$ satisfying inequality~\eqref{eq:minmax} in expectation also suffices: randomized algorithms do not outperform deterministic algorithms when the objective is stochastic as is the case here

Finally, we claim that a $c$-competitive algorithm $B$ for the prophet secretary problem on $\M$ and $\mu$ satisfies inequality~\eqref{eq:minmax} with $d=a \cdot c$.  By definition, the assumption that $B$ is a $c$-competitive prophet secretary algorithm for $\M$ and $\mu$ can be written as
\begin{align*}
  \frac{ \Ex_B \Ex_{w \sim \mu}  [\val(B,w)]}{\Ex_{w \sim \mu} [ \rank_w(\M)]} \geq c.
\end{align*}
It follows that
\begin{align*}
  \Ex_B \Ex_{w \sim \mu} \left[\frac{\val(B,w)}{\rank_w(\M)}\right] %&\geq   \Ex_{w \sim \mu} \left[\frac{\val(B,w)}{\max_{w' \in \W} \rank_{w'}(\M)}\right]  \\
                                                              &\geq  \Ex_B \Ex_{w \sim \mu} \left[\val(B,w)\right] &\mbox{($\rank_{w}(\M) \leq 1$ for all $w \in \W$)} \\
                                                              &=a \cdot   \frac{ \Ex_B\Ex_{w \sim \mu}  \left[\val(B,w)\right]}{a} \\
                                                              &\geq a \cdot \frac{ \Ex_B \Ex_{w \sim \mu}  \left[\val(B,w)\right]}{\Ex_{w \sim \mu}[ \rank_w(\M)]} &\mbox{($\rank_{w}(\M) \geq a$ for all $w \in \W$)} \\
                                                              &\geq a \cdot c
\end{align*}
Since our reduction lost a factor of $256$ in the normalization and discretization step (Sublemma~\ref{slem:normalize_discretize}), and a factor of $1/a=16$ due to the discrepancy between the objective of the matroid prophet secretary problem and the dual of the matroid secretary problem, this completes the proof of Lemma~\ref{lem:sec_to_psec} with the claimed loss in the competitive ratio of $256 \times 16 = 4096$.

%%% Local Variables:
%%% mode: latex
%%% TeX-master: "ucrs2"
%%% End:

\section{Reducing  Prophet Secretary to Labeled Contention Resolution}
\label{sec:psec_to_lcr}

Recall that in \cite{ucrs}, the matroid secretary problem is ``reduced'', with a major caveat, to  random-order contention resolution for the set of improving elements.  The random set of improving elements, adapted from the original definition of \citet{karger_matroidsampling}, is defined next. In our definition for improving elements, and in this section generally, we assume the non-zero entries of a matroid weight vector are distinct; this is without loss of generality by standard tie-breaking arguments, and serves to simplify our definitions and proofs.

\begin{definition}[See \cite{karger_matroidsampling,ucrs}]\label{def:improving}
 Let $\M=(\E,\I)$ be a matroid, let $p \in (0,1)$ be a parameter, and let $w \in \RRp^\E$ be a weight vector. The random set $R$ of \emph{improving elements for $(\M,p,w)$} is sampled as follows: Let $S \sse \E$ include each element $e \in \E$ independently with probability $p$, and let $R = \set{ e \in \E : \rank^\M_w(S \union e) > \rank^\M_w(S)}$.\footnote{Equivalently, $R$ is the set of elements in $\E \sm S$ which are not spanned by higher weight elements in  $S$. Another equivalent definition is $R= \set{i \in \E \sm S: i \in \OPT^\M_w(S \union i)}$.} We say $e \in R$ \emph{improves} $S$, in the sense that adding $e$ to $S$ improves its weighted rank. We use $\imp(\M,p,w)$ to denote the distribution of $R$.
\end{definition}

Key to the ``reduction'' in \cite{ucrs} are the following two properties of the set of improving elements.

\begin{fact}[\cite{ucrs}]\label{fact:improving_approximate}
  Let $\M=(\E,\I)$ be a matroid, let $p \in (0,1)$, and let $w \in \RRp^\E$ be a weight vector. The set of improving elements for $(\M,p,w)$ holds a $1-p$ fraction of the weighted rank of $\M$ in expectation. Formally: \[\Ex_{R \sim \imp(\M,p,w)} [ w(R) ]  \geq (1-p) \rank_w(\M).\]
\end{fact}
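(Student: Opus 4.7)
The plan is to analyze the random set $R$ element-by-element, leveraging the greedy algorithm for weighted matroid rank. Assuming (without loss of generality, by tie-breaking) that the non-zero entries of $w$ are distinct, order the positive-weight elements as $e_1,\ldots,e_n$ with $w_{e_1} > \cdots > w_{e_n} > 0$, and write $\E^+_{<i} = \set{e_1,\ldots,e_{i-1}}$ for the elements of strictly larger weight than $e_i$. The key workhorse will be the following consequence of the greedy algorithm: for any $T \sse \E$ with $e_i \notin T$ and $w_{e_i} > 0$, the element $e_i$ improves $T$ (in the sense of Definition~\ref{def:improving}) if and only if $e_i \notin \spn(T \cap \E^+_{<i})$. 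The ``only if'' direction is immediate, since an element spanned by higher-weight elements of $T$ is rejected by the greedy procedure. The ``if'' direction uses the uniqueness of the max-weight basis (a consequence of distinctness of positive weights): if $e_i \notin \spn(T \cap \E^+_{<i})$ then greedy appends $e_i$ to the basis of $T \cup \set{e_i}$, and this new basis must have strictly greater weight than the greedy basis of $T$ (else $T \cup \set{e_i}$ would admit two distinct max-weight bases). Specializing to $T = \E$ also yields $\rank_w(\M) = \sum_i w_{e_i}\,\mathbf{1}[e_i \notin \spn(\E^+_{<i})]$.

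With this characterization in hand, I would next lower-bound $\Pr[e_i \in R]$ for each $i$ with $e_i \notin \spn(\E^+_{<i})$ (that is, each $e_i$ lying in the greedy basis of $\M$). Two observations then suffice: (i) since $S$ is formed by including each element independently with probability $p$, the event $\set{e_i \notin S}$ has probability $1-p$ and is independent of $S \cap \E^+_{<i}$; and (ii) by monotonicity of matroid span, $e_i \notin \spn(\E^+_{<i})$ implies $e_i \notin \spn(T)$ for every $T \sse \E^+_{<i}$. Combining these, conditional on $e_i \notin S$ the element $e_i$ is automatically unspanned by $S \cap \E^+_{<i}$, hence improves $S$ by the characterization; therefore $\Pr[e_i \in R] \geq 1-p$.

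Finally, I would sum the per-element bounds, discarding the indices with $e_i \in \spn(\E^+_{<i})$ or $w_{e_i} = 0$ via the trivial bound $\Pr[e_i \in R] \geq 0$:
\[
\Ex_{R \sim \imp(\M,p,w)}[w(R)] \;=\; \sum_{i} w_{e_i}\, \Pr[e_i \in R] \;\geq\; (1-p)\!\!\sum_{i:\, e_i \notin \spn(\E^+_{<i})}\!\! w_{e_i} \;=\; (1-p)\,\rank_w(\M).
\]
There is no genuine obstacle here; the only delicate point is the ``if'' direction of the improvement characterization, which reduces the whole argument to the monotonicity of span together with the coordinate-wise independence of the sample $S$.
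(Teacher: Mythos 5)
Your proof is correct and follows essentially the same approach the paper alludes to (the paper dispatches this Fact with one sentence: each element of $\OPT_w(\M)$ is improving with probability $1-p$). Your write-up fleshes out that observation: the characterization ``$e_i$ improves $T$ iff $e_i \notin \spn(T \cap \E^+_{<i})$'' is exactly the equivalent description of improving elements given in the footnote to Definition~\ref{def:improving}, the per-element bound $\Pr[e_i \in R] \geq 1-p$ for $e_i \in \OPT_w(\M)$ follows from independence of the sample $S$ and monotonicity of span (in fact it is an equality, since $e_i \in S$ rules out improvement), and the final summation over the greedy basis recovers $\rank_w(\M)$ exactly as the paper intends.
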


\begin{theorem}[\cite{ucrs}]\label{thm:improving_uncontentious}
  Let $\M=(\E,\I)$ be a matroid, let $p \in (0,1)$, and let $w \in \RRp^\E$ be a weight vector. The distribution $\imp(\M,p,w)$ is $p$-uncontentious for $\M$.
\end{theorem}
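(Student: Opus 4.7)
I would invoke characterization~(b) of Theorem~\ref{thm:characterize_uncontentious}, which reduces the claim to establishing, for every $w' \in \RRp^\E$, the weighted-rank inequality
\[ \Ex_{R \sim \imp(\M,p,w)}\!\left[\rank_{w'}(R)\right] \;\geq\; p \cdot \Ex[w'(R)]. \]
The natural candidate witness is the greedy map $\phi(R) := \OPT_{w'}(R)$, so the task is to verify that its expected $w'$-weight meets the bound.

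The first step is to view the sampling underlying $\imp(\M,p,w)$ sequentially in decreasing $w$-order.  Upon arrival, each element $e$ falls into one of three cases: it joins $S$ with probability $p$; or, conditional on not being spanned by the earlier $S$-elements, it joins $R$ with probability $1-p$; or it is neither.  A useful invariant is $\spn_\M(S_{<e} \cup R_{<e}) = \spn_\M(\E_{<e})$, because every processed element lies in $S$, $R$, or $\spn_\M(S_{<\cdot})$.  Combined with the fact that every $\OPT_w(\E)$-element is always ``critical'' when processed, this already yields Fact~\ref{fact:improving_approximate}, $\Ex[\rank_w(R)] \geq (1-p)\rank_w(\M)$; the real work is extending the bound from $w$ to an arbitrary $w'$.

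The central obstacle is that per-element bounds $\Pr[e \in \OPT_{w'}(R)] \geq p\,\Pr[e \in R]$ generally \emph{fail}.  The simplest witness is the rank-$1$ uniform matroid on $\{e_1,e_2\}$ with $w = w' = (2,1)$: the event $e_2 \in R$ forces $e_1 \in R$ (else $e_1 \in S$ would span $e_2$), so $\OPT_{w'}(R) \ni e_1$ whenever $R$ is nonempty, giving $\Pr[e_2 \in \OPT_{w'}(R)] = 0$ while $\Pr[e_2 \in R] = (1-p)^2 > 0$.  Hence the inequality must be proved globally, by charging low-$w'$ deficits against high-$w'$ surpluses.  The matroid exchange property furnishes the charging: for each $e \in R \sm \OPT_{w'}(R)$, one has $e \in \spn_\M(\OPT_{w'}(R) \cap \{g : w'_g > w'_e\})$, so a higher-$w'$ ``parent'' $f \in \OPT_{w'}(R)$ exists to which one routes $e$'s contribution to $w'(R)$.

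The hardest part will be making this amortization quantitatively tight, i.e., showing that aggregate surplus dominates $p \cdot \Ex[w'(R)]$.  I expect the cleanest route uses instead characterization~(c) and induction on $|\E|$, peeling off the minimum-$w$ element $e^*$: since $e^* \notin A_f$ for any other $f$, the restriction $R' := R \cap (\E \sm e^*)$ coincides with $\imp(\M \sm e^*, p, w|_{\E \sm e^*})$, and the only new case is $e^* \in \F$, where it remains to show $\Pr[e^* \notin \spn_\M(R' \cap \F') \mid e^* \in R] \geq p$ for $\F' = \F \sm e^*$.  Here I would exploit the invariant $\spn_\M(S \cup R') = \spn_\M(\E \sm e^*)$: conditioned on $e^* \in R$ (equivalently $e^* \notin S$ and $e^* \notin \spn_\M(S)$), any ``witness'' $r$ with $e^* \in \spn_\M(S \cup r)$ is itself forced to lie in $R'$, and the sampling on $\E \sm \{e^*\}$ remains in product form subject to $e^* \notin \spn_\M(S)$---from which a direct Bernoulli-style computation yields the desired $p$ lower bound.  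Once the inequality holds for every $w'$ (resp.\ every $\F$), Theorem~\ref{thm:characterize_uncontentious} immediately supplies a $p$-balanced offline CRM, proving $p$-uncontentiousness.
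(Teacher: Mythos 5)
This paper does not itself prove Theorem~\ref{thm:improving_uncontentious}; it only cites \cite{ucrs} (``nontrivial, and we refer the reader to [ucrs] for its proof''), so there is no in-paper argument to compare against. On the merits, your set-up is sound: invoking Theorem~\ref{thm:characterize_uncontentious}(c) reduces the claim to $\Ex[\rank(R \cap \F)] \geq p\,\Ex[|R \cap \F|]$ for all $\F \sse \E$, your example showing that the pointwise greedy bound $\Pr[e \in \OPT_{w'}(R)] \geq p\,\Pr[e \in R]$ fails is correct, and your observation that $R' := R \cap (\E \sm e^*)$ is distributed as $\imp(\M \sm e^*, p, w|_{\E \sm e^*})$ when $e^*$ is the minimum-$w$ element is also correct (though the notation $A_f$ is undefined and should be cut). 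The problem is the step you reduce the case $e^* \in \F$ to: the claim $\Pr[\,e^* \notin \spn_\M(R' \cap \F') \mid e^* \in R\,] \geq p$ is \emph{false}. Take the rank-$1$ uniform matroid on $\{e_1,e_2,e^*\}$ with $w=(3,2,1)$ and $\F=\{e_2,e^*\}$. Conditioned on $e^*\in R$ one must have $S=\emptyset$, hence $R=\{e_1,e_2,e^*\}$, so $R'\cap\F'=\{e_2\}$ spans $e^*$ with conditional probability $1$; the left-hand side is $0$.

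The inequality for that $\F$ nevertheless holds, but only because the inductive surplus is strictly positive and cancels the deficit from $e^*$: one computes $\Ex[\rank(R'\cap\F')] - p\,\Ex[|R'\cap\F'|] = (1-p)^3$ while the $e^*$-term contributes $-p(1-p)^3$, for a total of $(1-p)^4 > 0$. This shows the bare induction you sketch cannot close: the inductive hypothesis only guarantees nonnegativity of the $\F'$-surplus, which is insufficient to absorb the (possibly nonzero) deficit $p\Pr[e^*\in R] - \Pr[e^*\in R \land e^*\notin\spn(R'\cap\F')]$. You would need either a quantitatively strengthened inductive hypothesis that tracks the size of the surplus, or a genuinely global charging argument across all of $\F$ rather than a per-element probability bound for $e^*$. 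The invariant $\spn_\M(S\cup R')=\spn_\M(\E\sm e^*)$ and the remark that any $r$ with $e^*\in\spn_\M(S\cup r)$ must lie in $R'$ are both true, but, as the counterexample shows, they do not by themselves yield the probability lower bound you need.
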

\noindent Fact~\ref{fact:improving_approximate} follows easily from the observation that each element in $\OPT_w(\M)$ is improving with probability $1-p$.  Theorem~\ref{thm:improving_uncontentious}, on the other hand, is nontrivial, and we refer the reader to \cite{ucrs} for its proof.

Consider the following ``reduction'', outlined in \cite{ucrs}, from the secretary problem on an $n$-element matroid $\M=(\E,\I)$, and (a-priori unknown) weights $w$, to online contention resolution: Observe the weights of the first $k \sim \binom(n,p)$ elements $S$ arriving online, then resolve contention for the set of elements $R \sse \E \sm S$ which improve $S$ as they arrive online.\footnote{Note that membership in $R$ can be determined online, as needed.} When $p \in (0,1)$ is a constant, $R$ follows an $\Omega(1)$-uncontentious distribution (Theorem~\ref{thm:improving_uncontentious}), and holds a constant fraction of the optimal value (Fact~\ref{fact:improving_approximate}).  Therefore, it suffices to resolve contention online for $R$ almost as well (up to a constant in the balance ratio) as is possible offline. Since $\E \sm S \supseteq R$ arrive in uniformly random order after $S$, and we can ``interleave'' $S$ among them to create a uniformly random order on $\E$, universal contention resolution in the random order model suffices. The important caveat to this ``reduction'' of \cite{ucrs} is that the distribution $\imp(\M,p,w)$ of $R$, being a function of the unknown and adversarial weight vector $w$, is unknown to the contention resolution scheme. This is a departure from the traditional notion of contention resolution, involving a known and given prior distribution.

In this section, we overcome this caveat by instead reducing from the prophet secretary problem, where $w$ is drawn from a known prior distribution $\mu$. Proposition \ref{prop:uncontentious_mixing} implies that the set of improving elements $R(w) \sim \imp(\M,p,w)$ is still $p$-uncontentious when $w$ is random. This, however, introduces additional difficulties: contention resolution with a constant balance ratio no longer recovers a constant fraction of the weighted rank when $w$ and $R$ are correlated, as illustrated by the following example.
\begin{figure}
  \fig[0.3]{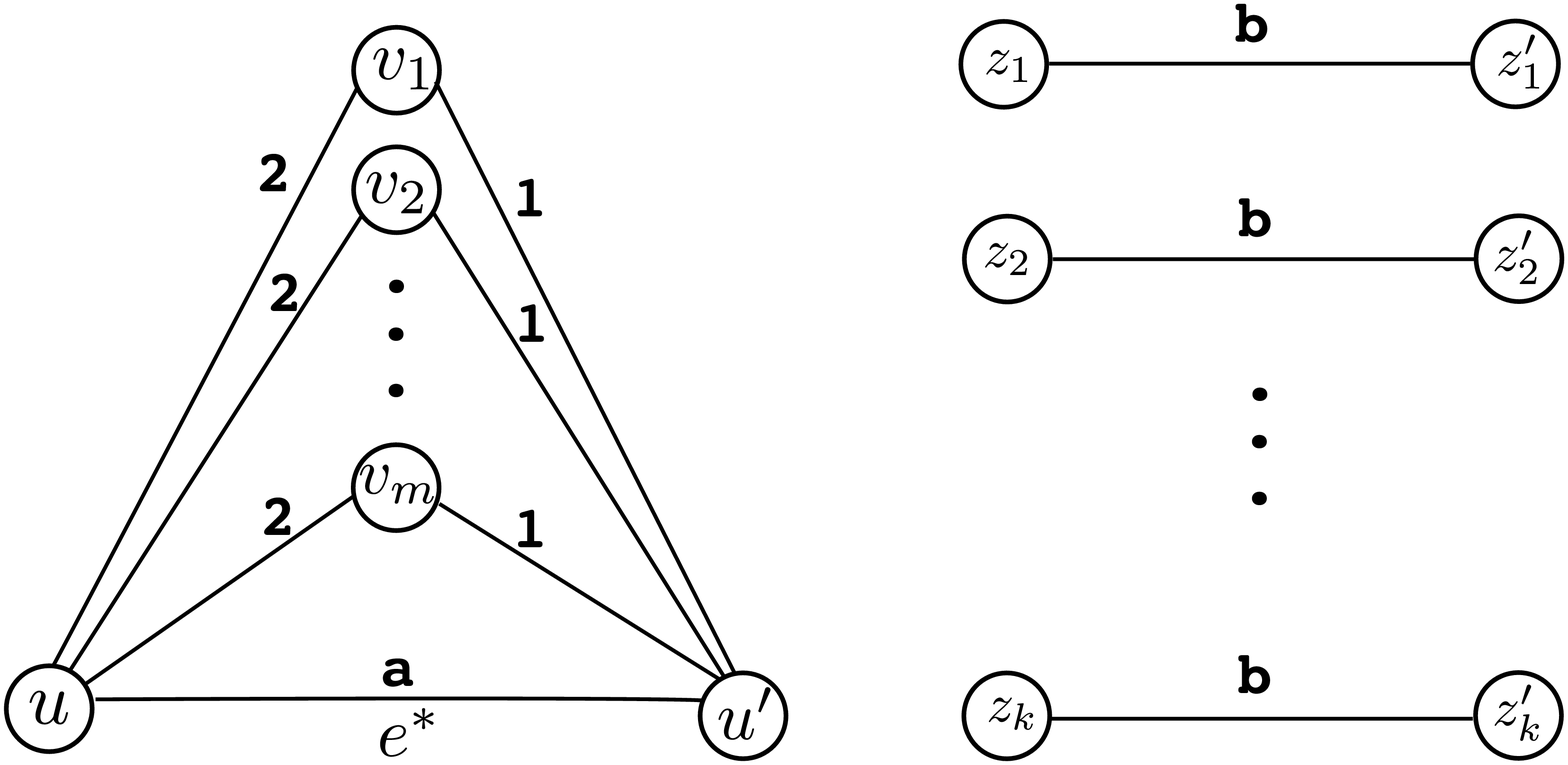}
  \caption{Modified Hat Example. This graphical matroid is truncated to rank $m$.} %Each edge other than $e^*$ is duplicated $K$ times.}
  \label{fig:hat}
\end{figure}

\begin{example}
  Consider the truncated graphical matroid in Figure~\ref{fig:hat}, with the weights labeling the edges and $k > > 2m$. We can guarantee that  weights are distinct by introducing small perturbations.  The graph on the left is the classical  ``hat example'' often employed in the literature on the matroid secretary problem.  We take the disjoint union of the hat example with the free matroid on $k$ elements (represented by the $k$ isolated edges on the right), and truncate the resulting matroid to a rank of $m$.  We fix the sampling parameter $p=\frac{1}{2}$, and examine the set of improving elements for two settings of the weights $a$ and $b$.

  For the first scenario, let $a=4$ and $b=3$. With high probability as $k$ grows large, the set $R$ of improving elements does not include any of the ``hat'' edges with weights $1$ or $2$. Moreover, $\Pr[ e^* \in R] = \frac{1}{2}$. The following simple scheme is $(\frac{1}{2}- o(1))$-balanced: Discard the edges $\set{(z_i,z'_i): i \in [k]}$ with probability $\frac{1}{2}$ (and otherwise discard nothing),\footnote{Discarding these edges serves solely to guarantee balance for the ``hat edges'', in the low probability event that any of the hat edges are improving.} then run greedy random-order contention resolution on the remaining edges. %It follows that this scheme accepts $e^*$ with probability  $\frac{1}{4} - o(1)$. %In other word, contention resolution for $R$ reduces, approximately as $K$ grows large, to contention resolution for the set of improving elements of a $1$-uniform matroid. It follows th

  For the second scenario we let $a=\infty$ (or a very large constant) and $b=0$. Setting $b=0$ effectively takes the edges $\set{(z_i,z'_i): i \in [k]}$   ``out of the running'', leaving only the hat example. The set $R$ of improving elements, though $\frac{1}{2}$-uncontentious, is now less amenable to greedy contention resolution: when $e^* \in R$, there are typically many ``hats'' in $R$ as well: for a   constant fraction of $i \in [m]$, both edges $(u, v_i)$ and $(v_i, u')$ are in $R$. It follows that the above-described discard-then-greedy scheme is no longer $\Omega(1)$-balanced. In particular, it selects $e^*$ with probability $O(\frac{1}{m})= o(1)$, despite the fact that $\Pr[e^* \in R] = \frac{1}{2}$. A slightly more involved contention resolution scheme is needed for a constant balance ratio.

  Suppose we randomize between the above scenarios, with each scenario equally likely. Let $R$ be the resulting set of improving elements, and note that $R$ is $\frac{1}{2}$-uncontentious. It is easy to verify that the discard-then-greedy scheme is $(\frac{1}{4} - o(1))$-balanced here. However, $e^*$ is accepted with probability $\frac{1}{2} - o(1)$ when it is active with weight $4$ (in the first scenario), but with probability $o(1)$ when it is active with weight $\infty$ (in the second scenario). Therefore, the discard-then-greedy scheme does not recover a constant fraction of the expected weighted rank of $R$, despite being $\Omega(1)$-balanced.

  A similar situation arises for any nontrivial randomization between the two scenarios, even if we make the second scenario exceedingly unlikely.
\end{example}
This example suggests that we must constrain contention resolution to not ``favor'' improving elements that have low weight. We accomplish this by labeling each improving element with its weight, and requiring contention resolution in the (stronger) labeled sense. We use the following labeled notion of improving elements:

\begin{definition}
  Let $\M=(\E,\I)$ be a matroid, let $p \in (0,1)$ be a parameter, and let $w \in \RRp^\E$ be a weight vector. The random \emph{labeled set of improving elements for $(\M,p,w)$} is the pair $(R,L)$, where $R \sim \imp(\M,p,w)$ is the (random) set of improving elements, and $L:R \to \RRp$  is the labeling with $L(e) = w(e)$ for all $e \in R$. We use $\impl(\M,p,w)$ to denote the distribution of the labeled set~$(R,L)$.
\end{definition}
When $w$ is fixed, each element $e$ is associated with a single label $w(e)$, so labeled contention resolution for $(R,L) \sim \impl(\M,p,w)$ is equivalent to unlabeled contention resolution for $R \sim \imp(\M,p,w)$, and by Theorem~\ref{thm:improving_uncontentious} it follows that $\impl(\M,p,w)$ is $p$-uncontentious in the labeled sense. When $w$ is a drawn from a known prior $\mu$ with finite support, the labeled set of improving elements $(R,L)$ is drawn from a mixture of the $p$-uncontentious distributions $\impl(\M,p,w)$, for the finitely-many realizations of $w \in \supp(\mu)$. When  $w \sim \mu$ and  $(R,L) \sim \impl(\M,p,w)$, we refer to $(R,L)$ as the \emph{labeled set of improving elements for $(\M,p,\mu)$}, and denote its distribution by $\impl(\M,p,\mu)$.  The following is then a direct consequence of Proposition~\ref{prop:labeled_uncontentious_mixing}.

\begin{sublemma}[Follows from Theorem~\ref{thm:improving_uncontentious} and Proposition~\ref{prop:labeled_uncontentious_mixing}]\label{slem:labeled_improving_uncontentious}
  Let $\M=(\E,\I)$ be a matroid, let $p \in (0,1)$, and let $\mu \in \Delta(\RRp^\E)$ be a distribution over weight vectors with finite support. The distribution $\impl(\M,p,\mu)$ is $p$-uncontentious (in the labeled sense) for $\M$.
\end{sublemma}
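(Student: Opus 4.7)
The plan is to assemble the two ingredients already flagged in the statement: Theorem~\ref{thm:improving_uncontentious} handles each fixed weight vector in the unlabeled setting, and Proposition~\ref{prop:labeled_uncontentious_mixing} closes $p$-uncontentiousness under finite mixtures in the labeled setting. The only thing requiring verification is the compatibility between labeled and unlabeled contention resolution in the fixed-$w$ case, which is where a quick bookkeeping step is needed.

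First I would fix an arbitrary $w \in \supp(\mu)$ and argue that $\impl(\M,p,w)$ is $p$-uncontentious in the \emph{labeled} sense. The key observation is that once $w$ is fixed, the labeling is a deterministic function of the active set: every $e \in R$ receives the unique label $L(e) = w(e)$. So starting from any offline $p$-balanced unlabeled CRM $\phi$ for $\imp(\M,p,w)$ guaranteed by Theorem~\ref{thm:improving_uncontentious}, I would define the labeled CRM $\phi'(R,L) := \phi(R)$ that simply ignores the (redundant) label argument. For any pair $(e,\ell)$ with $\ell = w(e)$, the labeled balance inequality reduces verbatim to the unlabeled one, while for $\ell \neq w(e)$ both sides vanish because $L(e) \neq \ell$ whenever $e \in R$. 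Hence $\phi'$ is $p$-balanced for $\impl(\M,p,w)$ in the labeled sense.

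Next, by the very definition of $\impl(\M,p,\mu)$, sampling $(R,L)$ from it is equivalent to first drawing $w \sim \mu$ and then drawing $(R,L) \sim \impl(\M,p,w)$. Since $\mu$ has finite support, this exhibits $\impl(\M,p,\mu)$ as a finite mixture, indexed by $\supp(\mu)$ and weighted by $\mu$, of the $p$-uncontentious labeled distributions produced in the previous paragraph. Proposition~\ref{prop:labeled_uncontentious_mixing} then yields that the mixture is itself $p$-uncontentious in the labeled sense, completing the argument.

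There is no substantive obstacle here; the only point worth stating carefully is that labeled uncontentiousness of $\impl(\M,p,w)$ for a single $w$ is inherited from the unlabeled statement precisely because a fixed $w$ induces a deterministic labeling, so conditioning on a specific label value and conditioning on element membership refer to the same event. Everything else is mixture closure.
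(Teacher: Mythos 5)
Your proof is correct and follows essentially the same two-step route as the paper: first show $\impl(\M,p,w)$ is $p$-uncontentious in the labeled sense for each fixed $w$ by noting the label is the deterministic function $e \mapsto w(e)$ (so labeled and unlabeled contention resolution coincide, and Theorem~\ref{thm:improving_uncontentious} applies), and then invoke closure under finite mixtures via Proposition~\ref{prop:labeled_uncontentious_mixing}. The only difference is that you spell out the bookkeeping for the fixed-$w$ step a bit more explicitly than the paper does, which is harmless.
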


Fixing matroid $\M=(\E,\I)$ and $p \in (0,1)$, we reduce the prophet secretary problem on $\M$  to  $(\beta,\alpha)$-universal random-order labeled contention resolution with $\alpha=p$.  %Adapting the aforementioned ``reduction'' from \cite{ucrs}, we use the  $p$-universal RO-LCRS in order to resolve contention online for a \emph{labeled} set of improving elements for $(\M,p,\mu)$, which follows the $p$-uncontentious distribution $\impl(\M,p,\mu)$.
The reduction is shown in Algorithm~\ref{alg:psec_to_lcr} for the prophet secretary problem on $\M$, which  takes as an offline input a prior $\mu$ on weight vectors, and as its online inputs a sequence of weighted elements of $\M$. %Adapting the ``reduction'' of \cite{ucrs}, the algorithm  invokes a $p$-universal RO-LCRS in order to resolve contention online for the labeled set of improving elements for $(\M,p,\mu)$, which follows the $p$-uncontentious distribution $\impl(\M,p,\mu)$.
We assume that the online inputs to the Algorithm are distributed as specified in the prophet secretary problem, namely with $w \sim \mu$ and $\pi \sim \E!$ drawn independently, and analyze the algorithm's competitive ratio. In particular, we will see that the algorithm achieves its competitive ratio by resolving contention, in the random order model, for a labeled set drawn from the $p$-uncontentious distribution~$\impl(\M,p,\mu)$.

\begin{alg}
  \PARAMETER Matroid $\M=(\E,\I)$ with $n$ elements.
  \PARAMETER $(\beta,\alpha)$-universal RO-LCRS $\Phi$ for matroid $\M$
  \INPUT Finitely-supported prior distribution $\mu \in \Delta(\RRp^\E)$.
  \INPUT Online string $(e_1,w(e_1)), \ldots, (e_n,w(e_n))$, where $\pi=(e_1,\ldots,e_n) \in \E!$, and~$w \in \supp(\mu)$.
  \STATE Let $p=\alpha$
  \STATE Instantiate $\Phi$ with prior distribution $\impl(\M,p,\mu)$, and let $\phi$ denote the resulting RO-LCRM for matroid $\M$ and finite set of labels $\L=\set{ w'(e) : w' \in \supp(\mu), e \in \E}$. \label{step:lcrm}
  \STATE Sample $k \sim \binom(n,p)$.
  \STATE Observe first $k$  online inputs $(e_1,w(e_1)),\ldots (e_k, w(e_k))$ without accepting any.
  \STATE Let $S= \set{e_1,\ldots,e_k}$.
  \STATE Let $i=1$ and $j=k+1$
  \COMMENT{Indexes elements $e_i \in S$ and $e_j \in \E \sm S$}
 % \FOR{$t=1$ to $n$}
  \WHILE[While not all elements in $\E$ have been fed to $\phi$]{$i\leq k$ or $j \leq n$}
  \STATE Flip a biased coin with heads probability $\frac{n-(j-1)}{n-(j-1) + k-(i-1)}$ \label{step:coin}
  \IF[Feed next element in $\E \sm S$ to $\phi$]{Coin came up heads}
  \STATE Read the next online input $(e_j,w(e_j))$
  \IF{$\rank^\M_w(S \union e_j) > \rank^\M_w(S)$ (i.e., $e_j$ improves $S$)} \label{step:improving}
  \STATE Feed $e_j$ as {\bf active} to $\phi$, with label $w(e_j)$. {\bf Accept} $e_j$ if $\phi$ accepts it, otherwise {\bf Reject}~$e_j$.\label{step:feed_active}
  \ELSE
  \STATE Feed $e_j$ as {\bf inactive} to $\phi$.
  \ENDIF
  \STATE Increment $j$
  \ELSE[Coin came up tails. Feed an element from $S$ to $\phi$] 
  \STATE Feed $e_i$  as {\bf inactive} to $\phi$
  \STATE Increment $i$
  \ENDIF
  %\ENDFOR
  \ENDWHILE
  \caption{Reduction from matroid prophet secretary to labeled contention resolution.}
  \label{alg:psec_to_lcr}
\end{alg}

Let $R$ denote the elements improving $S$, as determined in Step~\eqref{step:improving}, and let $L(e) = w(e)$ be the label of $e \in R$ determined in Step~\eqref{step:feed_active}. We also let $\pi'$ denote the list of elements (whether active or inactive) fed to $\phi$ by Algorithm~\ref{alg:psec_to_lcr}, in that order. First, we show that the inputs to $\phi$ are as stipulated in random-order contention resolution for $\impl(\M,p,\mu)$, and that $\phi$ is $\beta$-balanced for that distribution.

\begin{sublemma}\label{slem:phi_input}
  The labeled set $(R,L)$ follows the distribution $\impl(\M,p,\mu)$. Moreover, $\pi'$ is a uniformly random order on $\E$ independent of $(R,L)$.
\end{sublemma}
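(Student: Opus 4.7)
The plan is to separately establish the two claims, but they will share a common setup. Let us treat the inputs as random variables: $w \sim \mu$ and $\pi \sim \E!$ drawn independently, $k \sim \binom(n,p)$ drawn independently of both, and the coin flips in Step~\ref{step:coin} independent of everything else. All downstream quantities ($S$, $R$, $L$, and $\pi'$) are deterministic functions of these.

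\textbf{Step 1: The distribution of $(R,L)$.} First I would show that each element of $\E$ lies in $S$ independently with probability $p$. This is a standard fact: since $\pi$ is uniform and $k$ is independent of $\pi$ with $k \sim \binom(n,p)$, the set $S$ of the first $k$ elements of $\pi$ has the same distribution as a Bernoulli$(p)$ sample (conditioning on $|S|=k$ gives a uniformly random size-$k$ subset, and averaging $\binom{n}{k}^{-1}$ against $\binom(n,p)$ recovers the product measure). Moreover $S$ is independent of $w$ since $\pi$ and $k$ are. By Step~\ref{step:improving}, $R = \{e \in \E\setminus S : \rank^\M_w(S \cup e) > \rank^\M_w(S)\}$, and by Step~\ref{step:feed_active}, $L(e)=w(e)$ for $e \in R$. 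Comparing with the definitions, this means that conditional on $w$, $(R,L) \sim \impl(\M,p,w)$; integrating against $w \sim \mu$ gives $(R,L) \sim \impl(\M,p,\mu)$.

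\textbf{Step 2: The distribution of $\pi'$ and independence.} The key observation I would establish is that the biased-coin interleaving rule of Step~\ref{step:coin} produces a uniformly random interleaving of the two subsequences $(e_1,\dots,e_k)$ and $(e_{k+1},\dots,e_n)$. Concretely, if at some step there are $a$ elements of $S$ and $b$ elements of $\E \setminus S$ remaining to be fed, then the heads probability is $b/(a+b)$; multiplying out the probabilities along any fixed interleaving containing $k$ ``tails'' and $n-k$ ``heads'' telescopes to $k!(n-k)!/n! = 1/\binom{n}{k}$, uniform across interleavings. Combined with the fact that, conditional on $S$ (equivalently, on $S$ and $k$), $(e_1,\dots,e_k)$ is a uniformly random permutation of $S$ and $(e_{k+1},\dots,e_n)$ is an independent uniformly random permutation of $\E\setminus S$, this yields that $\pi'$ is a uniformly random permutation of $\E$ conditional on $S$. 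In particular $\pi'$ is unconditionally uniform on $\E!$ and independent of $S$; it is also independent of $w$ since it is a function of $(\pi,k,\text{coins})$, all of which are independent of $w$. Thus $\pi'$ is independent of the pair $(S,w)$, and since $(R,L)$ is a deterministic function of $(S,w)$, $\pi'$ is independent of $(R,L)$.

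The main obstacle will be the uniform-interleaving computation together with the care needed to untangle the joint distribution: one has to verify that the coin flips, together with the (conditional-on-$S$) uniform order of $S$ and of $\E\setminus S$ within $\pi$, combine cleanly into a uniform permutation of $\E$ that is completely decoupled from $(S,w)$. Once this is in hand, both claims of the sublemma follow at once, and feeding this into the $(\beta,\alpha)$-universal RO-LCRS $\Phi$ (with $\alpha=p$) justifies that Algorithm~\ref{alg:psec_to_lcr} is effectively running $\phi$ on its designed input distribution in the random-order model.
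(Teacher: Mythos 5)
Your proof is correct and follows essentially the same approach as the paper: first the standard fact that a $\binom(n,p)$-size prefix of a uniform order gives a Bernoulli$(p)$ sample, and then an argument that the coin-biased interleaving yields a uniform permutation of $\E$, conditionally on $S$ (and $w$), which simultaneously establishes uniformity and independence from $(R,L)$. The only cosmetic difference is in the interleaving step: the paper argues sequentially, showing inductively that each $e'_t$ is a uniform draw without replacement from the remaining elements, whereas you factor the distribution into independent uniform orders of $S$ and $\E \setminus S$ together with a uniform interleaving pattern obtained by a telescoping product; these are equivalent views of the same fact.
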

\begin{proof}
  Since $\pi$ is a uniformly random permutation of $\E$, and $k \sim \binom(n,p)$, it follows that $S$ includes each element of $\E$ independently with probability $p$. The set $R$ consists of all elements improving $S$ with respect to weight vector $w$, so $R \sim \imp(\M,p,w)$ by Definition~\ref{def:improving}. Since $L(e)=w(e)$ and $w \sim \mu$, it follows that $(R,L) \sim \impl(\M,p,\mu)$.

  We now condition on $S$ and $w$, which in turn fixes $(R,L)$, and show that $\pi'$ is a uniformly random permutation of $\E$. Since each iteration of the while loop feeds one of  $e_i$ or $e_j$ to $\phi$, and increments the corresponding counter ($i$ or $j$), it follows that $\pi'=(e'_1,\ldots,e'_n)$ is a permutation of $\E$.  Now consider the $t$th iteration of the while loop, let $S_t = S \sm \set{e'_1,\ldots,e'_{t-1}}$ and $\bar{S}_t = (\E \sm S) \sm \set{e'_1,\ldots,e'_{t-1}}$, and notice that $S_t \union \bar{S}_t = \E \sm \set{e'_1,\ldots,e'_{t-1}}$ is the set of elements not yet fed to $\phi$. It is easy to see inductively that $S_t = \set{e_i,\ldots,e_k}$ and $\bar{S}_t = \set{e_j,\ldots,e_n}$, where $i$ and $j$ are as in iteration $t$. Since $\pi$ is uniformly random, $e_i$ is a uniformly random element of $S_t$, and $e_j$ is a uniformly random element of $\bar{S}_t$. The bias of the coin in Step~\eqref{step:coin} is such that $e'_t= e_j$ with probability $\frac{|\bar{S}_t|}{|S_t \union \bar{S}_t|}$, and $e'_t=e_i$ with probability $\frac{|S_t|}{|S_t \union \bar{S}_t|}$. Therefore, $e'_t$ is a uniformly-random sample, without replacement, from $\bar{S}_t \union S_t = \E \sm \set{e'_1,\ldots,e'_{t-1}}$. It follows inductively that $\pi'$ is a uniformly random permutation of~$\E$.
\end{proof}
\begin{sublemma}\label{slem:phi_balanced}
  The RO-LCRM $\phi$ instantiated in Step~\eqref{step:lcrm} is $\beta$-balanced for $\impl(\M,p,\mu)$.
\end{sublemma}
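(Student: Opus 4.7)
The plan is to chain together two already-established facts. First, observe that the prior distribution with which $\Phi$ is instantiated in Step~\eqref{step:lcrm} is exactly $\impl(\M,p,\mu)$. By Sublemma~\ref{slem:labeled_improving_uncontentious}, this distribution is $p$-uncontentious in the labeled sense for $\M$. Since the algorithm sets $p=\alpha$, the prior $\impl(\M,p,\mu)$ is $\alpha$-uncontentious, which is precisely the class of distributions for which the $(\beta,\alpha)$-universal RO-LCRS $\Phi$ guarantees $\beta$-balance.

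Second, I need to verify that the mode in which $\phi$ is being invoked by Algorithm~\ref{alg:psec_to_lcr} matches the semantics of the online random-order model under which $\Phi$'s guarantee is stated. By definition, an RO-LCRM is $\beta$-balanced for $\impl(\M,p,\mu)$ when the labeled input set $(R,L)$ is drawn from $\impl(\M,p,\mu)$ and the elements of $\E$ (both active and inactive, with their activity and labels revealed as they arrive) are presented to $\phi$ in a uniformly random order that is independent of $(R,L)$. Sublemma~\ref{slem:phi_input} already establishes precisely this: $(R,L) \sim \impl(\M,p,\mu)$ and the arrival order $\pi'$ generated by the interleaving procedure is a uniformly random permutation of $\E$, independent of $(R,L)$.

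Combining these two observations completes the proof. The conclusion is that for every $e \in \E$ and $\ell \in \L$,
\[
\Pr[e \in \phi(R,L) \wedge L(e)=\ell] \geq \beta \cdot \Pr[e \in R \wedge L(e)=\ell],
\]
which is the claim. I do not anticipate any obstacle here: this sublemma is essentially a bookkeeping statement that packages the previously proved facts (correctness of the distribution produced by the reduction, together with the universality hypothesis on $\Phi$) into a form convenient for the subsequent competitive-ratio calculation in the proof of Lemma~\ref{lem:psec_to_lcr}.
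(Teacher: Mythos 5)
Your proof is correct and takes essentially the same approach as the paper: instantiating the $(\beta,\alpha)$-universal guarantee at the $p$-uncontentious distribution $\impl(\M,p,\mu)$ via Sublemma~\ref{slem:labeled_improving_uncontentious}. Your second paragraph about the input semantics is not wrong, but it restates what the paper handles separately in Sublemma~\ref{slem:phi_input}; the present sublemma is purely about the balance property of $\phi$ as a map, which is why the paper's proof is a one-liner.
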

\begin{proof}
  Follows directly from the fact that $\Phi$ is $(\beta,p)$-universal, and the fact that $\impl(\M,p,\mu)$ is $p$-uncontentious as shown in Sublemma~\ref{slem:labeled_improving_uncontentious}.
\end{proof}

Let $T \sse R$ denote the set of elements accepted by Algorithm~\ref{alg:psec_to_lcr}, as determined in Step~\eqref{step:feed_active}.
We can bound the expected weight of these elements as follows, where expectations are with respect to $w \sim \mu$,  $\pi \sim \E!$, the internal randomness in Algorithm~\ref{alg:psec_to_lcr}, and any randomness in the instantiated contention resolution map $\phi$. 
\begin{align*}
  \Ex [w(T)] &= \sum_{e \in \E} \sum_{w_0 \in \L} w_0 \cdot \Pr [e \in T \land w(e) = w_0] \\
             &= \sum_{e \in \E} \sum_{w_0 \in \L} w_0 \cdot \Pr [e \in T \land L(e) = w_0] \\
             &\geq \beta \sum_{e \in \E} \sum_{w_0 \in \L} w_0 \cdot \Pr [e \in R \land L(e) = w_0] &\mbox{(Sublemmata~\ref{slem:phi_input} and \ref{slem:phi_balanced})} \\
             &= \beta \sum_{e \in \E} \sum_{w_0 \in \L} w_0 \cdot \Pr [e \in R \land w(e) = w_0] \\
             &= \beta \Ex[ w(R)] \\
             &\geq \beta (1-p) \Ex[ \rank_w(\M)] &\mbox{(Fact~\ref{fact:improving_approximate} and Sublemma~\ref{slem:phi_input})} \\
             &= \beta (1-\alpha) \Ex[ \rank_w(\M)] \\  
\end{align*}
We conclude that Algorithm~\ref{alg:psec_to_lcr} is $\beta (1-\alpha)$ competitive for the prophet secretary problem on $\M$ with a finitely-supported prior. This concludes the proof of Lemma~\ref{lem:psec_to_lcr}.

% The reduction is shown in Algorithm~\ref{alg:psec_to_lcr}. Let $R$ denote the elements improving $S$, as determined in Step~??, and let $L(e) = w(e)$ be the label of $e \in R$ determined in Step~??.  Now suppose that the online input string is distributed as specified in the prophet secretary problem on $\M$ and $\mu$; namely with $w \sim \mu$, and $\pi$ a uniformly random permutation of the elements $\E$. It follows that $S$ independently includes each element of $\E$ with probability $p$, and that $(R,L) \sim \impl(\M,p,\mu)$. Moreover, the while loop in Step~?? feeds the elements $\E$ to $\phi$, appropriately labeled, in a uniformly random order: $\pi$ orders each of $S$ and $\E \sm S$ uniformly randomly, and the next element fed to $\phi$ is chosen from either $S$ or $\E \sm S$ with appropriate probability.

%%% Local Variables:
%%% mode: latex
%%% TeX-master: "ucrs2"
%%% End:

\section{Reducing Labeled to Unlabeled Contention Resolution, Online}
\label{sec:lcr_to_cr}

Consider labeled contention resolution  for matroid $\M=(\E,\I)$ and labels $\L$ in the online random-arrival model, and denote $n= |\E|$ and $m=|\L|$. 
%
%As described in Section~\ref{sec:lcr}, a labeled contention resolution problem on a matroid $\M=(\E,\I)$ and labels $\L$ is closely related to an unlabeled problem on the matrod $M \otimes \L$, where we create a $|\L|$ copies of each $e \in \E$, one for each label. %Formally, for a matroid $\M=(\E,\I)$ and set $\L$ of labels, we define their ``tensor product'' $\M\otimes \L = (\E \times \L, \I \otimes \L)$, where $\I \otimes \L$ includes $S \odot L = \set{ (e,L(e)): e \in S}$ for each $S \in \I$ and each $L: S \to \L$. It is easy to verify that $M \otimes \L$ is a matroid: each element of $\M$ was just replaced with $|\L|$ parallel elements, one for each label.  In the offline setting, a labeled contention resolution problem on $\M$ and $\L$ immediately reduces to an unlabeled one on $\M \otimes \L$ without loss in the balance ratio. Specifically, the reduction maps a labeled active set $(R,L)$   to the (unlabeled) active set $R \odot L = \set{(e,L(e)): e \in R}$. % If $(R,L)$ follows an $\alpha$-ucontentious distribution (in the labeled sense), then $R \odot L$ also follows $\alpha$-uncontentious distribution (in the unlabeled sense).
%No such reduction is immediate in  online settings, however, and the difficulties will become clear in the ensuing discussion.
%
%Denote $n= |\E|$ and $m=|\L|$. Consider labeled contention resolution  on matroid $\M=(\E,\I)$ and labels $\L$.
Here, a labeled set $(R,L)$ drawn from a known distribution is presented online to an LCRM for $\M$ and $\L$ as the string 
\begin{align}
  \label{eq:x}
  x=x(R,L,\pi)=(e_1,\ell_1),(e_2,\ell_2),\ldots,(e_n,\ell_n),
\end{align}
where $\pi=(e_1,\ldots,e_n)$ is a uniformly random permutation of $\E$, and $\ell_i \in \L \union \set{\bot}$ is the label $L(e_i)$ if $e_i \in R$ (i.e. $e_i$ is active) and is $\bot$ otherwise. Entries of $x$ are revealed online, with iteration $i$ revealing $(e_i,\ell_i)$, at which point the LCRM must immediately decide whether to accept $e_i$ in the event it is active. %Let $x_{1:i}$ denote the substring $x_1,\ldots, x_i$. Given an online LCRM $\phi$, we use $\phi(x_{1:i})$ to denote the boolean value of whether $\phi$ accepts $e_i$ at iteration $i$.

Recall from Section~\ref{sec:lcr} that, in the offline setting, labeled contention resolution on $\M$ and $\L$ reduces to unlabeled contention resolution on $\M \otimes \L$, via the map $(R,L) \to R \odot L$. It is therefore tempting to attempt a similar reduction in the online random order model as well. When the unlabeled problem on $\M \otimes \L$ is considered in the online random order model,  the (unlabeled) active set $R \odot L$ is presented online to an (unlabeled) CRM for $\M \otimes \L$ as the string 
\begin{align}
  \label{eq:y}
y=y(R \odot L ,\pi')= ((e'_1,\ell'_1),a_1), ((e'_2,\ell'_2),a_2), \ldots, ((e'_{nm}, \ell'_{nm}),a_{nm}),  
\end{align}
where  $\pi'=(e'_1,\ell'_1),\ldots,(e'_{nm},\ell'_{nm})$ is a uniformly random permutation of $\E \times \L$, and  $a_i \in \set{\top,\bot}$ designates whether $(e'_i,\ell'_{i}) \in R \odot L$. The string $y$ is revealed online, with iteration $i$ revealing $((e'_i,\ell'_i), a_i)$, at which point the CRM must immediately decide whether to accept $(e'_i,\ell'_i)$ in the event that $a_i=\top$. We emphasize that the string $y$ is longer than $x$: whereas an element $e \in \E$ appears exactly once in $x$, it appears $m$ times in $y$ (once for each possible label, with at most one of these appearances active). %Moreover, there are many possible (ordered) strings $y$ that are consistent with a string $x$.

In attempting an online reduction from the labeled problem to its unlabeled counterpart, the problem we face at this point, intuitively, is the following: Given $x$, how do we ``interleave'' the ``missing'' element/label pairs to form the string $y$. This interleaving must be done online, before we know exactly which elements are active and what their labels are. Moreover, it must be such that the resulting order of element/label pairs in $y$ is uniformly distributed, at least approximately, in order to make use of any guarantee on the balance ratio of the (unlabeled) RO-CRM. This, it so happens, is nontrivial.

The reader might understandably furrow their brow at this point: \emph{Surely,  any ``reasonable'' random-order contention resolution algorithm need only exploit the relative ordering of active elements. This is already true in $x$, so an arbitrary interleaving of the missing element/label pairs should suffice! Certainly, this additional difficulty is an artifact of the precise technical definition of the random order model, rather than a conceptually interesting distinction!} The reader would be justified in expressing such skepticism. However, intuitive as it may seem, this knee-jerk reaction is flawed in a formal sense. Specifically, we show in Appendix \ref{app:onlyactive} that there does not exist a constant-competitive universal CRS in the online model where active elements arrive in a uniformly random order, but inactive elements are ordered arbitrarily. This impossibility result holds even for a $1$-uniform matroid. Therefore, for online contention resolution to plausibly encode the matroid secretary problem, it needs to exploit randomness in the arrival order of both active and inactive elements!

\subsection{Difficulties with Direct Approaches}
\label{sec:difficulty}

We begin by explaining how the direct approach, namely reducing the online labeled contention resolution  for $(\M,\L)$ to online unlabeled contention resolution for  $\M \otimes \L$, appears unlikely to succeed. Let $x=x(R,L,\pi)$ be the online input string to the labeled problem, as in Equation~\eqref{eq:x}. All online reductions to the corresponding unlabeled problem which are conceivable to us fit the following template, which has oracle access to an online CRM $\phi'$ for $\M \otimes \L$, and produces an online LCRM $\phi$ for $(\M,\L)$.

\begin{itemize}
%\item Initialize the pool $\P=\E \times \L$ of element/label pairs not yet fed to $\phi'$.
\item While not all element/label pairs have been fed to $\phi'$, do one of the following:
  \begin{enumerate}[label=(\roman*)]
  \item Read the next active element/label pair $(e,\ell)$ in $x$ (if any), skipping inactive elements as needed. If $(e,\ell)$ has not previously been fed to $\phi'$, then feed $((e,\ell),\top)$ to $\phi'$, and accept $e$ iff $\phi'$ accepts $(e,\ell)$. 
  \item ``Hallucinate'' an element/label pair $(e,\ell)$ which has not yet been fed to $\phi'$, and feed $((e,\ell),\bot)$ to $\phi'$.
    \end{enumerate}
\end{itemize}

% \begin{itemize}
% \item On the arrival of $x_i=(e_i,\ell_i) \in \E \cross (\L \union \set{\bot})$, for $i=1$ to $n$:
%   \begin{itemize}
% %  \item If $\ell_i \neq \bot$ then define $z_i =(x_i,a_i)$ for some choice of $a_i \in \set{\top,\bot}$.
%   \item ``Hallucinate'' some string $y^i \in ((\E \cross \L) \cross \set{\top,\bot})^*$, and feed entries of $y^i$ one by one to $\phi'$, observing $\phi$'s acceptance/rejection decisions.
%     % \item If $\ell_i \neq \bot$, then optionally present $z_i=(x_i,a_i)$ to $\phi'$ for some choice of $a_i \in \set{\top,\bot}$, and accept $e_i$ if $\phi'$ accepts $x_i$, otherwise reject $e_i$. (Let $z_i$ denote the empty string if nothing is presented to $\phi'$ in this step)
%   \item If $\ell_i \neq \bot$ and $x_i$ has not yet been fed to $\phi'$, then feed $z_i=(x_i,\top)$ to $\phi'$, and accept $e_i$ if $\phi'$ accepts $x_i$, otherwise reject $e_i$. (Let $z_i$ denote the empty string if nothing is fed to $\phi'$ in this step)
 
%   \end{itemize}
% \item ``Hallucinate'' $y^{n+1}\in (\E \cross \L \cross \set{\top,\bot})^*$ and feed its entries one by one to $\phi'$.\footnote{This last step is not strictly necessary, since we no longer require any input from $\phi'$ for decision making.}
% \end{itemize}

 Notice that, in each iteration, the choice to do (i) or (ii), and the choice of ``hallucination'' $(e,\ell)$ in (ii), can depend on previously observed entries of $x$, on previous acceptance/rejection decisions of $\phi'$, and on previous ``hallucinations''. These choices may also be randomized. Let $y$ denote the string fed to $\phi'$ through the course of the reduction, and let $\pi'$ denote the sequence of element/label pairs appearing in $y$.

For an instantiation of the above template to serve as an approximation preserving reduction (up to a constant) from the labeled problem to its unlabeled counterpart in the online random order model, the following properties appear needed.

\begin{enumerate}[label=(\alph*)]
%\item With constant probability, over input strings $x$ and any internal randomization, $y$ should be a valid input string for the unlabeled problem. Specifically, each element/label pair should appear once in $y$.
\item Condition on the labeled set $(R,L)$, and assume that the order $\pi \in \E!$ of elements in $x$ is uniformly distributed (as is guaranteed by the random order model for the labeled problem). The order $\pi' \in (\E \times \L)!$ of element/label pairs in $y$ should be uniformly distributed (as is required by the random order model for the unlabeled problem) or approximately so (say, in terms of total variation distance).
\item In the event that $(e,\ell)$ is an entry of $x$ (i.e., $e$ is active with label $\ell$),  it should hold with constant probability that $((e,\ell), \top)$ is an entry of $y$ (i.e., $(e,\ell)$ is active in the corresponding unlabeled instance). This requires that $(e,\ell)$ is not ``hallucinated'' before it arrives in $x$.
\end{enumerate}

Trivial insantiations of our template satisfy one of (a) or (b), but satisfying both (a) and (b) simultaneously appears impossible.  To illustrate the difficulty, consider the special case where the number of active elements $|R|$ is known in advance. Arguably the most natural instantiation of our template in this special case, and one which at first glance appears promising, is as follows. %Maintain the pool $\P$ of element/label pairs not yet fed to $\phi'$, where $\P$ is initialized to $\E \times \L$ at the outset.
In each iteration, with $r$ active entries of $x$ remaining and $k$ element/label pairs not yet fed to $\phi'$, we choose (i) with probability $p=p(r,k)=\frac{r}{k}$ and choose (ii) otherwise. When (ii) is chosen, we let $(e,\ell)$ be a uniformly random draw from the $k$ remaining element/label pairs. The probability $p$ is chosen to reflect the proportion of active to inactive element/label pairs.

It is not too difficult to verify that (b) is satisfied for this reduction. However, it can be shown that the permutation $\pi'$ is not uniformly distributed after conditioning on $(R,L)$. To see this, consider an element $e \in R$ with $L(e) = \ell$. The probability that $(e,\ell)$ is the first element/label pair appearing in $y$ is given by
\begin{align*}
 % \Pr[\mbox{$e$ is the first active element in $x$} ] \Pr[\mbox{We select (i) in the first iteration}] +
  \frac{|R|}{mn} \cdot \frac{1}{|R|} + \frac{mn - |R|}{mn} \cdot \frac{1}{mn},
\end{align*}
where the first term corresponds to the event that (i) is chosen and $(e,\ell)$ is the first active element/label pair in $x$, and the second term corresponds to the event that (ii) is chosen and $(e,\ell)$ is hallucinated. Since $|R| \leq n$, this expression is at least $(2-\frac{1}{m}) \cdot \frac{1}{mn}$. When the number of labels $m$ is large, this is almost twice the probability that $(e,\ell)$ would appear first in a uniformly random permutation on element/label pairs! In other words, an active element/label pair is almost twice as likely to appear early in $\pi'$ than an inactive element/label pair, rendering $\pi'$ far from uniformly distributed. In fact, we can show that the total variation distance between $\pi'$ and the uniform distribution tends to $1$ as $m$ grows large, violating (a).

One might hope that different choices of $p(r,k)$ , coupled with a different rule for choosing the hallucinated element/label pair in (ii), might remedy this failure. However, some examination suggests that such approaches are unlikely to succeed. The difficulty, intuitively, is the following: when hallucinating inactive element/label pairs early in the sequence $y$, we must do so without knowledge of which active elements/label pairs appear later in $x$, and this is due to the online nature of the reduction. This gives active element/label pairs in $x$ a ``greater than fair'' shot at appearing early in the sequence $y$ (violating (a)), unless one is content with ``ignoring'' entries of $x$ with high probability (which results in violating (b)). Therefore, there is a tension between requirements (a) and (b).

These difficulties appear intrinsic to online reductions from the labeled problem on $(\M,\L)$ to the unlabeled problem on $\M \otimes \L$,  leaving little hope for preserving the balance ratio with such a direct approach. A new idea appears to be needed. %A A different (indirect) approach appears needed.

\subsection{An Indirect Approach: Duplicating the labels}

We overcome these difficulties by reducing labeled contention resolution on $\M$ and $\L$ to unlabeled contention resolution on a much larger matroid than $\M \otimes \L$. Specifically, we  ``duplicate'' each label a large number of times, creating many ``identical copies'' of each element/label pair. We associate an active element/label pair in $x$ with one of its copies uniformly at random, leaving all other copies inactive. Roughly speaking, a random permutation $\pi'$ of the duplicated element/label pairs converges in probability to a limiting permutation as the number of copies grows large, modulo the symmetry between copies. An active element/label pair from $x$  is now merely a drop in a sea of its inactive brethren, and therefore interleaving $x$ into $\pi'$ has little influence on the probability distribution of $\pi'$.

% Formally, we reduce labeled contention resolution on $\M$ and $\L$ to unlabeled contention resolution on a much larger matroid than $\M \otimes \L$. Specifically
Formally, we duplicate each label in $\L$ a large number $K$ of times to form an expanded set of labels $\L \times \C$,  where $\C$ is an abstract set for indexing copies with $|\C|=K$. We then reduce labeled contention resolution on $\M$ and $\L$  to unlabeled contention resolution on the matroid $\M \otimes (\L \times \C) = \M \otimes \L \otimes \C$. For $\ell \in \L$ and $c \in \C$, we say the pair $(\ell,c)$ is a \emph{copy} of label $\ell$. 
We also  say that an element $(e,\ell,c) \in \E \times \L \times \C$   of $\M \otimes \L \otimes \C$  is a \emph{copy} of $(e,\ell)$.

An offline version of our reduction maps an active set in $\M=(\E,\I)$ with labels in $\L$ to an (unlabeled) active set of $\M \otimes \L \otimes \C$ by selecting a copy of each label uniformly at random. Specifically, a labeled set of active elements $(R,L)$ is mapped to the (unlabeled) set  $R \odot L \odot C = \set{ (e,L(e),C(e)) : e \in R}$ of elements of the matroid $\M \otimes \L \otimes C$, where $C(e) \in \C$ is chosen independently and uniformly at random for each $e \in R$.\footnote{For convenience, we sometimes think of $C$ as a  function from $\E$ to $\C$, with the understanding that the restriction of $C$ to $R$, which we denote by $C|_R$, is all that is relevant for defining $R \odot L \odot C$.} It is easy to verify that if the random labeled set $(R,L)$ is $\alpha$-uncontentious (in the offline sense, of course), so is the random unlabeled set $R \odot L \odot C$.

\begin{observation}\label{obs:unc}
  If $(R,L)$ is an $\alpha$-uncontentious labeled set for $\M=(\E,\I)$ and $\L$, and $C: \E \to \C$ is chosen uniformly at random independently of $(R,L)$, then $R \odot L \odot C$ is an $\alpha$-uncontentious (unlabeled) set for $\M \otimes \L \otimes \C$.
\end{observation}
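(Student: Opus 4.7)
The plan is to directly construct an $\alpha$-balanced offline CRM for $R \odot L \odot C$ on $\M \otimes \L \otimes \C$, starting from an $\alpha$-balanced offline LCRM for $(R,L)$ on $(\M,\L)$. The key observation is that, since $C$ is independent of $(R,L)$, the copy index simply adds an independent ``tag'' that factors out cleanly in the balance calculation.

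First, I would invoke the equivalence between labeled and unlabeled contention resolution on the tensor product, which was established just before the statement of the observation: because $(R,L)$ is $\alpha$-uncontentious as a labeled set for $(\M,\L)$, the corresponding unlabeled set $R \odot L$ is $\alpha$-uncontentious on $\M \otimes \L$. Fix an $\alpha$-balanced offline CRM $\phi$ for the distribution of $R \odot L$ on $\M \otimes \L$.

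Next, I would define a CRM $\phi'$ on $\M \otimes \L \otimes \C$ as follows. On input $S' \subseteq \E \times \L \times \C$, let $\pi(S') = \{(e,\ell) : \exists\, c, (e,\ell,c) \in S'\}$, invoke $\phi$ on $\pi(S')$ to obtain $T \subseteq \pi(S')$, and output $T' = \{(e,\ell,c) \in S' : (e,\ell) \in T\}$. Observe that $\pi(R \odot L \odot C) = R \odot L$, so $\phi$ receives exactly the distribution it was designed for. To check that $T'$ is independent in $\M \otimes \L \otimes \C$, note that in the support of $R \odot L \odot C$ each element $e$ contributes at most one triple $(e, L(e), C(e))$; so $T$ being of the form $S \odot L'$ for some $S \in \I$ and $L' : S \to \L$ implies $T' = \{(e, L(e), C(e)) : e \in S\}$ is of the form $S \odot L''$ with $L''(e) = (L(e), C(e)) \in \L \times \C$, hence independent in $\M \otimes \L \otimes \C$. (Outside the support we may define $\phi'$ arbitrarily.)

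The verification of balance is then a one-line calculation: for each $(e,\ell,c)$,
\[
\Pr[(e,\ell,c) \in \phi'(R \odot L \odot C)] = \Pr[(e,\ell) \in \phi(R \odot L)] \cdot \Pr[C(e) = c] \geq \alpha \Pr[(e,\ell) \in R \odot L] \cdot \tfrac{1}{K} = \alpha \Pr[(e,\ell,c) \in R \odot L \odot C],
\]
where the factorization uses independence of $C$ from $(R,L)$ and from the internal randomness of $\phi$, and the inequality uses $\alpha$-balance of $\phi$. There is no real obstacle here; the only ``content'' is the independence of $C$, which is precisely what allows the $\tfrac{1}{K}$ factor from the copy distribution to appear on both sides of the balance inequality, matching perfectly.
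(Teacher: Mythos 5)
Your proposal is correct and takes essentially the same route as the paper: both pass through the $\alpha$-balanced CRM for $R \odot L$ on $\M \otimes \L$ and then lift it to $\M \otimes \L \otimes \C$ by ignoring (projecting out and then restoring) the copy index, with independence of $C$ from $(R,L)$ giving the clean $\tfrac{1}{K}$ factorization on both sides. You simply make explicit the feasibility check and the balance computation that the paper leaves implicit.
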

\begin{proof}
  First, we can interpret an $\alpha$-balanced offline LCRM for $(R,L)$ as an offline CRM for $R \odot L$ in the matroid $\M \otimes \L$. Second, we can interpret that latter as an offline CRM for $R \odot L \odot C$ in the matroid $\M \otimes \L \otimes \C$ --- in particular, one which ignores the index $C(e)$ for each element $(e,L(e),C(e))$. Since the indices $\set{C(e)}_{e \in \E}$ are independent of $R$ and $L$, it follows that that the balance ratio is preserved.
\end{proof}

In the online random order model, our reduction approximates  the above-described map $(R,L) \to R \odot L \odot C$. Even more importantly, if the elements $e \in \E$ are presented to our reduction in uniformly random order (each tagged with its label $L(e)$ if $e \in R$, or $\bot$ otherwise), then its output is (approximately) a uniformly random permutation of $\E \cross \L \cross \C$, with each $(e,\ell,c)$ tagged with $\top$ if $e \in R$, $L(e) = \ell$, and $C(e) = c$, and with $\bot$ otherwise. The error in both these approximations (the active set itself and the permutation), as measured in total variation distance, tends to~$0$ as the number of copies $K$ of each label grows large. The reduction is summarized in Algorithm~\ref{alg:lcr_reduction}.

\begin{alg}
  \PARAMETER Matroid $\M=(\E,\I)$ with $n$ elements, and a set $\L$ of $m$ labels.
  \PARAMETER Abstract index set $\C$ with $|\C|=K$
  \PARAMETER Oracle access to an online (unlabeled) CRM $\phi'$ for $M \otimes \L \otimes \C$
  \INPUT String $x=(e_1,\ell_1),\ldots,(e_n,\ell_n)$ given online, where $\pi=(e_1,\ldots,e_n)$ is a permutation of $\E$, and $\ell_i \in \L \union \set{\bot}$.
  \STATE Let $\pi'=\pi'(1), \ldots, \pi'(nmk)$ be a uniformly-random permutation of $\E \cross \L \cross \C$
  \STATE Draw $n$ integers i.i.d. from the uniform distribution on $[K]=\set{1, \ldots, K}$. Sort these integers in non-decreasing order $k_1 \leq k_2 \leq \ldots \leq k_n$.
  \STATE Let $i'=1$ be the current position in $\pi'$
  \FOR[Receive and process the $i$th online input $x_i=(e_i,\ell_i)$]{$i=1$  \TO $n$}
  \STATE Read the next online input $x_i = (e_i,\ell_i)$
  \IF[$e_i$ is inactive]{$\ell_i = \bot$} 
  \STATE Do nothing
  \ELSIF{There are at least $k_i$ copies of $(e_i,\ell_i)$ among $\pi'(1),\ldots,\pi'(i'-1)$}
  \STATE{\bf FAIL}
  \COMMENT{We already ``missed'' the $k_i$th copy of $(e_i,\ell_i)$}
  \ELSE[Skip ahead to the $k_i$th copy of $(e_i,\ell_i)$]
  \WHILE{$\pi'(i')$ is not the $k_i$th copy of $(e_i,\ell_i)$ seen so far in $\pi'$}  
  \STATE Feed $(\pi'(i'),\bot)$ to $\phi'$ as its next online input.
  \STATE Increment $i'$
  \ENDWHILE
  \COMMENT{$\pi'(i')$ is $k_i$th copy of $(e_i,\ell_i)$ in the ordered list $\pi'$}
  \STATE Feed $(\pi'(i'),\top)$ to $\phi'$ as its next online input, and {\bf ACCEPT} $e_i$ if $\phi'$ accepts $\pi'(i')$, otherwise {\bf REJECT} $e_i$.
  \STATE Increment $i'$
  \ENDIF
  \ENDFOR
  \WHILE[Complete the execution of $\phi'$ (may be omitted)]{$i' \leq nmk$}
  \STATE Feed $(\pi'(i'),\bot)$ to $\phi'$ as its next online input.
  \STATE Increment $i'$
  \ENDWHILE
  % \IF[Process next online element $e_i$, which is active]{$s_j=1$ and $\ell_i \neq \bot$} 
  % \STATE Draw, without replacement, a uniformly element $(e,\ell,k)$ from $P$ with $e=e_i$ and $\ell=\ell_i$. If there are no such elements in $P$, {\bf FAIL}.
  % \STATE Feed $(e,\ell,k,\top)$ to $\phi'$ as its next online input, and {\bf ACCEPT} $e_i$ if $\phi'$ accepts $(e,\ell,k)$, otherwise {\bf REJECT} $e_i$.
  % \STATE Increment $i$ 
  % \ELSIF[Process next online element $e_i$, which is inactive]{$s_j=1$ and $\ell_i = \bot$}
  % \STATE Draw, without replacement, a uniformly element $(e,\ell,k)$ from $P$ with $e=e_i$. If there are no such elements in $P$, {\bf FAIL}.
  % \STATE Feed $(e,\ell,k,\bot)$ to $\phi'$ as its next online input.
  % \STATE {\bf REJECT} $e_i$.
  % \STATE Increment $i$
  % \ELSE[Hallucinate an inactive element/label for input to $\phi'$]
  % \STATE Draw, without replacement, a uniformly random element $(e,\ell,k)$ from $P$
  % \STATE Feed $(e,\ell,k, \bot)$ to $\phi'$ as its next online input.
  % \ENDIF
  % \ENDFOR
  \caption{Reduction from labeled to unlabeled online contention resolution.}
  \label{alg:lcr_reduction}
\end{alg}

 Algorithm~\ref{alg:lcr_reduction} is an online LCRM for the matroid $\M$ with labels $\L$, which uses an online CRM $\phi'$ for $\M \otimes \L \otimes \C$ as a subroutine. At iteration $i$, the algorithm is presented with  $x_i=(e_i,\ell_i)$, where $\ell_i$ is either a label (if $e_i$ is active) or $\bot$ (if $e_i$ is inactive), and in the former case must decide ``on the spot'' whether to accept $e_i$. To guide these decisions, the algorithm runs a parallel execution of the CRM $\phi'$, and feeds the elements of $\M \otimes \L \otimes \C$ (with each labeled as active or inactive) to $\phi'$ in a uniformly random order $\pi'$.  For each active $e_i$ in the input string $x$, the algorithm (tries to) activate the $k_i$th copy of $(e_i,\ell_i)$ in the order of appearance in $\pi'$, where $\vec{k}=(k_1,\ldots,k_n)$ are $n$ i.i.d. uniform samples from $[K]$ ordered in non-decreasing order. The algorithm accepts $e_i$ if the corresponding activated copy of $(e_i,\ell_i)$ is accepted by $\phi'$.  To enable online acceptance/rejection decisions, we do the following: In each iteration  $i$ where $e_i$ is active, the algorithm ``skips ahead'' in $\pi'$ --- feeding skipped over elements as inactive to $\phi'$ --- until the desired $k_i$th copy of $(e_i,\ell_i)$ is reached, at which point this copy of $(e_i,\ell_i)$ is fed to $\phi'$ as active. The algorithm can fail when it ``skips over'' an element of $\M \otimes \L \otimes \C$ which we later realize should have been activated.

The following sequence of sublemmata lead to a proof of Lemma~\ref{lem:lcr_to_cr}.

\begin{sublemma}\label{slem:fail}
  For each input string $x$, the probability that Algorithm \ref{alg:lcr_reduction} {\bf FAIL}s tends to $0$ as $K \to \infty$. 
\end{sublemma}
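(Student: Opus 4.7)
My plan is to recast the failure event as a purely order-theoretic condition on certain positions in $\pi'$, then use concentration of order statistics to show that this condition holds with probability tending to $1$ as $K \to \infty$. Let $i_1 < \ldots < i_t$ index the active positions in $x$, write $(e^*_j, \ell^*_j) := (e_{i_j}, \ell_{i_j})$ and $k^*_j := k_{i_j}$ (noting $k^*_1 \leq \ldots \leq k^*_t$ since the $k_i$'s were pre-sorted), and let $P_j$ denote the position in $\pi'$ of the $k^*_j$-th copy of $(e^*_j,\ell^*_j)$. An easy induction on $j$ shows that Algorithm~\ref{alg:lcr_reduction}'s pointer $i'$ equals $P_{j-1}+1$ just before processing the $j$-th active input, so the failure test triggers iff $P_j \leq P_{j-1}$. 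Hence success is precisely the event $P_1 < P_2 < \ldots < P_t$.

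Next I would concentrate each $P_j$. For any fixed $(e,\ell)$, the positions of the $K$ copies of $(e,\ell)$ in the uniformly random $\pi'$ form a uniform random $K$-subset of $[nmK]$, so the $k$-th smallest among them is a negative-hypergeometric random variable with mean $\approx k \cdot nm$ and variance $O((nm)^2 K)$. Chebyshev's inequality (conditional on $k_1, \ldots, k_n$) then yields, for any $\delta > 0$ and any $j$,
\[
\Pr\!\left[\left|\frac{P_j}{nmK} - \frac{k^*_j}{K}\right| > \delta \;\Big|\; k_1,\ldots,k_n \right] \;=\; O\!\left(\delta^{-2} K^{-1}\right).
\]
Separately, since $k_1, \ldots, k_n$ are i.i.d.\ uniform on $[K]$, a union bound over the $O(n^2)$ pairs shows that all pairwise differences exceed any fixed $\Delta$ with probability $\geq 1 - O(n^2 \Delta / K)$. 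Taking $\Delta := K^{2/3}$ and $\delta := \tfrac{1}{3} K^{-1/3}$, and union-bounding the Chebyshev estimates over the at most $n$ indices $j$, both failure probabilities are $O(n^2 / K^{1/3}) \to 0$. Under the complementary event, every $P_j/(nmK)$ lies within $\delta$ of $k^*_j/K$ while the gaps $k^*_{j+1}/K - k^*_j/K$ exceed $K^{-1/3} > 2\delta$; hence $P_1 < P_2 < \ldots < P_t$ and the algorithm does not fail.

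The main subtlety I anticipate is that the positions $P_1, \ldots, P_t$ live in the same permutation $\pi'$ and are \emph{not} independent in any useful sense. My plan sidesteps this by using only the marginal distribution of each $P_j$ (via a coordinate-wise Chebyshev bound) and combining via a union bound over the at most $n$ indices $j$, so no joint independence claim across the $P_j$'s is ever needed. A secondary bookkeeping point --- that the tiny $O(1/K)$ bias between the negative-hypergeometric mean and $k^*_j \cdot nm$ is absorbed into the Chebyshev slack --- is easily handled since $1/K \ll \delta = K^{-1/3}/3$ for large $K$.
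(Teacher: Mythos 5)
Your proof is correct, and its high-level structure matches the paper's: both arguments first establish that the sorted values $k_1 \leq \cdots \leq k_n$ are pairwise well-separated with probability $1 - O(n^2 K^{-\Omega(1)})$ (the paper uses gap $K^{0.75}$, you use $K^{2/3}$), and then show that this separation forces the corresponding positions in $\pi'$ to appear in the right order. Where the two diverge is the concentration step. The paper, for each pair $j<i$ of active indices, discards everything except the $2K$ copies of $(e_i,\ell_i)$ and $(e_j,\ell_j)$, models them as randomly ordered red and blue balls, and applies the Hoeffding bound for sampling without replacement to every prefix of length $m \leq 2K$, together with a union bound over prefix lengths. You instead concentrate each absolute position $P_j$ directly via Chebyshev applied to the negative-hypergeometric order statistic. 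Your route avoids both the pairwise coupling and the union bound over prefix lengths, at the cost of a polynomial rather than exponential tail --- immaterial here since only $o(1)$ as $K \to \infty$ is required. Your reformulation of the failure event as the clean order-theoretic condition $\neg (P_1 < \cdots < P_t)$, via the induction on the pointer $i'$, is a nice way to set this up, and you correctly anticipate and dispose of the two potential pitfalls (the $P_j$'s are dependent, and the negative-hypergeometric mean is $k^*_j(nmK+1)/(K+1)$ rather than exactly $k^*_j \cdot nm$).
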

\begin{proof}
  In iteration $i$, the algorithm skips through $\pi'$ until it finds the $k_i$th copy of $(e_i,\ell_i)$ in $\pi'$. It fails when that copy has already been passed over, in an earlier iteration $j<i$ while searching for the $k_{j}$th copy of $(e_{j}, \ell_{j})$. In particular, for the algorithm to fail it must be that there are $j<i$ such that at least $k_{i}$ copies of $(e_i,\ell_i)$ precede the $k_j$th copy $(e_j,\ell_j)$. We will show that this is a low-probability event.
  % In particular, for the algorithm to fail it must be the case that there are $(e,\ell)$ and $(e',\ell')$ in $\E \cross \L$ such that, for some $1 \leq j<i \leq K$, at least $k_{i} \geq k_{j+1}$ copies of $(e',\ell)'$ precede the $k_{j}$th copy of $(e,\ell)$.

  First, we show that $k_i - k_j \geq K^{0.75}$, simultaneously for all $1\leq j<i \leq n$, with high probability at least $1- \frac{n^2}{K^{0.25}}$. By definition, this is equivalent to showing that $n$ i.i.d. samples from the uniform distribution on $[K]$ are pairwise separated by at least $K^{0.75}$ with the claimed probability. The probability that the $(m+1)$st sample is at least $K^{0.75}$ away from the first  $m$ samples is at least $1-2m \frac{K^{0.75}}{K} = 1- 2m K^{-0.25}$, so we get
  \begin{align*}
    \Pr[\forall i \  k_{i+1} - k_i \geq K^{0.75}] &\geq \prod_{m=0}^{n-1} (1-2 m K^{-0.25})\\
                                                  &\geq 1- \sum_{m=0}^{n-1} 2 m K^{-0.25} \\
                                                  &= 1- 2K^{-0.25} \sum_{m=0}^{n-1} m \\
                                                  &\geq 1-n^2 K^{-0.25}
  \end{align*}

  Next, for $j<i$ we bound the probability that at least $k_j + K^{0.75}$ --- with high probability a lower bound on $k_i$ --- copies of $(e_i,\ell_i)$ precede the $k_j$th copy of $(e_j,\ell_j)$. Consider $K$ red balls and $K$ blue balls ordered uniformly at random, with red balls corresponding to copies of $(e_i,\ell_i)$ and blue balls corresponding to copies of $(e_j,\ell_j)$. It suffices to upperbound the probability that, for any prefix of the randomly ordered balls, the number of red balls exceeds the number of blue balls by more than $K^{0.75}$. For the first $m$ balls in the random order, we use the Hoeffding bound for sampling without replacement (see \cite{hoeffding}) to get a probability upperbound of $\exp(-\frac{2K^{1.5}}{m}) \leq \exp(-\frac{2K^{1.5}}{2K}) = \exp(-\sqrt{K})$. Taking the union bound over all $m=1,\ldots,2K$, we get a bound of $\frac{2K}{e^{\sqrt{k}}}$.

  Using the union bound, we conclude that the probability of failure is at most \[\frac{n^2}{K^{0.25}} + \sum_{j<i} \frac{2K}{e^{\sqrt{k}}} \leq n^2 \left(\frac{1}{K^{0.25}} + \frac{2K}{e^{\sqrt{K}}}\right),\]
  which tends to $0$ as $K \to \infty$.
\end{proof}

\begin{sublemma}\label{slem:lcr_to_cr2}
  Let $(R,L)$ be a random labeled set for an $n$-element matroid $\M=(\E,\I)$ and labels $\L$, let $\tilde{C}:\E \to \C$ be chosen uniformly at random, let $\pi$ be a uniformly random permutation of $\E$, and let $\tilde{\pi}$ be a uniformly random permutation of $\E \cross \L \cross \C$, with all four mutually independent.  Consider running Algorithm~\ref{alg:lcr_reduction} on the (random) input string $x= x(R,L,\pi)$ (see Equation~\eqref{eq:x}), and let $y'$ be the string of inputs passed to $\phi'$. After conditioning on Algorithm~\ref{alg:lcr_reduction} not {\bf FAIL}ing, the total variation distance between $y'$ and  $y(R \odot L \odot \tilde{C},\tilde{\pi})$ (see Equation~\ref{eq:y}) tends to $0$ as $K \to \infty$.
\end{sublemma}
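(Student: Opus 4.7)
The plan is to show that, unconditionally (before conditioning on the non-failure event $E$), the joint distribution of the permutation $\pi'$ and the active set $A' \sse \E \cross \L \cross \C$ that Algorithm~\ref{alg:lcr_reduction} implicitly produces matches the target joint distribution of $(\tilde{\pi}, R \odot L \odot \tilde{C})$ \emph{exactly}. Combined with the observation that, on the event $E$, the output $y'$ coincides with the string $\hat{y}' := y(A', \pi')$ determined by $(\pi', A')$ in the sense of~\eqref{eq:y}, this will yield the claimed total variation bound via a standard conditioning inequality.

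The main technical step is the unconditional match. For each $(e,\ell) \in \E \cross \L$, the ordering of the $K$ positions in $\pi'$ occupied by copies of $(e,\ell)$ induces a bijection $\sigma_{e,\ell}: [K] \to \C$; since $\pi'$ is uniform, the collection $\set{\sigma_{e,\ell}}$ consists of mutually independent uniform bijections, conditional on the partition of positions into $(e,\ell)$-blocks. With this notation, $A' = \set{(e, L(e), \sigma_{e, L(e)}(k_{i(e)})) : e \in R}$ where $i(e) = \pi^{-1}(e)$. The crux is to show that $(k_{i(e)})_{e \in R}$ is distributed \emph{exactly} as $|R|$ iid uniforms on $[K]$. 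Writing $\vec{u}$ for the unsorted iid uniforms drawn by the algorithm and $\tau$ for the sorting permutation (so $k_j = u_{\tau(j)}$), we have $k_{i(e)} = u_{\tau(\pi^{-1}(e))}$. Conditional on $\vec{u}$, the composition $\tau \circ \pi^{-1}$ is uniform on $S_n$ (since $\pi$ is uniform and independent of $\vec{u}$), so unconditionally it is independent of $\vec{u}$ and uniform; its restriction to $R$ is then a uniform random injection $\rho: R \hookrightarrow [n]$ independent of $\vec{u}$, and exchangeability of $\vec{u}$ gives that $(u_{\rho(e)})_{e \in R}$ is iid uniform on $[K]$. This argument is oblivious to $\pi'$, so it also holds conditional on $\pi'$. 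Consequently, conditional on $\pi'$, each $c_{i(e)} := \sigma_{e, L(e)}(k_{i(e)})$ is a fixed bijection applied to an independent uniform input, yielding $(c_{i(e)})_{e \in R}$ iid uniform on $\C$ and matching the distribution of $(\tilde{C}(e))_{e \in R}$. Hence $A' \mid \pi'$ has the same distribution as $R \odot L \odot \tilde{C}$, which does not depend on $\pi'$, so $A'$ is independent of $\pi'$ and $(\pi', A') \sim (\tilde{\pi}, R \odot L \odot \tilde{C})$ exactly. In particular, $\hat{y}' \sim y(R \odot L \odot \tilde{C}, \tilde{\pi})$.

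To finish, note that when the algorithm does not \textbf{FAIL} it feeds element/label pairs to $\phi'$ strictly in the order of $\pi'$, marking position $\pi'(j)$ as active if and only if $\pi'(j) \in A'$; thus $y' \mathbf{1}_E = \hat{y}' \mathbf{1}_E$ pointwise. The elementary inequality $\mathrm{TV}(\mathrm{Law}(X \mid E),\ \mathrm{Law}(X)) \leq \Pr[E^c]$ applied to $X = \hat{y}'$ then gives
\[
\mathrm{TV}\bigl(\mathrm{Law}(y' \mid E),\ \mathrm{Law}(y(R \odot L \odot \tilde{C}, \tilde{\pi}))\bigr) \;=\; \mathrm{TV}\bigl(\mathrm{Law}(\hat{y}' \mid E),\ \mathrm{Law}(\hat{y}')\bigr) \;\leq\; \Pr[\textbf{FAIL}],
\]
which tends to $0$ as $K \to \infty$ by Sublemma~\ref{slem:fail}.

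The main obstacle is the sort-then-reindex argument: sorting manifestly destroys independence among the entries of $\vec{k}$, but composing the sort permutation $\tau$ with the independent uniform $\pi$ restores a uniform permutation of $[n]$ that is independent of $\vec{u}$, and exchangeability then finishes the job. Everything else is routine bookkeeping and a textbook conditioning bound.
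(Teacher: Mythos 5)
Your proof is correct and follows essentially the same approach as the paper: show that, conditioned on $\pi'$ and $(R,L)$, the activated copy indices are iid uniform on $\C$ because the uniform $\pi$ re-randomizes the sorted sequence $\vec{k}$, conclude the joint law of $(\pi', A')$ matches the target exactly, and then close the total variation gap using the vanishing failure probability. Your treatment of the sorting step (explicitly factoring $k_i = u_{\tau(i)}$ and observing $\tau\circ\pi^{-1}$ is uniform and independent of $\vec{u}$) makes precise what the paper compresses into the phrase ``uniformly random perfect matching of $\E$ to $\set{k_i}_{i=1}^n$,'' but the underlying reasoning is identical.
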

\begin{proof}
  Let $x=x(R,L,\pi)=(e_1,\ell_1),\ldots(e_n,\ell_n)$, and recall that  $\ell_i = \bot$ if $e_i \not\in R$, and $\ell_i = L(e) \in \L$ if $e_i \in R$. When the algorithm succeeds, it feeds the elements $\E \times \L \times \C$ to $\phi'$ in the order $\pi'$, and for each $e \in R$ it designates precisely one copy of $(e,L(e))$ as active --- namely, the $k_{\pi^{-1}(e)}$th copy of $(e,L(e))$ appearing in $\pi'$. We use  $C'(e) \in \C$ to denote the index of this $k_{\pi^{-1}(e)}$th copy of $(e, L(e))$, and note that $C': R \to \C$ is a function that depends on $(R,L)$, $\pi'$, $\pi$, and $\vec{k}$. In summary, when the algorithm succeeds we have $y'=y(R \odot L \odot C' , \pi')$.
%$ C(\pi', \pi, \vec{k}; e)$ to denote the index of the activated copy of $(e, L(e))$. 

  We now condition on $(R,L)$ and $\pi'$, and show (conditionally) that $C'$ is a uniformly random function from $R$ to $\C$. Since $\pi$ is a uniformly random order on $\E$, it follows that the map $e \to k_{\pi^{-1}(e)} $ is a uniformly random perfect matching of $\E$ to $\set{k_i}_{i=1}^n$. Since $\set{k_i}_{i=1}^n$ consists of $n$ i.i.d. draws from $[K]$, we conclude that $(k_{\pi^{-1}(e)})_{e \in \E}$ are i.i.d. draws uniformly from $[K]$. In other words, for each element $e \in R$ we independently activate a copy of $(e,L(e))$  uniformly at random --- in particular the $k_{\pi^{-1}(e)}$th copy in order of appearance in $\pi'$, where $k_{\pi^{-1}(e)} \sim [K]$. It follows that $(C'(e))_{e \in R}$ are i.i.d. uniform draws from $\C$, as needed.

  Since $\pi'$ is a uniformly random permutation of $\E \cross \L \cross \C$ independent of $(R,L)$, and $C': R \to \C$ is uniformly random for each realization of $\pi'$ and $(R,L)$, it follows that $(R,L, C', \pi') \sim (R,L,\tilde{C}|_{R}, \tilde{\pi})$. Recall that $y'=y(R \odot L \odot C' , \pi')$ when the algorithm succeeds. Since the probability of failure tends to $0$ as $K \to \infty$ (Sublemma~\ref{slem:fail}), we conclude that the total variation distance between $y'$ and $y(R \odot L \odot \tilde{C}|_R , \tilde{\pi})= y(R \odot L \odot \tilde{C} , \tilde{\pi})$ tends to $0$ as $K \to \infty$, as needed.
  \end{proof}

\begin{sublemma}\label{slem:lcr_to_cr3}
  Let $(R,L)$ be a random labeled set for matroid $\M$ and labels $\L$, and let $C:\E \to \C$ be chosen uniformly at random independent of $(R,L)$. If $\phi'$ is a $\beta$-balanced random-order CRM for the random set $R \odot L \odot C$ of elements of the matroid $\M \otimes \L \otimes \C$, then Algorithm~\ref{alg:lcr_reduction} instantiated with $\phi'$ is a $\tilde{\beta}(K)$-balanced random-order LCRM for $(R,L)$, where $\tilde{\beta}(K)$ converges to $\beta$ as $K \to \infty$.
\end{sublemma}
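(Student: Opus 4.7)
The plan is to establish the labeled balance ratio pointwise. I fix an arbitrary $e \in \E$ and $\ell \in \L$, let $\phi$ denote the LCRM that Algorithm~\ref{alg:lcr_reduction} implements when instantiated with $\phi'$, let $T(y)$ denote the (random) output of $\phi'$ on online input string $y$, and define the $\{0,1\}$-valued indicator
\[ X_y \;=\; \mathbf{1}\bigl[\,\exists\,c \in \C:\; (e,\ell,c) \in T(y)\,\bigr], \]
which marginalizes over \emph{which} copy of $(e,\ell)$ (if any) is accepted. I let $E$ denote the event that Algorithm~\ref{alg:lcr_reduction} does not {\bf FAIL}, and $y^* = y(R \odot L \odot \tilde{C}, \tilde{\pi})$ denote the ``ideal'' uniformly-random online input from Sublemma~\ref{slem:lcr_to_cr2}. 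In the failure case I adopt the thought experiment in which $\phi'$ is fed the full string $y'$ anyway, so that $T(y')$ is always well-defined.

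The argument would rest on two deterministic identities. First, because $T(y^*) \sse R \odot L \odot \tilde{C}$ and the latter contains at most one copy of $(e,\ell)$, I can write $X_{y^*} = \sum_{c \in \C} \mathbf{1}[(e,\ell,c) \in T(y^*)]$; summing the $\beta$-balance guarantee of $\phi'$ over $c$ yields
\[
\Pr[X_{y^*} = 1] \;=\; \sum_{c \in \C} \Pr[(e,\ell,c) \in T(y^*)] \;\geq\; \beta \sum_{c \in \C} \Pr[(e,\ell,c) \in R \odot L \odot \tilde{C}] \;=\; \beta \Pr[e \in R \land L(e) = \ell].
\]
Second, on event $E$ the string $y'$ satisfies $T(y') \sse R \odot L \odot C'$, whose only possible copy of $(e,\ell)$ is $(e,\ell,C'(e))$ (and only when $e \in R$ and $L(e)=\ell$); moreover, on $E$ Algorithm~\ref{alg:lcr_reduction} accepts $e$ iff $\phi'$ accepts exactly this copy, so $\Pr[e \in \phi(R,L) \land L(e)=\ell] \geq \Pr[X_{y'}=1 \land E]$.

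Next I would bridge $y'$ and $y^*$ via total variation. Sublemma~\ref{slem:lcr_to_cr2} bounds the conditional TV distance $d_{\mathrm{TV}}(y' \mid E,\; y^*) \leq \delta_1(K)$ with $\delta_1(K) \to 0$, and Sublemma~\ref{slem:fail} bounds $\Pr[\neg E] \leq \delta_2(K) \to 0$; convexity of TV under the $E/\neg E$ mixture then gives the unconditional bound $d_{\mathrm{TV}}(y', y^*) \leq \delta_1(K) + \delta_2(K)$. Since $X_y$ is a $[0,1]$-valued random variable depending only on $y$ and the internal randomness of $\phi'$ (which can be coupled identically across the two experiments), it follows that $|\Pr[X_{y'}=1] - \Pr[X_{y^*}=1]| \leq \delta_1(K) + \delta_2(K)$. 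Chaining with the two identities above, I obtain the additive guarantee
\[
\Pr[e \in \phi(R,L) \land L(e)=\ell] \;\geq\; \Pr[X_{y'}=1] - \Pr[\neg E] \;\geq\; \beta \Pr[e \in R \land L(e)=\ell] - \eta(K),
\]
where $\eta(K) = \delta_1(K) + 2\delta_2(K) \to 0$.

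Finally, I would convert this additive bound into the required multiplicative one. For pairs $(e,\ell)$ with $\Pr[e \in R \land L(e)=\ell] = 0$ the inequality is vacuous. Among the finitely many pairs with positive probability --- where it matters that the prior on $(R,L)$ is supported on finitely many elements and labels --- let $p_{\min} > 0$ be the smallest such probability; then $\tilde{\beta}(K) := \beta - \eta(K)/p_{\min}$ is a valid per-pair balance ratio that tends to $\beta$ as $K \to \infty$. The pivotal technical step, and where I expect the main obstacle to lie, is the sum-over-$c$ trick at the very start: a pointwise per-copy comparison would incur a factor of $K$ when passing through the failure probability (rendering $\tilde\beta(K)$ useless), whereas marginalizing to the $\{0,1\}$-valued indicator $X_y$ makes the TV-distance transfer of $\beta$-balancedness cost-free in $K$, closing the estimate.
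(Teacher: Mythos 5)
Your proof is correct and reaches the same conclusion using the same ingredients as the paper (Sublemmata~\ref{slem:fail} and~\ref{slem:lcr_to_cr2}, the sum-over-copies identity, and the observation that different copies of $(e,\ell)$ cannot be simultaneously accepted), but the order in which you apply them is genuinely cleaner. The paper first asserts a per-copy inequality $\Pr[(e,\ell,c)\in S']\geq\tilde\beta(K)\Pr[(e,\ell,c)\in R\odot L\odot C]$ and then sums over $c\in\C$. As you correctly anticipate, the per-copy probabilities are $\Theta(1/K)$, so a bare total-variation transfer from $y^*$ to $y'$ --- which yields an \emph{additive} error $\delta(K)\to 0$, not $o(1/K)$ (indeed, Sublemma~\ref{slem:fail} only gives a failure probability on the order of $n^2K^{-1/4}=\omega(1/K)$) --- would not by itself produce a per-copy $\tilde\beta(K)\to\beta$. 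The paper's per-copy step is rescued by the symmetry of the whole construction over $\C$, which makes all per-copy probabilities equal to $1/K$ times the aggregate and hence shrinks the TV error by the same $1/K$ factor; but this symmetry is used implicitly. Your version marginalizes over $c$ \emph{before} invoking the TV bound, so only the $\{0,1\}$-valued aggregate $X_y$ needs to be transferred, which costs nothing in $K$ and sidesteps the symmetry argument entirely. You also make the additive-to-multiplicative conversion explicit via $p_{\min}>0$; this is legitimate because $\tilde\beta(K)$ is permitted to depend on the (fixed, finitely supported) prior over $(R,L)$, which is all that Lemma~\ref{lem:lcr_to_cr} requires. The only loose end is your need to give $y'$ a meaning on the FAIL event so that $\Pr[X_{y'}=1]$ is unconditionally defined; your ``feed the full string anyway'' convention handles this, and one could equally work throughout with the conditional quantity $\Pr[X_{y'}=1\mid E]$ together with $\Pr[\neg E]$. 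Either way the argument closes.
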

\begin{proof}
  In the random order model, Algorithm~\ref{alg:lcr_reduction} applied to the random labeled set $(R,L)$ receives the string $x=x(R,L,\pi)=(e_1,\ell_1),\ldots,(e_n,\ell_n)$ as input, where $\pi$ is a uniform random order independent of $(R,L)$. Sublemmata \ref{slem:fail} and \ref{slem:lcr_to_cr2} imply that the input to the parallel execution of $\phi'$ tends to $y=y(R \odot L \odot C,\tilde{\pi})$ as $K \to \infty$, where $\tilde{\pi}$ is a uniformly random permutation of $\E \cross \L \cross \C$ independent of $(R,L)$ and $C$.

    Recall that  $e_i$ is accepted by the algorithm if and only if a copy of $(e_i,\ell_i)$ is accepted by the parallel execution of $\phi'$.  Let $S \sse R$ be the set of elements accepted by the algorithm. Similarly, let $S' \sse \E \cross \L \cross \C$ be the set of elements accepted by the parallel execution of $\phi'$. It follows that $e \in S$ and $L(e) = \ell$ if and only if $(e,\ell,c) \in S'$ for some $c \in \C$.
Since the input string to $\phi'$ tends to $y=y(R \odot L \odot C,\tilde{\pi})$ (in the sense of their total variation distance tending to $0$), and $\phi'$ is $\beta$-balanced for $R \odot L \odot C$ in the random order model, it follows that there is $\tilde{\beta}$ converging to $\beta$ such that
 \begin{align*}
   \Pr[(e,\ell,c) \in S'] &\geq \tilde{\beta} \Pr[(e,\ell,c) \in R \odot L \odot C] \\
                          &= \tilde{\beta} \Pr[e \in R \land L(e)= \ell \land C(e)=c] \\
                          &= \tilde{\beta} \Pr[e \in R \land L(e)= \ell] \cdot \frac{1}{K}.
\end{align*}
Now fix $e \in \E$ and $\ell \in \L$. Since different copies of $(e,\ell)$ are parallel in $\M \otimes \L \otimes \C$, and $\phi'$ accepts an independent set, it follows that the events $(e,\ell,c) \in S'$ are mutually exclusive. Therefore,
 \begin{align*}
   \Pr[e \in S \land L(e)= \ell] &= \Pr[ \exists c \in \C  \mbox{ s.t. }   (e,\ell,c) \in S']\\
                                 &= \sum_{c \in \C} \Pr[(e,\ell,c) \in S'] \\
                                 &\geq \sum_{c \in \C} \tilde{\beta} \Pr[e \in R \land L(e)= \ell] \cdot \frac{1}{K}\\
                                 &= \tilde{\beta} \Pr[e \in R \land L(e)= \ell],
 \end{align*}
as needed to show that the Algorithm~\ref{alg:lcr_reduction}, instantiated with $\phi'$, is a $\tilde{\beta}$-balanced LCRM for $(R,L)$ in the random order model.
\end{proof}

\noindent Lemma \ref{lem:lcr_to_cr} follows directly from Sublemma \ref{slem:lcr_to_cr3} and Observation \ref{obs:unc}.

%%% Local Variables:
%%% mode: latex
%%% TeX-master: "ucrs2"
%%% End:
 
\section{Conclusion}
In this paper, we built on our prior work in \cite{ucrs} to show  the matroid secretary problem equivalent to universal random-order contention resolution for matroids. It is worth noting that our result is information theoretic, pertaining to the power of online algorithms; i.e., we did not concern ourselves with computational efficiency of our reductions.\footnote{In fact, we did not even describe how distributions are represented as input to a contention resolution scheme --- a prerequisite for defining computational efficiency of such a scheme.}

Our result indicates that the main challenge of the matroid secretary conjecture is resolving contention in the presence of a particular form of positive correlation. Specifically, it suffices to resolve contention online for  uncontentious distributions, which admit the structure captured by the polyhedral characterization in Theorem~\ref{thm:characterize_uncontentious}. This structure --- which as noted in \cite{ucrs} is a natural generalization of the well-known \emph{matroid covering theorem}  --- might lend just enough tractability to enable progress on the conjecture.

Another conceptual takeaway from our result pertains to the importance of \emph{cardinal} information in the matroid secretary problem, as compared to just \emph{ordinal} information about the relative ordering of the weights. Ordinal algorithms for secretary problems were explored by \cite{secretary_ordinal,secretary_strong_ordinal}, though whether the \emph{ordinal matroid secretary problem} is fundamentally more difficult than its classical (cardinal) counterpart remains open. Whereas our result does not definitively answer this question, it does indicate that the ``hard part'' of the matroid secretary problem is fundamentally ordinal in nature. Indeed, contention resolution involves no weights at all, and the set of improving elements can be determined online using just ordinal information.\footnote{That said, our first component reduction in Section~\ref{sec:sec_to_psec} --- from the secretary problem to the prophet secretary problem --- does use cardinal information. This leaves open the possibility that the cardinal matroid secretary problem is strictly easier than its ordinal counterpart.}
 
%%% Local Variables:
%%% mode: latex
%%% TeX-master: "ucrs2"
%%% End:

\newpage

{%\small
\bibliography{ucrs2}
\bibliographystyle{abbrvnat}               %latex8
}

\newpage
\appendix
\section{Proof of Sublemma \ref{slem:normalize_discretize}}
\label{app:normalize_discretize}
%\begin{proof}
  Consider the matroid secretary problem on matroid $\M=(\E,\I)$ and arbitrary (unknown) weights $w \in \RRp^\E$. Denote $r^*=\rank(\M)$, $n=|\E|$,  $v^*=\rank_w(\M)$, and let $T^* \in \I$ be a maximum-weight independent set (i.e., with $w(T^*)=v^*$). Consider the algorithm which, with probability $\frac{1}{2}$,   runs the $\frac{1}{e}$-competitive algorithm for the single-choice secretary problem with weights $w$, and otherwise runs the following reduction to a normalized and discretized instance on a restriction of $\M$.
  \begin{itemize}
  \item Sample roughly half the elements: Let $k \sim \binom\left(n,\frac{1}{2}\right)$, and observe the weights of the first $k$ elements $S$ in the arrival order $\pi$, without accepting any.
  \item Let $r=\rank^\M(S)$ and $v=\rank_w^\M(S)$ be rank and weighted rank, respectively, of the sample.
  \item Let $\bar{S}=\E \sm S$ be the remaining (unsampled) elements.    
  \item Define transformed weights for the unsampled elements $e \in \bar{S}$ as follows:  $\hat{w}_e=0$ if $w_e < \frac{v}{32r}$, otherwise $\hat{w}_e$ is the result of rounding down $\frac{w_e}{v}$ to the nearest power of $2$.
  \item To select an independent subset of the remaining elements $\bar{S}$, invoke a matroid secretary algorithm for the remaining matroid $\M | \bar{S}$ with weights $\hat{w}$.
  \end{itemize}

  It is clear that the elements in $\bar{S}$ arrive in uniformly random order after $S$. It is also clear that the transformed weights $\set{\hat{w}_e}_{e \in \bar{S}}$ can be computed online from the original weights~$\set{w_e}_{e \in \bar{S}}$, as well as the rank $r$ and weighted rank $v$ of the sample. It follows that, for each realization of the random sample $S$, this is indeed a valid reduction to the matroid secretary problem on $\M | \bar{S}$ and $\hat{w}$. The following relationship between the original and transformed weights is easy to see, and will be useful for the remainder of this proof.
  \begin{equation}
    \label{eq:ww}
   \frac{w_e}{2v} - \frac{1}{64r}=\frac{w_e - v/32r}{2v} \leq \hat{w}_e \leq \frac{w_e}{v} \mbox{ for all  $e \in \bar{S}$}
  \end{equation}

  Observe that if there is an element with weight exceeding $\frac{v^*}{16}$, then running single-choice secretary algorithm with probability $\frac{1}{2}$ guarantees that we obtain a competitive ratio of at least $\frac{1}{2} \cdot \frac{1}{e} \cdot \frac{1}{16} = \frac{1}{32 e} > \frac{1}{256}$. Therefore, we henceforth assume that  $w_e \leq \frac{v^*}{16}$ for all $e \in \E$ and analyze the above reduction.

We first show that $v$ is within a constant of $v^*$ with constant probability. It is immediate that $v$ is upper-bounded by $v^*$. The lower-bound follows from a series of elementary calculations, using the fact that each element of $T^*$, with total weight $w(T^*)=v^*$, is in  $S$ independently with probability~$\frac{1}{2}$.
  \begin{align*}
    \Pr[v< \frac{v^*}{4}]       &\leq \Pr\left[ w(S \intersect T^*) < \frac{v^*}{4}\right] &\mbox{(Since $v>w(S \intersect T^*)$ )} \\
                                &\leq \exp\left(-\frac{2 (v^*/4)^2}{\sum_{e \in T^*} w_e^2}\right) &\mbox{(Hoeffing's Inequality)} \\
                                &= \exp\left(-\frac{(v^*)^2}{8\sum_{e \in T^*} w_e^2}\right) \\
                                &\leq \exp\left(-\frac{(v^*)^2}{8(\max_{e \in T^*} w_e) (\sum_{e \in T^*} w_e)}\right) &\mbox{(Holder's inequality)} \\
                                &= \exp\left(-\frac{(v^*)^2}{8(\max_{e \in T^*} w_e) \cdot v^*}\right)  \\
                                &\leq \exp(-2) &\mbox{(Since $\max_e w_e \leq v^*/16$)}  \\
  \end{align*}
  It follows that $v$ is in $[\frac{v^*}{4},v^*]$ with probability at least $1-\frac{1}{e^2}$.

  An even simpler argument shows that $r$ is within a constant of $r^*$. By our assumption that $w_e \leq v^*/16$, it follows that $r^* \geq 16$. A simple application of the Hoeffding bound, akin to that above,  implies that $r$ is in $[\frac{r^*}{4},r^*]$ with probability at least $1- \exp\left(-\frac{2 (r^*/4)^2}{r^*}\right) = 1-\frac{1}{e^2}$.

Now denote $\bar{v}=\rank_w^\M(\bar{S})$ and $\bar{r}=\rank^\M(\bar{S})$, and observe that $v$ and $\bar{v}$ are identically distributed, and the same is true for $r$ and $\bar{r}$. It follows from the union bound that  $v$ and $\bar{v}$ are in $[\frac{v^*}{4},v^*]$, and moreover $r$ and $\bar{r}$ are in $[\frac{r^*}{4},r^*]$,  with probability at least $1-\frac{4}{e^2} > \frac{1}{4}$. In this event, symmetry implies that $v \geq \bar{v}$ with probability at least $\frac{1}{2}$. Therefore, the following hold with probability at least $\frac{1}{8}$:
\begin{equation}
  \label{eq:vv}
  \frac{v^*}{4} \leq \bar{v} \leq v \leq v^*
\end{equation}
and
\begin{equation}
  \label{eq:rr}
r,\bar{r} \in \left[\frac{r^*}{4}, r^* \right].
\end{equation}

We now condition on \eqref{eq:vv} and \eqref{eq:rr}, which hold with probability at least  $\frac{1}{8}$, and show that the matroid secretary instance $(\M | \bar{S}, \hat{w})$ is normalized and discretized, and moreover that our reduction to this instance is approximation preserving up to a constant.

Normalization follows easily from \eqref{eq:ww}, \eqref{eq:vv} and \eqref{eq:rr}:
\[  \rank_{\hat{w}}(\M| \bar{S}) \leq \frac{1}{v} \rank_{w}(\M| \bar{S}) 
                              = \frac{\bar{v}}{v}
                              \leq 1. \]
                            and
                           \begin{align*}
                             \rank_{\hat{w}}(\M| \bar{S})   &\geq \frac{1}{2v} \rank_w(\M | \bar{S}) - \frac{1}{64r} \rank(\M | \bar{S}) \\
                                                            &= \frac{\bar{v}}{2v}  - \frac{\bar{r}}{64r}  \\
                             &\geq \frac{v^*/4}{2v^*} - \frac{r^*}{64r^*/4} = \frac{1}{16}                             
                             %& \geq \frac{1}{2v} \rank_w(\M | \bar{S})  - \frac{1}{16} \\
                             %&= \frac{\bar{v}}{2v} - \frac{1}{16} \\
                             %&\geq \frac{v^*/4}{2v^*} - \frac{1}{16} = \frac{1}{16} .                            
\end{align*}

For discretization, recall that by definition each transformed weight $\hat{w}_e$ for $e \in \bar{S}$ is either zero or the result of rounding down $w_e/v$ to a power of $2$, for $v/32r \leq w_e \leq \rank_w(\M | \bar{S}) = \bar{v} \leq v$. It follows that a non-zero $\hat{w}_e$ is a power of $2$ between $1/64r$ and $1$. Since $r \leq r^* \leq 4 \bar{r}$, a non-zero $\hat{w}_e$ is a power of $2$ between  $\frac{1}{256\bar{r}}$ and $1$. 

For the approximation, consider any $c$-competitive solution  $\hat{T}$ for the instance  $(\M | \bar{S}, \hat{w})$, and let $T'$ be an optimal solution for the instance $(\M | \bar{S}, w)$. We can show that $\hat{T}$ is $\frac{c}{16}$-competitive for the original instance $(\M,w)$, using \eqref{eq:ww}, \eqref{eq:vv}, and \eqref{eq:rr}:
\begin{align*}
  w(\hat{T}) &\geq v \cdot \hat{w}(\hat{T}) \\
       &\geq v \cdot c \cdot \hat{w}(T') \\
       &\geq v \cdot c \cdot \left(\frac{w(T')}{2v} - \frac{|T'|}{64r}\right) \\
             &\geq v \cdot c \cdot \left(\frac{w(T')}{2v} - \frac{\bar{r}}{64r }\right) \\
               &\geq v \cdot c \cdot \left(\frac{w(T')}{2v} - \frac{1}{16 }\right) \\
             &=  c \cdot \left(\frac{\bar{v}}{2} - \frac{v}{16}\right) \\
             &\geq  c \cdot \left(\frac{v^*}{8} - \frac{v^*}{16}\right) \\
             &= \frac{c}{16} v^*
\end{align*}
% \begin{align*}
%   w(\hat{T}) &\geq v \cdot \hat{w}(\hat{T}) \\
%        &\geq v \cdot c \cdot \hat{w}(T') \\
%        &\geq v \cdot c \cdot \left(\frac{w(T')}{2v} - \frac{|T'|}{64r}\right) \\
%              &\geq v \cdot c \cdot \left(\frac{w(T')}{2v} - \frac{\bar{r}}{64r }\right) \\
%              &= v \cdot c \cdot \left(\frac{\bar{v}}{2v} - \frac{\bar{r}}{64r }\right) \\
% &\geq v \cdot c \cdot \frac{1}{16} \\ &\geq \frac{c}{64} \cdot v^*
% \end{align*}

Recall that we run the reduction (rather than the single-choice secretary algorithm) with probability $\frac{1}{2}$. Also recall that we conditioned on \eqref{eq:vv} and \eqref{eq:rr}, an event which holds with probability at least $\frac{1}{8}$. Therefore, the loss in the approximation ratio is no worse than $\frac{1}{16}  \times \frac{1}{8} \times \frac{1}{2} = \frac{1}{256}$.
%\end{proof}

%%% Local Variables:
%%% mode: latex
%%% TeX-master: "ucrs2"
%%% End:

\newpage

\section{Only Active Elements Arrive in Uniformly Random Order}
\label{app:onlyactive}

We now consider a semi-random model of online arrivals, where the relative order of active elements is uniformly random, but the order is otherwise arbitrary. We will show that there exists a $\beta$-uncontentious distribution for the 1-uniform matroid, where $\beta$ can be made arbitrarily close to $1$, admitting no constant-balanced CRM in this semi-random arrival model. In fact, we will show this to be true even when active elements arrive first (in uniformly random order), followed by all inactive elements in an arbitrary order.

Let $\eps>0$, and let $n$ and $m$ be integers. We will later choose these parameters to enable our impossibility result.  Let $\E_i$ be a class of $N_i=n^{m-i}$ elements for each $i=0,\ldots m$, and let $\M$ be the $1$-uniform matroid on $\E=\union_{i=0}^m \E_i$.  Denote $\delta= \frac{\eps^2}{n^m}$ and draw the set $R \sse \E$ of active elements as follows:
\begin{itemize}
\item Let $k$ be a draw from the geometric distribution with parameter $1-\delta$, and let $p=p(k) = \frac{\eps}{n^{m-i}}$
\item Let $R$ include each element of $\union_{i\leq k} \E_i$ independently with probability $p$.
\end{itemize}

In the subsequent analysis, for a quantity $x=x(\epsilon)$ we say $x \to y$,  if $\lim_{\eps \to 0} x = y$. We also say a probability $p=p(\eps)$ approaches $q$ if $\lim_{\eps \to 0} p \geq q$. We say an event holds with high probability if its probability approaches $1$.

We argue that $R$ is $\beta$-uncontentious, for $\beta \to 1$, by considering the following offline CRM: For $k$ in the above sampling procedure, accept an arbitrary element in $R \intersect \E_k$, if any.\footnote{Note that, in the offline model, we can assume without loss of generality that the CRM has access to $k$: it can simply sample the distribution $k | R$.} Observe that each element $e\in\E_i$ is accepted by the CRM with probability at least $\Pr[k=i] \cdot \frac{\eps}{N_i} \cdot (1-\eps/N_i)^{N_i-1} \geq \Pr[k=i]\frac{\eps}{e^\eps N_i}$, and is active (i.e., in $R$) with probability at most $\Pr[k=i]\frac{\eps}{N_i} +  \Pr[k>i]$.  Noting that $\Pr[k>i] \leq \delta \Pr[k=i]$ by definition of the geometric distribution, and bounding $N_i \leq n^m$, this yields $\beta = \frac{1}{e^\eps(1+\eps)}$ as needed. 

We also argue that any CRM which is $\alpha$-balanced for $R$ must, in the event that $k=i$ and at least one element of $\E_i$ is active, accept an element of $\E_i$ with conditional probability approaching $\alpha$. First, for an element $e \in E_i$, we show that $k=i$ with high conditional probability given $e$ is active.
\begin{align*}
  \Pr[k=i | e \in R] &= \frac{\Pr[k=i] \Pr [ e \in R | k=i]}{\Pr[e \in R]} \\
               &\geq \frac{\eps \Pr[k=i]/N_i}{\eps \Pr[k=i]/N_i + \Pr[k>i]}\\
               &\geq \frac{\eps \Pr[k=i]/N_i}{\eps\Pr[k=i]/N_i + \delta \Pr[k=i]} \\
               &= \frac{\eps}{\eps+\delta N_i} \\
               &\geq \frac{\eps}{\eps+\delta n^m}\\
               &= 1/(1+\eps) \to 1             
\end{align*}
It follows that $\Pr[e \mbox{ accepted} | e \in R, k=i] \geq \alpha'$ for some $\alpha' \to \alpha$. Note also that, when $e$ is active and $k=i$, there are no other active elements in $\E_i$ with high probability. In other words, the events $f \in R | k=i$ for elements $f \in \E_i$ tend to disjointness as $\eps \to 0$.  The initial claim follows.

Now fix an online CRM $\phi$ with balance ratio $\alpha$ for $R$ in our semi-random arrival model. Suppose that the active elements $R=R(k)$ are presented to $\phi$ in a uniformly random random order $\pi=(e_1,\ldots,e_{|R|})$, followed by all the inactive elements in an arbitrary order. Note that $\phi$ does not know $k$ a-priori, but can only glean information about it from observing $R$. In the case that $k=0$, $R$ is empty with probability $(1-\frac{\eps}{n^m})^{n^m} \approx 1-\eps$, and consists of a single element in $\E_0$ with probability $n^m \frac{\eps}{n^m} (1- \frac{\eps}{n^m})^{n^m - 1} \approx \eps(1-\eps)$. The argument in the previous paragraph implies, therefore, that $\phi$ must accept the first active element (in $\E_0$, if any) with probability approaching $\alpha$ when $k=0$.  Now consider the case of $k=1$: $R$ consists of  $\binom(n^m, \frac{\eps}{n^{m-1}}) \approx \frac{\eps}{n^{m-1}} n^m = \eps n$ elements of $\E_0$, and $\binom(n^{m-1},\frac{\eps}{n^{m-1}}) \approx O(1)$ elements of $\E_1$. More formally, if we choose $n= \omega(\frac{1}{\eps})$, Chernoff bounds imply that that $R$ consists of $\Omega(\epsilon n)$ elements of $\E_0$ and $O(1)$ elements of $\E_1$ with high probability. Therefore, with high probability the first element $e_1$ in the sequence will be in $\E_0$, and by our previous argument for the case of $k=0$ --- since the $\phi$ cannot distinguish between $k=0$ and $k=1$ at the beginning of the sequence --- it must be accepted with probability approaching $\alpha$. Moreover, by our previous paragraph if there is an active element in $\E_1$ then the first such element must be accepted by $\phi$ with probability approaching $\alpha$.

This pattern continues inductively. Consider the case $k=i$, for an arbitrary $i$. Let $e'_j$ be the first active element in $\E_j$ appearing in the online order, if any. With high probability, $e'_j$ exists for all $j<i$, though $e'_{i}$ may not (in the event there are no active elements in $\E_{i}$). Notice that the relative proportion of $\E_{j} \intersect R$ to $\E_{j+1} \intersect R$  is  $\Omega(n)$ with high probability (i.e., with probability approaching 1 as $\epsilon$ approaches 0).  We can therefore choose $m$ as an increasing function of $\frac{1}{\epsilon}$ such that, with high probability, $e'_j$ precedes $e'_{j+1}$ in $\pi$ simultaneously for all $j=1, \ldots, m-1$. Also notice that, for $j,\ell \leq i-1$, the (distribution of) the relative size of $\E_j \intersect R$ to that of $\E_\ell \intersect R$ is the same whether $k=i$ or $k=i - 1$; the principle of deferred decisions then implies that $\phi$ cannot distinguish $k=i$ from $k=i-1$ until it first encounters $e'_i$. It follows that $\phi$ accepts each of $e'_1,\ldots,e'_{i-1}$ with probability approaching $\alpha$ by induction. Moreover, as previously argued it must accept $e'_{i}$, in the event it exists, with probability approaching $\alpha$. Taking $i=m$, it follows that $\alpha = O(1/m)$. Since $m$ grows without bound, this proves that no absolute constant balance ratio is possible.

\end{document}